\let\IEEEproof\proof
\let\IEEEendproof\endproof
\let\proof\@undefined
\let\endproof\@undefined
\newcommand*{\etc}{
    \@ifnextchar{.}
        {etc}
        {etc.\@\xspace}
}
\let\proof\IEEEproof
\let\endproof\IEEEendproof
\newcommand{\atilde}[1]{\tilde{#1}_{\alpha}\sbs}
\newcommand{\atildes}[2]{\tilde{#1}_{#2}\sbs}
\newcommand{\twotilde}[1]{\tilde{#1}_{2}\sbs}
\newcommand{\onetilde}[1]{\tilde{#1}_{1}\sbs}
\def\aFI{$\alpha$-Fisher information}
\def\apower{$\alpha$-power }
\def\aFI{$\alpha$-Fisher information }
\begin{document}

\title{\Large\bfseries Information Measures, Inequalities and
  Performance Bounds for Parameter Estimation in Impulsive Noise
  Environments \thanks{This work was supported by AUB's University
    Research Board, and the Lebanese National Council for Scientific
    Research (CNRS-L). Partial results were presented in part at the
    2016 IEEE International Symposium on Information Theory.}}
 
\author{
  \authorblockN{Jihad Fahs, Ibrahim Abou-Faycal} \\
  \authorblockA{Dept.\ of Elec.\ and Comp.\ Eng.,
    American University of Beirut \\
    Beirut 1107 2020, Lebanon \\
    {\tt \{jjf03, Ibrahim.Abou-Faycal\}@aub.edu.lb}}
}

\maketitle


\begin{abstract}
  Recent studies found that many channels are affected by additive
  noise that is impulsive in nature and is best explained by
  heavy-tailed symmetric alpha-stable distributions. Dealing with
  impulsive noise environments comes with an added complexity with
  respect to the standard Gaussian environment: the alpha-stable
  probability density functions do not possess closed-form expressions
  except in few special cases. Furthermore, they have an infinite
  second moment and the ``nice'' Hilbert space structure of the space
  of random variables having a finite second moment --which represents
  the universe in which the Gaussian theory is applicable, is lost
  along with its tools and methodologies.

  This is indeed the case in estimation theory where classical tools
  to quantify the performance of an estimator are tightly related
  to the assumption of having finite variance variables. In
  alpha-stable environments, expressions such as the mean square error
  and the Cramer-Rao bound are hence problematic.
  
  In this work, we tackle the parameter estimation problem in
  impulsive noise environments and develop novel tools that are
  tailored to the alpha-stable and heavy-tailed noise environments,
  tools that coincide with the standard ones adopted in the Gaussian
  setup; namely a generalized ``power'' measure and a generalized
  Fisher information. We generalize known information inequalities
  commonly used in the Gaussian context: the de Bruijn's identity, the
  data processing inequality, the Fisher information inequality, the
  isoperimetric inequality for entropies and the Cramer-Rao
  bound. Additionally, we derive upper bounds on the differential
  entropy of independent sums having a stable component.
  Intermediately, the new power measure is used to shed some light on
  the additive alpha-stable noise channel capacity in a setup that
  generalizes the linear average power constrained AWGN channel.  Our
  theoretical findings are paralleled with numerical evaluations of
  various quantities and bounds using developed {\em Matlab} packages.

  \begin{flushleft}
    {\bf Keywords: Impulsive noise, alpha-stable, power, estimation,
      Fisher information, Fisher information inequality, Cramer-Rao
      bound, differential entropy of sums, upper bounds, de
        Bruijn's identity, isoperimetric inequality.}
  \end{flushleft}
\end{abstract}


\section{Introduction}
\label{sc:Introduction}

The presence of impulsive-noise such as those with alpha-stable
statistics, is rather frequent in communications theory. Indeed,
interference has been often found to be of impulsive nature and is
best explained by alpha-stable distributions. This is the case
for telephone noise \cite{stuck} and audio noise signals
\cite{geo}. Furthermore, in many works that treated the multiuser
interference in radio communication networks, a theoretical
derivation, based on the assumption that the interferers are
distributed over the entire plane and behave statistically as a Point
Poisson Process (PPP), yielded an interference with alpha-stable
statistics, starting with Sousa \cite{sousa92} who computed the self
interference, considering only the pathloss effect for three spread
spectrum schemes, direct sequence with binary phase shift keying
(DS/BPSK), frequency hopping with M-ary frequency shift keying
(FH/MFSK), and frequency hopping with on-off keying (FH/OOK), where a
sinusoidal tone is transmitted as the ``on" symbol. In \cite{jacek},
the authors introduced a novel approach to stable noise modeling based
on the LePage series representation which permits the extension of the
results on multiple access communications to environments with
lognormal shadowing and Rayleigh fading. Continuous time multiuser
interference was also found~\cite{hughes} to be well represented as an
impulsive alpha-stable random process. Recently in~\cite{Win},
alpha-stable distributions were found to model well the aggregate
interference in wireless networks: the authors treated the problem in
a general framework that accounts for all the essential physical
parameters that affect network interference with applications in
cognitive radio, wireless packets, covert military schemes and
networks where narrowband and ultra-wide band systems
coexist. In~\cite{evans}, Gulati et al. showed that the
statistical-physical modeling of co-channel interference in a field of
Poisson and Poisson-Poisson clustered interferers obeys an
alpha-stable or Middleton class A statistics depending whether the
interferers are spread in the entire plane, in a finite area or in a
finite area with a guard zone with the alpha-stable being suitable for
wireless sensor, ad-hoc and femtocells networks when both in-cell and
out-of-cell interference are included. A generalization of the
previous results for radio frequency interference in multiple antennas
is found in \cite{chopra} where joint statistical-physical
interference from uncoordinated interfering sources is derived without
any assumption on spatial independence or spatial isotropic
interference. Lastly, the alpha-stable model arises as a suitable
noise model in molecular communications~\cite{Farsa15}.

An important problem in the theory of non-random parameter estimation
is to find ``good'' estimators of some quantity of interest based on a
given observation. Generally, this is done by using a quality measure
of the estimator's (average) performance: the Mean Square Error
(MSE). The use of the MSE is tightly related to the assumption of
finite variance noise and one can even argue that it is related to a
``potential Gaussian'' setup. Naturally, under this finite-variance
assumption, one can restrict the quest of finding ``good'' estimators
to the Hilbert space of finite second moment Random Variables (RV)s
which leads to the well-established ``Gaussian'' or ``linear''
estimation theory. When the observation is contaminated with an
impulsive noise perturbation --having an infinite variance,
restricting the look-up universe for good estimators to that of finite
variance RVs is no longer optimal neither necessarily
sensible. Additionally, tools such as the MSE will turn out to be
problematic.
  
In this work we consider the non-random parameter estimation problem
whereby we want to estimate a non-random parameter(s) $\theta \in
\Reals \,(\Reals^{d})$ based on a noisy observation $X = \theta + N$
and where the additive noise $N$ is of impulsive nature. In the case
where the noise $N$ has a finite variance, the problem is
well-understood: let $\hat{\theta}(X)$ be an estimator of
$\theta$ based on observing $X$, then
\begin{itemize}
\item The quality of the estimator is measured via the MSE: ``$\,\E{
    \left| \hat{\theta}(X) - \theta \right|^2 }$''. Hence, Minimum
  Mean Square Error (MMSE) estimators are optimal.
\item A lower-bound on the MSE of the estimator is given by the
  Cramer-Rao (CR) bound:
  \begin{equation}
    \label{eq:CRS}
    \E{ \left| \hat{\theta}(X) - \theta \right|^2 } \geq \frac{1}{J(N)},
  \end{equation}
  where $J(N)$ is the Fisher information\footnote{The Fisher
    information $J(Y)$ of a RV $Y$ having a Probability Density
    Function (PDF) $p(y)$ is defined as:
    \begin{equation*}
      J(Y) = \int_{-\infty}^{+\infty}\frac{1}{p(y)}\,p'^{2}(y) \,dy,
    \end{equation*}
    whenever the derivative and the integral exit.} of the RV $N$.
\item Equality holds in equation~(\ref{eq:CRS}) whenever $N$ is
  Gaussian distributed and $\hat{\theta}(X) = X$ is the Maximum
  Likelihood (ML) estimator.
\end{itemize}
  
In order to understand the parameter estimation problem in the
impulsive noise scenario, one must answer the following:
\begin{itemize}
\item[1-] Under the impulsive noise assumption, the MMSE estimator is
  not necessarily optimal and the linear MSE estimator is not
  sensible. ``Good'' estimators candidates might possibly have an
  infinite second moment which implies that a new quality measure has
  to be defined. This quality measure is to be interpreted as the
  average ``strength'' or power of the estimation error.
\item[2-] Since the Fisher information $J(\cdot)$ is tightly related
  to Gaussian variables through de Bruijn's identity\footnote{The de
    Bruijn's identity is defined as: For any $\epsilon \geq 0$,
    \begin{equation}
      \label{debruijn}
      \frac{d}{d \epsilon}\,h(X + \sqrt{\epsilon} Z) = \frac{\sigma^2}{2} J(X + \sqrt{\epsilon} Z),
    \end{equation}
    where $Z$ is independent of $X$, Gaussian distributed with mean
    $0$ and variance $\sigma^2$}, a new information measure has to be
  defined-- one that is adapted to impulsive noise variables.
  Similarly to $J(\cdot)$, the new information measure is to be
  related to the alpha-stable distribution through a de Bruijn's type
  of equality.
\item[3-] Establishing a new CR bound: the new quality measure of an
  estimator is to be lower bounded, function of the inverse of the new
  information measure.
\end{itemize}
When it comes to objective 1, a survey of the literature shows that
few alternative measures of power were proposed:

\begin{itemize}
\item In~\cite{Shao}, Shao and Nikias proposed the ``dispersion'' of a
  RV as a measure that plays a similar role to the variance. However,
  since no analytical expression is defined for the dispersion except
  for alpha-stable distributions, they propose the usage of the
  Fractional Lower Order Moments (FLOM) $\E {|X|^r}$ ($r<2$) as an
  alternative which yields a non-linear signal processing theory.


\item Based on logarithmic moments of the form $\E{\log|X|} $, an
  alternative notion of power was introduced by Gonzales~\cite{gon}
  for heavy-tailed distributions which he labeled as the Geometric
  Power (GP):
  \begin{equation*}
    \mathcal{S}_0(X) \eqdef e^{\E{\log|X|}}.
  \end{equation*} 
  
  The author considered logarithmic moments as a ``universal
  framework'' for dealing with algebraic tail processes that will
  overcome the shortcomings of the FLOM approach which he summarized
  by the fact that no appropriate value of $r$ is feasible for all
  impulsive processes.
  Also the discontinuity in the FLOM is yet another unpleasant
  feature. In fact, for a given $0<r<2$, two alpha stable
  distributions with characteristic exponents $\alpha_1 = r +
  \epsilon$ and $\alpha_2 = r - \epsilon$ (for some $\epsilon>0$),
  will respectively have a finite and infinite $r$-th absolute moment,
  though one can agree that they would have similar statistical
  behavior. However, all stable distributions have a finite
  logarithmic moment~\cite{gon}.
\end{itemize}

The GP was used in formulating new impulsive signal processing
techniques with the proposition of new types of non-linear filters
referred to as ``myriad filters'', which are basically Maximum
Likelihood (ML) estimators of the location of a Cauchy distribution
with an optimality tune
parameter~\cite{gonza11}.  
However, the GP suffers from a serious drawback since for any variable
$X$ that has a mass point at zero, $\mathcal{S}_0(X)$ will be
necessarily null even if say other non-zero mass points are
existent. This would yield a zero power for a non-zero signal.

In relation to objective 2, generalizing ``Gaussian''
information-theoretic properties and tools to ``stable'' ones is done
in~\cite{john2013} where a new score function is defined in terms of a
scaled conditional expectation and a de Bruijn's identity is found in
terms of the new score function in a relative manner with respect to
that of a stable variable.  Recently in~\cite{toscani15}, Toscani
proposed a fractional score function using fractional derivatives and
defined a fractional Fisher information that evaluates to infinity for
stable variables.  Using it in a relative manner --with respect to
stable variables, the relative fractional Fisher information is found
to satisfy a Fisher information inequality and is used to find the
rate of convergence in relative entropy of scaled Independent and
Identically Distributed (IID) sums to stable variables.

Up to the authors' knowledge, objective 3 has only been addressed
  in~\cite{lutwak2005} where the authors derived a Cramer-Rao type of
  inequality featuring the finite fractional moment of order $r \geq
  1$ of a variable and a generalized Fisher information. The work
  in~\cite{lutwak2005} was in the direction of extending information
  theoretic inequalities to new ones where generalized Gaussians are
  extremal distributions rather than characterizing the quality of
  estimators in impulsive noise environments. We also note that the CR
  result in~\cite{lutwak2005} suffers from the restriction of having
  variables with finite fractional moments of order $r \geq 1$ which
  is not the case in this paper where variables with only finite
  logarithmic moments are considered.

Naturally, this parameter estimation problem is also that of
estimating the location parameter of an alpha-stable
variable. Previous works that treated the estimation of the various
parameters of alpha-stable distribution~\cite{fama1971, Ruth1980,
  mccull1986, ma1995, tsih1996, Ercan, nolan2001} had a primary goal
of finding specific estimators. They are based on heuristics for which
the authors either conducted consistency or asymptotic analysis, or
tested empirical evaluations versus numerical computations and
Monte-Carlo simulations in order to validate and evaluate the proposed
estimators. In this context, in this work we define and find quality
measures and universal bounds that are satisfied by all location
parameter estimators of impulsive distributions.  Our main
contributions are four fold:
\begin{enumerate}
\item \underline{A generalized power notion}: The evaluation of
  performance measures in multiple applications in communications
  theory is generally done function of the channel state quality. A
  key quantity that summarizes the quality of the channel is the
  Signal-to-Noise Ratio (SNR) which is a ratio between the power of a
  signal containing the relevant information to that of the noise
  signal. A standard measure of the signal power is made through the
  evaluation of the second moment. When working in alpha-stable noise
  environments, some information bearing signals will necessarily have
  an infinite second moment which eventually leads to having zero
  SNRs, a fact that masks the possibility to quantify the channel's
  state. We propose in Definition~\ref{powdef} a new ``relative''
  power measure that we call the {\em \apower\sbs}: a strength
  measure that takes into account the type of the disturbing
  noise. This would seem reasonable whenever the goal of the
  communication system is to maintain a Quality of Service (QoS) level
  for some or all of its users which is translated, for example, to a
  threshold rate (output entropy) or an output SNR. In both cases the
  QoS will be dependent on the output signal. Our ``output''-based
  approach is tailored to this type of applications since it focuses
  on the output signal and takes into account the type of the
  encountered noise in the received signal in order to define sensible
  tools to quantify the QoS criteria. As an example, we derive in
  Theorem~\ref{th:capstablechannel} the capacity of an additive stable
  noise channel under a constraint on its output's \apower.

  Another application is the parameter estimation problem where the
  observed output, affected by stable noise is sufficient for the
  characterization of the estimator's performance. The generalized
  power measure is chosen in such a way that when constraining it,
  stable variables will be entropy maximizers, proven in
    Theorem~\ref{th:maxentro}. It is then shown to comply with
  generic properties that are satisfied by the standard deviation and
  is numerically evaluated for different types of probability
  densities.

\item \underline{A generalized information measure}: We consider an
  alternative formulation of the Fisher information that is more
  relevant than $J(X)$ when dealing with RVs corrupted by additive
  noise of infinite second moment; In essence, our starting point is
  one where --in a similar fashion to the Gaussian case-- we enforce a
  generalized de Bruijn identity to hold: motivated by the fact that
  the derivative of the differential entropy with respect to small
  variations in the direction of a Gaussian variable is a scaled
  $J(\cdot)$, we propose in Definition~\ref{def:Jalpha} a new
  notion of Fisher information as a derivative of differential entropy
  in the direction of infinitesimal perturbations along stable
  variables and we label it the ``Fisher information of order
  $\alpha$''  or the {\em \aFI\bs.} Next, we derive in
    Lemma~\ref{lem2} an integral expression for the new quantity that
  is a generalization of the well-known expression of the Fisher
  information.
  We note that the definition of the \aFI in this manuscript is
  an absolute measure and different from the one
  in~\cite{toscani15}. It has different usages and applications and
  was independently developed.

\item \underline{Generalized information-theoretic inequalities}:
  Information inequalities have been investigated since the foundation
  of information theory. It started with Shannon~\cite{Sha48_1} with
  the fact that Gaussian distributions maximize entropy under a second
  moment constraint. Then a lower bound on the entropy of independent
  sums of RVs, commonly known as the Entropy Power Inequality (EPI)
  was proved. The EPI states that given two real independent RVs $X$,
  $Z$ such that $h(X)$, $h(Z)$ and $h(X + Z)$ exist, then~(Corollary
  3, \cite{bob15})
  \begin{equation}
    N(X + Z) \geq N(X) + N(Z),
    \label{EPI}
  \end{equation}
  where $N\left(X\right)$ is the {\em entropy power\/} of $X$ and is
  equal to
  \begin{equation*}
    N\left(X\right) = \frac{1}{2 \pi e} e^{2 h(X)}.
  \end{equation*}
  The EPI was a proposition of Shannon who provided a local proof.
  Later Stam~\cite{sta} followed by Blachman~\cite{bla} presented
  complete proofs.  These proofs of the EPI relied on two information
  identities: the de Bruijn's identity and the Fisher Information
  Inequality (FII). The latter states that given $X$ and $Z$ two
  independent RVs such that the respective {\em Fisher information}
  $J(X)$ and $J(Z)$ exist. Then
  \begin{equation}
    \frac{1}{J(X+Z)} \geq \frac{1}{J(X)} + \frac{1}{J(Z)}.
    \label{FII}
  \end{equation}
  
  The remarkable similarity between~ equations (\ref{EPI})
  and~(\ref{FII}) was pointed out in Stam's paper~\cite{sta} who in
  addition, related the entropy power and the Fisher information by an
  ``uncertainty principle-type'' relation:
  \begin{equation}
    \label{pro}
    N(X)J(X) \geq 1,
  \end{equation}
  which is commonly known as the Isoperimetric Inequality for
  Entropies (IIE)~\cite[Theorem 16]{cov-dem}.  Interestingly, equality
  holds in~equation (\ref{pro}) whenever $X$ is Gaussian distributed
  and in equations~(\ref{EPI})--(\ref{FII}) whenever $X$ and $Z$ are
  independent Gaussians. As it can be noticed, the previously cited
  inequalities revolve around Gaussian variables. When it comes to the
  general stable family, the relative fractional Fisher information
  defined in~\cite{toscani15} is found to satisfy a Fisher information
  inequality and is used to find the rate of convergence in relative
  entropy of scaled IID sums to stable variables.  In this paper, we
  generalize these information theoretic inequalities that are based
  on the Gaussian setting to generic ones in the stable setting which
  coincide with the regular identities in the Gaussian setup. Namely,
  when restricted to the range $1 < \alpha \leq 2$, the \aFI is
    found in Theorem~\ref{GFIItheorem} to satisfy a Generalized
  Fisher Information Inequality (GFII). Then, we use the GFII and
  the generalized de Bruijn (proven in
    Theorem~\ref{thm:debruijn}) to derive in Theorem~\ref{bdsta}
  an upper bound on the differential entropy of the independent sum of
  two RVs where one of them is stable. Finally, in
    Theorem~\ref{thGII} a Generalized Isoperimetric Inequality for
  Entropies (GIIE) is proved to hold.

\item \underline{A Generalized Cramer-Rao bound}: Well-known
  identities such as the Cramer-Rao bound which provides a lower bound
  on the mean square error of estimators in the from of the inverse of
  $J(X)$ are adequate in the finite variance setup. If the observed
  noisy variable has an infinite second moment, the use of the
  Cramer-Rao bound in its classical form is problematic. We derive in
  Theorem~\ref{th:cramer-rao} a generalized Cramer-Rao bound,
  that relates the ``relative'' power of the estimation error to the
  generalized Fisher information $J_{\alpha}(\cdot)$.
\end{enumerate}

The rest of this paper is organized as follows: we propose in
Section~\ref{sec:RPM} the \apower, a generalized power parameter
and we provide some of its properties and applications. We define in
Section~\ref{sc:GFI} the \aFI, we list its properties and we
establish a generalized de Bruijn's identity. In
Section~\ref{sec:GITI}, information inequalities are shown to be
satisfied by the the \aFI with applications in finding upper
bounds on the differential entropy of independent sums when one of the
variables is stable and establishing a generalized IIE. The
generalized CR bound is stated and proved in Section~\ref{sec:CRbd}
and Section~\ref{con:Conclusion} concludes the paper.


\section{The \apower\sbs, a Relative Power Measure}
\label{sec:RPM}

Power measures are important tools that can provide partial yet
fundamental information about a signal. They serve multiple purposes
such as signal strength comparisons or as reference units for the
computation of performance and quality indicators.  We stipulate that
a ``strength'' or power measure $\Pw{\vX}$ of a random vector $\vX$
should satisfy the following:
\begin{itemize}
\item[R1-] $\Pw{\vX} \geq 0$, with equality if and only if ${\bf
    X}=0$.
\item[R2-] $\Pw{a \vX} = |a| \Pw{\vX}$, for all $a \in \Reals$.
\end{itemize}

These restrictions are ``minimal'' and do not contain for example some
of the dispensable properties satisfied by the GP such as the
multiplicativity and the triangular inequality
properties~\cite{gon}. However they are deemed sufficient to define a
strength measure.

As a notion of average power, the second moment is the answer to a
widely known result in communications theory; it is the constraint
under which a Gaussian density function is entropy maximizer. In order
to come up with a notion of average power in the presence of
alpha-stable distributions, one might consider adopting the
measure/constraint under which sub-Gaussian~\footnote[3]{In some
  texts, the term sub-Gaussian refers to distribution functions whose
  tails are faster than those of a Gaussian. In this work, we do not
  use the term sub-Gaussian in this sense.} symmetric
alpha-stable~\footnote[4]{We use the term symmetric alpha-stable
  (\SaS) to refer to the class of non-degenerate symmetric stable
  distributions excluding the Gaussian. Otherwise, only the term
  symmetric stable (SS) will be used.} (\SaS) density functions with
an underlying Gaussian vector having IID zero-mean components (refer
to Definition~\ref{def:subgauss}, Appendix~\ref{asd}) are entropy
maximizers; an approach that we adopt in what follows.


\subsection{A Power Parameter in the Presence of Stable
  Variables}
\label{seclocpow}

In this manuscript we denote by $\atilde{\vZ} \sim
\bm{\mathcal{S}} \left(\alpha, \left( \frac{1}{\alpha}
  \right)^{\frac{1}{\alpha}} \right)$ a reference {\em Symmetric
  Stable (SS) vector\/}, i.e.,
\begin{itemize}
\item[$\bullet$] \underline{whenever $\alpha \neq 2$:} a reference
  sub-Gaussian S$\alpha$S vector with an underlying Gaussian vector
  having IID zero-mean components with variance $\sigma^2 =
  2\left(\frac{1}{\alpha}\right)^{\frac{2}{\alpha}}$.
\item[$\bullet$] \underline{when $\alpha =2$:} a reference
  Gaussian vector of IID components with mean zero and variance 1.
\end{itemize}

\begin{definition}[Power Parameter]
  \label{powdef}
  The ``power of order $\alpha$'' or {\em \apower\/} of
  non-zero random vector $\vX$ is the non-negative scalar
  $\aPw{\vX}$ such that:
  \begin{equation}
    \label{newpow}
    - \E{ \ln p_{{\bf \atilde{Z}}} \left( \frac{{\vX}}{ \aPw{\vX} } \right) } = h(\atilde{\vZ}),
  \end{equation}  
  where $h(\atilde{\vZ})$ is the differential entropy of
  $\atilde{\vZ}\,$. For the deterministic $\vX = {\bf 0}$, we
  define $\aPw{\vX} = 0$.
\end{definition}
     
The existence and uniqueness of the \apower will be addressed
shortly. Intuitively, one may think of $\aPw{\vX}$ as a
``relative power'' with respect to $\atilde{\vZ}$ which is a
reference variable whose \apower is equal to unity.  In the two
special cases where closed-form expression of the PDF is available,
the \apower can be evaluated:
\begin{itemize}
\item When $\alpha = 2$, $\twotilde{\vZ}$ is a zero-mean Gaussian
  vector with identity covariance matrix and
  \begin{equation*}
    \twoPw{\vX}= \sqrt{\frac{ \E{ \| \vX \|^2 } }{d}}.
  \end{equation*}
\item When $\alpha =1$ (see~\cite{sam1994}),
  \begin{equation*}
    p_{{\bf \onetilde{Z}}}(\vx) = \frac{1}{\pi^{\frac{d+1}{2}}}\Gamma\left(\frac{d+1}{2}\right) 
    \, \frac{1}{\left(1+\|\vx\|^2\right)^{\frac{d+1}{2}}},
  \end{equation*}
  and $\onePw{\vX}$ is the solution of 
  \begin{equation*}
    \E { \ln \left( 1+ \left\| \frac{\vX}{\onePw{\vX}} \right\|^2 \right) }
    = \E{ \ln \left( 1+ \left\| \onetilde{\vZ} \right\|^2 \right)}.
  \end{equation*}
\end{itemize}

 
As defined in~(\ref{newpow}), the quantity $\aPw{\vX}$ is endowed
  with some power properties that we list hereafter and prove in
  Appendix~\ref{ap:prop_aPw}.


\begin{property}
  \label{prop:aPw}
  Let $\vX$ and $\vY$ be random vectors such that:
    \begin{equation*} \left\{ \begin{array}{ll}
          \E{\ln\left(1 + \| \vX \|\right)} < \infty, & \text{ when considering cases where } \alpha < 2. \\
          \E{ \| \vX \|^2} < \infty & \text{ when considering cases where } \alpha = 2.
        \end{array} \right.
    \end{equation*} 
    The following properties hold:

  \begin{itemize}
  \item[(i)] The \apower $\aPw{\vX}$ exists, is unique and
      $\aPw{\vX} \geq 0$ with equality if and only if $\vX = {\bf
        0}$.
    
  \item[(ii)] For any $a \in \Reals, \aPw{a \vX} = |a| \, \aPw{\vX}$.
   
  \item[(iii)] If $\vX$ and $\vY$ are independent and $\vY$ has a
      rotationally symmetric PDF that is non-increasing in $\| \vy
      \|$,
      then $\aPw{\vX + \vY} \geq \aPw{\vY}$.

  \item[(iv)] If $\vX$ and $\vY$ are independent and $\vY$ has a
      rotationally symmetric PDF that is non-increasing in $\| \vy
      \|$, then $\aPw{c \vX + \vY}$ is non-decreasing in $|c|$, $c \in
      \Reals$.

  \item[(v)] Whenever $\vX \sim \bm{\mathcal{S}} \left( \alpha,
        \gamma_{\vX} \right)$, $\aPw{\vX} =
      (\alpha)^{\frac{1}{\alpha}} \gamma_{\vX}$.
  \end{itemize}
\end{property}

Though the definition of the \apower as stated in
  equation~(\ref{newpow}) is implicit and dependent on the density
  function of the SS vector ${\bf \atilde{Z}}$ which does not have
  closed form expression except in the special cases of the Cauchy and
  the Gaussian distributions, the computation of the \apower
  $\aPw{\vX}$ of a certain random vector $\vX$ can be done efficiently
  using numerical computations. In fact, the stable densities can be
  computed numerically as inverse Fourier transforms or by using {\em
    Matlab} packages that compute these densities such as the ``{\em
    Stable}'' package provided by Nolan~\cite{nolan-software}. We use
  here the latter and we develop a {\em Matlab} code that computes the
  \apower for a scalar RV according to Definition~\ref{powdef}. We
  plot in Figure~\ref{fig:figurepow}, the \apower of several
  probability laws-- Gaussian, uniform, Laplace, Cauchy and
  alpha-stable
  , with respect to a multitude of symmetric alpha-stable
  distributions with the characteristic exponent $\alpha$ ranging from
  $0.4$ to $1.8$.
\begin{itemize}
\item[-] Consider for example $\atildes{Z}{1.2\,}$. The \apower of a
  Gaussian variable $X \sim \mathcal{N}(0,2)$ is equal to
  $\aPws{X}{1.2} = 0.7869$. Using the scalability property (ii), the
  \apower of a Gaussian variable $X \sim \mathcal{N}(0,\sigma^2)$ is
  equal to $\aPws{X}{1.2} = 0.7869 \frac{\sigma}{\sqrt{2}} = 0.5564 \,
  \sigma$. Note that as already known, the power the Gaussian variable
  $X \sim \mathcal{N}(0,\sigma^2)$ with respect to $\twotilde{Z} \sim
  \mathcal{N}(0,1)$ is equal to $\twoPw{X} = \sigma$. 
\item[-] Another example is when $X \sim \mathcal{U}[-a,+a]$, a
    uniform RV with zero mean and variance equal to $\frac{a^2}{3}$.
    With respect to $\atildes{Z}{0.8}$, it has an \apower of
    $\aPws{X}{\;0.8} = 0.3036 \frac{a}{\sqrt{3}} = 0.1753 \,a$,
    whereas with respect to the Gaussian law the power is equal to the
    standard deviation $\twoPw{X} = \frac{a}{\sqrt{3}} = 0.5774 \,a$.
\end{itemize}

\begin{figure}[htb]
  \begin{center}
    \includegraphics[width=5.5in]{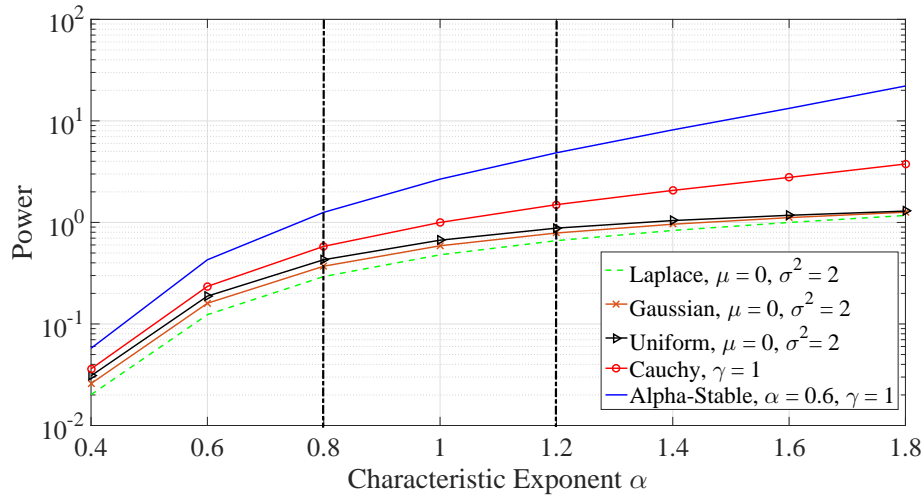}
    \caption{\small Evaluation of the \apower of some probability laws
      with respect to $\atilde{Z} \sim \mathcal{S} \left( \alpha,
        (\alpha)^{-\frac{1}{\alpha}}\right)$ for different values of
      $\alpha$.
      \label{fig:figurepow}}
  \end{center}
\end{figure}

For the value of the exponent $\alpha =2$, the \apower of the Cauchy
and alpha-stable laws evaluate to infinity and they are not shown in
Figure~\ref{fig:figurepow}.

\subsection{Applications}

The new ``power'' measure may be used in a variety of setups. We
  showcase in what follows two scenarios related to two fundamental
information-theoretic problems: entropy maximization and channel
capacity evaluation.

\subsubsection{Stable Maximizing Entropy}
\label{stamax}

Having adopted a generic power definition when considering stable
noise environments, we study the solution of the entropy maximization
problem subject to a constraint on the newly defined power. Namely,
let $\text{P} > 0$ and consider the set of random variables whose
\apower is equal to $\text{P}$:
\begin{equation*}
  \set{P} = \left\{\text{distribution functions} \,\, F \,\, \text{on}\,\, \Reals^d : -\int \ln p_{\bf \atilde{Z}}
    \left(\frac{\bf x}{\text{P}}\right)\,dF(\vx) = h({\bf \atilde{Z}})\right\}.
\end{equation*}
According to~\cite[Section 12.1]{cover}, among all distribution
functions $F \in \set{P}$, the one that maximizes differential entropy
has the following PDF:
\begin{equation*}
  p^*(\vx) = e^{\lambda_0 + \lambda_1 \ln p_{\bf \atilde{Z}}\left(\frac{\bf x}{\text{P}}\right)} = e^{\lambda_0} p_{\bf \atilde{Z}}^{\lambda_1}
  \left(\frac{\bf x}{\text{P}}\right),
\end{equation*}
where $\lambda_0$ and $\lambda_1$ are chosen so that $p^*(\vx) \in
\set{P}$. Since $\frac{1}{\text{P}} \, p_{\bf
  \atilde{Z}}\left(\frac{\bf x}{\text{P}}\right)$ is of the sought
after form,
\begin{equation}
  \label{maxh}
  \arg\max_{F \in \set{P}} \, h(F) = \text{P} {\bf \atilde{Z}} \sim \bm{\mathcal{S}}\left(\alpha,
    \left(\frac{1}{\alpha}\right)^\frac{1}{\alpha} \text{P}\right),
\end{equation}
and the value of the maximum is:
\begin{equation*}
  h(F^*) = h({\bf \atilde{Z}}) + d \ln \text{P}.
\end{equation*}

As a direct generalization, one can write:
\begin{theorem}
  \label{th:maxentro}
  Let
  \begin{equation}
    \label{setmax}
    \set{P}_A = \left\{\text{distribution functions} \,\, F \,\, \text{on}\,\, \Reals^d: 0< \aPw{F} \leq A, \, A>0 \right\}.
  \end{equation}
  Then
  \begin{equation*}
    A {\bf \atilde{Z}} = \arg\max_{F \in \set{P}_A} \, h(F) ,
  \end{equation*}
  and the maximum entropy value is $h({\bf \atilde{Z}}) + d \ln A$ .
\end{theorem}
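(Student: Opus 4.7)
The plan is to reduce the inequality-constrained maximization to the equality-constrained result that was already obtained in the discussion preceding the theorem (equation~(\ref{maxh})). Concretely, I would slice $\set{P}_A$ by the exact value of the \apower, so that
\[
\sup_{F \in \set{P}_A} h(F) \;=\; \sup_{P \in (0,A]} \;\sup_{\{F : \aPw{F} = P\}} h(F),
\]
and then use the fact that the inner supremum equals $h(\atilde{\vZ}) + d \ln P$, achieved by $P \atilde{\vZ}$. Since $d \ln P$ is strictly increasing in $P$, the outer supremum is attained at $P = A$.

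To make the inner step rigorous (rather than relying on the formal Lagrange derivation that led to~(\ref{maxh})), I would invoke the non-negativity of relative entropy. Given any $F$ with density $p$ satisfying $\aPw{F} = P$, define the reference density $q(\vx) = P^{-d} \, p_{\atilde{\vZ}}(\vx / P)$, which is the density of $P \atilde{\vZ}$. Then
\begin{equation*}
0 \;\leq\; D(p \,\|\, q) \;=\; -h(F) \;-\; \int p(\vx) \ln q(\vx)\,d\vx \;=\; -h(F) \;+\; d\ln P \;-\; \E{\ln p_{\atilde{\vZ}}\!\left(\tfrac{\vX}{P}\right)}.
\end{equation*}
Because $\aPw{F} = P$, equation~(\ref{newpow}) gives $-\E{\ln p_{\atilde{\vZ}}(\vX/P)} = h(\atilde{\vZ})$, so the inequality collapses to $h(F) \leq h(\atilde{\vZ}) + d \ln P \leq h(\atilde{\vZ}) + d \ln A$, which is the desired upper bound.

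It remains to check that the bound is attained by $A \atilde{\vZ}$ and that this candidate lies in $\set{P}_A$. By construction $\aPw{\atilde{\vZ}} = 1$, and Property~\ref{prop:aPw}(ii) then yields $\aPw{A \atilde{\vZ}} = A$, so $A \atilde{\vZ} \in \set{P}_A$. The standard scaling identity $h(a \vX) = h(\vX) + d \ln |a|$ (for $a \neq 0$) gives $h(A \atilde{\vZ}) = h(\atilde{\vZ}) + d \ln A$, matching the upper bound.

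There is no real obstacle here beyond bookkeeping: the substantive content was already packed into the entropy-maximization-under-equality argument (equation~(\ref{maxh})), and the only new ingredient is the monotonicity of $P \mapsto d\ln P$ to show that relaxing from the sphere $\{\aPw{F} = A\}$ to the ball $\{\aPw{F} \leq A\}$ does not enlarge the maximum. A minor care point is to explicitly exclude the degenerate $\aPw{F} = 0$ case via the definition of $\set{P}_A$ in~(\ref{setmax}), so that $\ln P$ is well-defined throughout the range of admissible $P$.
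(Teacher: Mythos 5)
Your proof is correct and follows essentially the same route as the paper: the theorem is obtained there as a direct consequence of the equality-constrained maximization in~(\ref{maxh}) combined with the monotonicity of $P \mapsto d\ln P$, which is exactly your slicing-plus-outer-supremum step, with attainment checked via the scaling property. The only cosmetic difference is that you justify the inner equality-constrained bound by Gibbs' inequality $D(p\,\|\,q) \geq 0$ directly, whereas the paper cites the maximum-entropy theorem of~\cite{cover}; since that theorem is itself proved by non-negativity of relative entropy, the two arguments coincide.
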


\subsubsection{Communicating over Stable Channels}

Consider the additive linear channel:
\begin{equation}
  \label{chanstable}
  \vY = \vX + \vN,
\end{equation}
where $\vY$ is the channel output, $\vX$ is the input and $\vN \sim
\bm{\mathcal{S}} \left( \alpha, \gamma_{\bf N} \right)$ is the
additive SS noise vector which is independent of $\vX$. We ask the
following question: what constraint is to be imposed on the input such
that a stable input achieves the capacity of
channel~(\ref{chanstable})?  Under this scenario, and knowing that a
stable input generates a stable output, a sufficient condition is that
the output space induced by the channel is the space where a stable
variable maximizes entropy, specifically a space of a form as
  in~(\ref{setmax}). This leads to the following result:

\begin{theorem}
  \label{th:capstablechannel}
  Let $\vN \sim \bm{\mathcal{S}} \left( \alpha, \gamma_{\bf N}
  \right)$ and $A$ be a non-negative scalar such that $A \geq
  \aPw{\vN} = (\alpha)^{\frac{1}{\alpha}} \gamma_{\bf N}$. Consider
  the space $\mathcal{P}_{\left[\aPw{\vN}, A \right]}$ of probability
  distributions:
  \begin{equation}
    \label{outcond}
    \set{P}_{\left[\aPw{\vN}, A \right]} = \left\{\text{distribution functions} \,\, F \,\, \text{on}\,\, \Reals^d:
      \aPw{\vN} \leq \aPw{F} \leq A \right\}.
  \end{equation}
  Whenever the output $\vY$ of channel~(\ref{chanstable}) is subjected
  to~(\ref{outcond}), the channel capacity evaluates to:
  \begin{equation*}
    C = d \ln \left( \frac{A}{\aPw{\vN}} \right) = d \ln\left(\SNR_{\text{output}}\right),
  \end{equation*}
  and is achieved by $\vX^* \sim \bm{\mathcal{S}} \left( \alpha,
    \left( \frac{1}{\alpha} \right)^{\frac{1}{\alpha}}
    \sqrt[\alpha]{A^{\alpha} - \aPw{\vN}^{\alpha} }
  \right)$. Furthermore, the input cost constraint can be written as:
  \begin{equation}
    \exists \, \text{P} \in \left[ \aPw{\vN}, A \right], \qquad 
    \Ep{\vX}{D\left(p_{\aPw{\vN} {\bf \atilde{Z}}} (\vv-\vX) \big\| p_{\text{P} {\bf \atilde{Z}}}({\bf v})\right)}
    = \ln \frac{\text{P}}{ \aPw{\vN}}.
    \label{coscons}
  \end{equation}
\end{theorem}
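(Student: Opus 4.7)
The plan is to split the theorem into two parts: (i) the capacity expression together with the identification of its achieving input, established by a direct/converse argument leveraging Theorem~\ref{th:maxentro}, and (ii) the equivalent rewriting of the output--power constraint as the averaged KL--divergence cost appearing in~(\ref{coscons}).

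For the capacity expression I would begin with achievability. By stability and independence, $\vY^{*} = \vX^{*} + \vN$ is S$\alpha$S with scale $\gamma_{\vY^{*}}$ satisfying $\gamma_{\vY^{*}}^{\alpha} = (\gamma^{*})^{\alpha} + \gamma_{\vN}^{\alpha}$, where $\gamma^{*} = (1/\alpha)^{1/\alpha}\sqrt[\alpha]{A^{\alpha} - \aPw{\vN}^{\alpha}}$. Invoking Property~\ref{prop:aPw}(v) together with $\aPw{\vN} = (\alpha)^{1/\alpha}\gamma_{\vN}$, a short algebraic check gives $\aPw{\vY^{*}} = A$, so the output is admissible. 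Because both $\vY^{*}$ and $\vN$ belong to the one--parameter family $c\,\atilde{\vZ}$ and $h(c\,\atilde{\vZ}) = h(\atilde{\vZ}) + d\ln|c|$, the mutual information reduces to
\begin{equation*}
I(\vX^{*};\vY^{*}) \;=\; h(\vY^{*}) - h(\vN) \;=\; d\ln A - d\ln \aPw{\vN} \;=\; d\ln\!\bigl(A/\aPw{\vN}\bigr).
\end{equation*}
For the converse, any admissible pair $(\vX,\vY)$ satisfies $\aPw{\vY} \leq A$, so Theorem~\ref{th:maxentro} gives $h(\vY) \leq h(\atilde{\vZ}) + d\ln A$; combining with $h(\vN) = h(\atilde{\vZ}) + d\ln \aPw{\vN}$ yields $I(\vX;\vY) \leq d\ln(A/\aPw{\vN})$, matching the achievable rate.

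For the cost--constraint reformulation I would expand the divergence as
\begin{equation*}
D\!\bigl(p_{\aPw{\vN}\atilde{\vZ}}(\vv-\vX) \,\big\|\, p_{\text{P}\atilde{\vZ}}(\vv)\bigr) \;=\; -h(\vN) \;-\; \int p_{\vN}(\vv-\vX)\ln p_{\text{P}\atilde{\vZ}}(\vv)\,d\vv,
\end{equation*}
take the $\vX$--expectation, and substitute $\vY = \vX + \vN$ to obtain $-h(\vN) - \E{\ln p_{\text{P}\atilde{\vZ}}(\vY)}$. Using the scaling $p_{\text{P}\atilde{\vZ}}(\vv) = \text{P}^{-d}\,p_{\atilde{\vZ}}(\vv/\text{P})$, selecting $\text{P} = \aPw{\vY}$, and applying Definition~\ref{powdef} to $\vY$, the expression collapses to a positive multiple of $\ln(\aPw{\vY}/\aPw{\vN})$. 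Hence~(\ref{coscons}) is precisely the assertion $\aPw{\vY} \in [\aPw{\vN},A]$, recovering the output constraint defining $\set{P}_{[\aPw{\vN},A]}$.

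The main subtlety — and the step I would spend most care on — is the cost--constraint rewriting: one must recognize that the averaged KL divergence is essentially a restatement of the implicit defining equation of $\aPw{\vY}$ in Definition~\ref{powdef}, and keep track of the dimensional factor $d$ consistently between the $\text{P}^{-d}$ density scaling and the linear shift of differential entropy under scalar multiplication, so that the rewritten input--side constraint set coincides with the output--side constraint set $\set{P}_{[\aPw{\vN},A]}$. In contrast, the capacity formula itself follows cleanly once one observes that both $\vN$ and the optimal output $\vY^{*}$ lie in the family $c\,\atilde{\vZ}$ and that, by Theorem~\ref{th:maxentro}, the output entropy is maximized inside $\set{P}_{[\aPw{\vN},A]}$ precisely by the boundary choice $c = A$.
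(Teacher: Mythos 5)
Your proposal is correct and follows essentially the same route as the paper: capacity via Theorem~\ref{th:maxentro} (stable output of \apower $A$ maximizes output entropy, so $C = h(\vY^*)-h(\vN) = d\ln(A/\aPw{\vN})$, with the optimal input read off from the stability of scale parameters and Property~\ref{prop:aPw}(v)), and the cost constraint via the same KL-divergence expansion, which you merely run in the reverse direction (from~(\ref{coscons}) back to the output condition rather than forward). Your remark about the dimensional factor is well taken: carrying the correct density scaling $p_{\text{P}\atilde{\vZ}}(\vv) = \text{P}^{-d}p_{\atilde{\vZ}}(\vv/\text{P})$ through the computation yields $d\ln(\text{P}/\aPw{\vN})$ on the right-hand side of~(\ref{coscons}) for general $d$, whereas the paper's displayed derivation uses the scalar-case factor $\text{P}^{-1}$.
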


\begin{proof}
  By Theorem~\ref{th:maxentro}, under condition~(\ref{outcond}) a
  stable output $\vY^* \sim \bm{\mathcal{S}} \left( \alpha, \left(
      \frac{1}{\alpha} \right)^{\frac{1}{\alpha}} A \right)$ maximizes
  the output entropy and achieves the channel capacity $C$:
  \begin{eqnarray*}
    C = h(\vY^*) - h(\vN) &=&d \ln(A) + h({\bf \atilde{Z}}) - d \ln( \aPw{\vN} ) - h({\bf \atilde{Z}})\\
    &=& d \ln \left( \frac{A}{\aPw{\vN}} \right),
  \end{eqnarray*}
  where we used the fact that $h(\vN) = \ln( \aPw{\vN}) + h({\bf
    \atilde{Z}})$ since $\gamma_{\bf N} = \aPw{\vN} \gamma_{\bf
    \atilde{Z}}$. The optimal input $\vX^*$ which yields $\vY^*$ is
  also distributed according to a stable variable with parameter
  $\gamma_{\vX^*}$:
  \begin{equation*}
    \gamma^{\alpha}_{\vX^*} =  \gamma^{\alpha}_{\vY^*} -  \gamma^{\alpha}_{\vN} =  \frac{1}{\alpha} \left( A^{\alpha} 
      - \aPw{\vN}^{\alpha} \right),
  \end{equation*}
  which by property (v) yields,
  \begin{equation*}
    \aPw{\vX^*}^{\alpha} = \alpha \gamma^{\alpha}_{\vX^*} = A^{\alpha} - \aPw{\vN}^{\alpha}.
  \end{equation*}
  Finally, we determine below the input cost constraint that yields
  the output space $\set{P}_{\left[ \aPw{\vN}, A \right]}$. The output
  condition~(\ref{outcond}) is explicitly stated as the space of all
  random vectors $\vY$ such that there exists a $\text{P} >0$, such
  that $\aPw{\vN} \leq \text{P} \leq A$ and
  \begin{eqnarray}
    -\E{\ln p_{\bf \atilde{Z}} \left( \frac{\vY}{P} \right) } = h({\bf \atilde{Z}}) \quad \iff \quad 
    \Ep{\vX}{ -\Ep{\vN}{ \ln p_{\bf \atilde{Z}} \left( \frac{\vX + \vN}{P}\right) \Bigg| \vX }}
    = h({\bf \atilde{Z}}) \label{costdef0},
  \end{eqnarray}
  where we used the iterated expectations to write the second
  equation. Equation~(\ref{costdef0}) can be interpreted as the input
  cost function $\mathcal{C}(\cdot)$ being
  \begin{equation}
    \label{cosfun}
    \mathcal{C} \left( \vx, \text{P}\right) = -\Ep{\vN}{ \ln p_{\bf \atilde{Z}} \left(\frac{\vx + \vN}{P} \right) },
  \end{equation} 
  and the input cost constraint being:
  \begin{equation*}
    \exists \, \text{P} \in \left[ \aPw{\vN}, A \right], \qquad \Ep{\vX}{ \mathcal{C} \left( \vX, \text{P} \right) } = h({\bf \atilde{Z}}).
  \end{equation*}

  The cost function and the cost constraint can be written in a
  different form:
  \begin{eqnarray}
    \mathcal{C}(\vx,\text{P}) &=& - \int p_{\aPw{\vN} {\bf \atilde{Z}}} (\vn) \, \ln p_{\bf \atilde{Z}}
    \left(\frac{\vx+\vn}{P}\right)\, d \vn \nonumber\\
    &=& - \int p_{\aPw{\vN} {\bf \atilde{Z}}}\left(\vv-\vx\right) \, \ln \left(\frac{1}{\text{P}}p_{\bf \atilde{Z}}
      \left(\frac{\vv}{\text{P}}\right) \right) \, d \vv - \ln \text{P} \nonumber\\
    &=& - \int p_{\aPw{\vN} {\bf \atilde{Z}}} ( \vv - \vx) \, \ln p_{\text{P} {\bf \atilde{Z}}} ( \vv) \, d \vv - \ln \text{P} \nonumber\\
    &=& D \left( p_{\aPw{\vN} {\bf \atilde{Z}}} (\vv - \vx) \big\| p_{\text{P} {\bf \atilde{Z}}} (\vv) \right) 
    + h(\aPw{\vN} {\bf \atilde{Z}}) - \ln \text{P}\nonumber\\
    &=& D \left(p_{\aPw{\vN} {\bf \atilde{Z}}} (\vv - \vx) \big\| p_{\text{P} {\bf \atilde{Z}}} (\vv) \right)  
    + h({\bf \atilde{Z}})+ \ln \frac{\aPw{\vN}}{ \text{P}} \label{divform},
  \end{eqnarray}
  where $D(p \| q)$ is the Kullback-Leibler divergence between two PDFs
  $p$ and $q$. Using equation~(\ref{divform}), the input cost
  constraint can be rewritten as:
  \begin{equation*}
    \Ep{\vX}{ D \left( p_{\aPw{\vN} {\bf \atilde{Z}}} (\vv - \vX) \big\| p_{\text{P} {\bf \atilde{Z}}}({\bf v})
      \right) }  = \ln \frac{ \text{P}}{ \aPw{\vN}}.
  \end{equation*}
\end{proof}

Note that the capacity problem at hand of the stable
channel~(\ref{chanstable}) under the input cost
constraint~(\ref{coscons}) is a generalization to the well known AWGN
channel under the average power constraint~\cite{Sha48_1} and the
additive independent Cauchy channel under a logarithmic
constraint~\cite{Fahs14-1}.

Finally, in the scalar case the generic cost function $\mathcal{C}(x)$
presented in~(\ref{cosfun}) is $\Theta(x^2)$ when $\alpha =2$. For all
other values of $\alpha$, using the same methodology as
in~\cite{Fahsarxiv} one can prove that $\mathcal{C}(x) =
\Theta\left(\ln|x|\right)$ by virtue of the fact that $\ln
p_{\atilde{Z}}(x) = \Theta(\ln |x|)$.  This comes in accordance with
the results presented in~\cite{Fahsarxiv}.

\subsection{Extensions and Insights}

The \apower measure $\aPw{\vX}$ defined in~(\ref{newpow}) is related
to a choice of ${\bf \atilde{Z}}$ --or equivalently a choice of $ 0 <
\alpha \leq 2$, and as previously mentioned $\aPw{\vX}$ can be looked
at as the relative power of $\vX$ with respect to that of ${\bf
  \atilde{Z}}$. Naturally one would ask the following: In a specific
scenario, what value of alpha is more suitable? An answer to this
question is given when considering, for example, an additive noise
channel $\vY = \vX + {\bf N}$. In fact, in most
communications' applications, the quantity of interest for a system
engineer is the received signal or the output $\vY$ as it
generally represents the quantity that will undergo further processing
in order to retrieve the useful information. In addition, the noise
variable ${\bf N}$ imposed by the channel represents another important
variable since relevant quantities and performance measures are
computed function of the relative power between the output signal and
the noise, a quantity that is commonly referred to as the output
SNR. Moreover, the output $\vY$ is shaped by the noise ${\bf N}$,
hence it has ``similar'' characteristics to those of ${\bf N}$ (for
example, a vector ${\bf N}$ having infinite variance components will
always induce a vector $\vY$ having infinite variance
components). This is to say, that in the context of an additive stable
noise channel, it would seem natural to measure the power of the
different signals with respect to a reference stable variable whose
power evaluates to unity. Hence the choice of $\alpha$ and then ${\bf
  \atilde{Z}}$ becomes straightforward depending on the stable noise
characteristic exponent $\alpha$.
  
A natural extension is to generalize the adoption of $\aPw{\vX}$ for a
specific ${\bf \atilde{Z}}$ to cases where the noise is not
necessarily stable but falls instead in the domain of normal
attraction $\mathbb{D}_{\alpha}$~\cite{kolmo,ibrlin} of the stable
variables.
For example, in the scalar case, any noise variable having a finite
second moment belongs to $\mathbb{D}_{2}$ and $\twoPw{X}$ is equal to
the second moment. For noise variables whose tail behavior is $\Theta
\left( \frac{1}{|x|^{1+\alpha}} \right)$, $0<\alpha<2$, the
  \apower $\aPw{X}$ should be used.


\section{\aFI: A Generalized Information Measure}
\label{sc:GFI}

In this section, we introduce a family of new information measures
$\left\{J_{\alpha}(\cdot)\right\}_{0 < \alpha \leq 2}$ and its
properties as a generalization of the standard Fisher
information. This is done through enforcing a family of identities of
the de Bruijn type and finding an analytical expression of
$J_{\alpha}(\cdot)$, $0 < \alpha \leq 2$.

\begin{definition}[\aFI]
  \label{def:Jalpha}
  Let $X$ be a finite differential entropy RV and $\atilde{Z}$ an
  independent reference SS variable $\atilde{Z} \sim
  \mathcal{S}\left(\alpha, \left( \frac{1}{\alpha}
    \right)^{\frac{1}{\alpha}} \right)$, $0 < \alpha \leq 2$. We
  define the ``Fisher information of order $\alpha$'' or the {\em
      \aFI} $J_{\alpha}(X)$ as follows:
  \begin{equation}
    J_{\alpha}(X)  = \alpha \cdot \lim_{\epsilon \rightarrow 0^{+}} \frac{h \left( X + \sqrt[\alpha]{\epsilon} \, \atilde{Z} \right) 
      - h(X)}{\epsilon}, \label{jdef_0}
  \end{equation}
  whenever the limit exists.

  For a $d$-dimensional random vector $\vX = (X_1, \cdots, X_d)$,
  $J_{\alpha}(\vX)$ is defined as in~(\ref{jdef_0}) where
  $\atilde{\vZ} \sim \bm{\mathcal{S}}\left(\alpha, \left(
      \frac{1}{\alpha} \right)^{\frac{1}{\alpha}} \right)$ is the
  $d$-dimensional reference SS vector.
\end{definition}

Alternatively, by the change of variable $t =
\frac{\epsilon}{\alpha}$, if $N$ denotes an independent SS variable $N
\sim \mathcal{S}\left(\alpha,1\right)$, the \aFI is
\begin{equation}
  J_{\alpha}(X)  = \lim_{t \rightarrow 0^{+}} \frac{h \left( X + \sqrt[\alpha]{t} \, N \right) - h(X)}{t}, \label{jdef}
\end{equation}
whenever the limit exists. In the vector case, $J_{\alpha}(\vX)$ is
also as in~(\ref{jdef}) where $\vN \sim \bm{\mathcal{S}}\left(\alpha,
  1\right)$.

Before proceeding to discuss the properties of the newly defined
quantity we point out that the existence of the limit is guaranteed in
a wide range of scenarios:
\begin{theorem}
  \label{exist-gfisher}
  For all random vectors $\vX$ such that $\E{\ln \left(1 + \|\vX\|
      \right)}$ and $h(\vX)$ are finite, $J_{\alpha}(\vX)$ exists for
    all $0 < \alpha \leq 2$.
\end{theorem}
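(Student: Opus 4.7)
The plan is to show that the function $\phi(\epsilon) \triangleq h(\vY_\epsilon)$, with $\vY_\epsilon \triangleq \vX + \sqrt[\alpha]{\epsilon}\,\atilde{\vZ}$, is right-differentiable at $\epsilon=0$ under the stated hypotheses. For $\alpha=2$ this reduces to the classical de Bruijn identity for a Gaussian perturbation, whose existence is standard, so I concentrate on $0<\alpha<2$.

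First I would work in the Fourier domain. Since $\atilde{\vZ}$ has characteristic function $e^{-\|\vomega\|^{\alpha}/\alpha}$, the density $p_\epsilon$ of $\vY_\epsilon$ satisfies $\Phi_\epsilon(\vomega) = \Phi_{\vX}(\vomega)\,e^{-\epsilon\|\vomega\|^{\alpha}/\alpha}$. Thus, for every $\epsilon>0$, $p_\epsilon$ is $C^\infty$, strictly positive, and has polynomial tails dominated by those of the stable kernel (since the stable density is bounded and upper-bounds of convolutions with heavy tails are controlled). These regularity facts will be used to give meaning to all the integrals that follow and to justify differentiation under the integral sign for $\epsilon>0$.

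Next I would decompose the entropy increment as
\begin{equation*}
h(\vY_\epsilon)-h(\vX) \;=\; D(p_{\vX}\|p_\epsilon) \;+\; \int \bigl(p_{\vX}(\vx)-p_\epsilon(\vx)\bigr)\ln p_\epsilon(\vx)\,d\vx,
\end{equation*}
so each piece can be controlled separately. For the second piece, Fourier-analytic manipulation together with the identity $\partial_\epsilon p_\epsilon = -\tfrac{1}{\alpha}(-\Delta)^{\alpha/2}p_\epsilon$ inherited from the stable semigroup lets me express
\begin{equation*}
\frac{d}{d\epsilon}h(\vY_\epsilon) \;=\; -\int \partial_\epsilon p_\epsilon(\vx)\,\bigl(1+\ln p_\epsilon(\vx)\bigr)\,d\vx
\end{equation*}
in a form that will match the integral representation for $J_\alpha$ that is to appear in Lemma~\ref{lem2}. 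Integrating this identity from $0$ to $\epsilon$ and showing the integrand extends continuously to $\epsilon=0^+$ will yield the limit (\ref{jdef}).

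The main obstacle is justifying the passage to the limit $\epsilon\to 0^+$ under the logarithmic-moment assumption alone. The key observation is that the stable reference has $\ln p_{\atilde{\vZ}}(\vx) = \Theta(\ln\|\vx\|)$ at infinity; consequently, to bound $\int p_\epsilon \ln p_\epsilon$ and the coupling term involving $\ln p_\epsilon$ uniformly in small $\epsilon$, one needs precisely an $\E{\ln(1+\|\vY_\epsilon\|)}<\infty$ control, which follows from $\E{\ln(1+\|\vX\|)}<\infty$ plus the finiteness of $\E{\ln(1+\|\atilde{\vZ}\|)}$. Combined with finiteness of $h(\vX)$ to tame the entropy near the origin and dominated convergence at infinity via the heavy-tail envelope of the stable density, this supplies a uniform dominating function; the exchange of limit and integral then produces a finite limit and establishes the existence of $J_\alpha(\vX)$. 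The case $\alpha=2$ follows along the same lines with the Gaussian kernel and the second-moment-style control replacing the logarithmic one, but here the stronger hypothesis $\E{\ln(1+\|\vX\|)}<\infty$ (relaxed to $\E{\|\vX\|^2}<\infty$ if preferred) still suffices.
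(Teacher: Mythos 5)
There is a genuine gap. Your strategy --- write $\frac{d}{d\epsilon}h(\vY_\epsilon)$ for $\epsilon>0$ via the stable semigroup identity $\partial_\epsilon p_\epsilon=-\frac{1}{\alpha}(-\Delta)^{\alpha/2}p_\epsilon$ and then argue that this derivative ``extends continuously to $\epsilon=0^+$'' by dominated convergence --- requires a uniform-in-$\epsilon$ integrable majorant for quantities of the form $\partial_\epsilon p_\epsilon\cdot\ln p_\epsilon$ all the way down to $\epsilon=0$. That is exactly the extra regularity hypothesis~(\ref{condthrough}) that the paper must \emph{assume separately} in Lemma~\ref{lem2} in order to get the integral expression for $J_\alpha$ at $t=0$, and which Appendix~\ref{appsuff} verifies only for the pre-smoothed vectors $\vX+\sqrt[\alpha]{\eta}\vN$ with $\eta>0$; it does not follow from $\E{\ln(1+\|\vX\|)}<\infty$ and $h(\vX)$ finite, which are the only hypotheses of the theorem. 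Your proposed domination (uniform control of $\E{\ln(1+\|\vY_\epsilon\|)}$ and of the entropies $h(\vY_\epsilon)$) bounds the entropies themselves, but says nothing about the existence of the limit of the difference quotient $\bigl(h(\vY_\epsilon)-h(\vX)\bigr)/\epsilon$, which is the object at stake. A secondary inaccuracy: under a logarithmic moment assumption alone, $\vX$ may have tails strictly heavier than any stable law, so the claim that $p_\epsilon$ has ``polynomial tails dominated by those of the stable kernel'' is false in general --- the convolution inherits the heavier of the two tails.

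The paper's proof is entirely different and much shorter. It first notes that $h(\vX+\sqrt[\alpha]{t}\vN)$ is finite for $t>0$ because $\sqrt[\alpha]{t}\,\vN$ has a bounded density and $\E{\ln(1+\|\vX\|)}<\infty$ (citing~\cite{rioul2011}), and then invokes the result of~\cite{FAF15-1} that $t\mapsto h(X+N_t)$ is \emph{concave} for any infinitely divisible perturbation $N_t$; since for stable $N$ one has $N_t\sim\sqrt[\alpha]{t}\,N$, concavity immediately yields existence of one-sided derivatives at every $t$, in particular the right derivative at $t=0^+$, with no dominated-convergence argument and no Fourier regularity needed. To salvage your route you would either have to import that concavity result (at which point the rest of your machinery is unnecessary) or add condition~(\ref{condthrough}) to the hypotheses --- but then you would be proving a strictly weaker statement than the theorem claims.
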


\begin{proof}
  We first note that $h(\vX + \sqrt[\alpha]{t} \vN)$ exists and is
  finite since $\sqrt[\alpha]{t} \, \vN \sim \bm{\mathcal{S}} \left(
    \alpha, \sqrt[\alpha]{t} \right)$ has a bounded PDF and $\E{ \ln
    \left(1 + \|\vX\|\right) }$ is finite~\cite[Proposition
  1]{rioul2011}. Also, in the scalar case it has been proven
  in~\cite{FAF15-1} that the differential entropy $h(X+N_{t})$ is
  concave in $t \geq 0$ whenever $N$ is an infinitely divisible RV
  where $N_{t}$ is related to $N$ through their characteristic
  functions as follows~\cite[Theorem 2.3.9 p.65]{Hey04}:
  \begin{equation*}
    \phi_{N_{t}} (\omega)= e^{t \ln \phi_{N}(\omega)}.
  \end{equation*}
  Since in our case the infinitely divisible RV is stable with
  characteristic exponent $\alpha$, then $N_{t} \sim \sqrt[\alpha]{t}N
  $ which implies that $h(X + N_{t})$ is concave in $t$ and therefore
  it is everywhere left and right differentiable and a.e
  differentiable. These properties generalize in a straightforward
  manner to the vector case, and hence $\frac{d}{dt}h(\vX
  +\sqrt[\alpha]{t}\vN)$ exists a.e. in $t$ and
  $\frac{d}{dt}h(\vX+\sqrt[\alpha]{t}\vN)\Big|_{t=0^+}$ exists.
\end{proof}


\subsection{Properties of the \aFI}

Few properties of $J_{\alpha}(\vX)$ may be readily identified.
\begin{itemize}
\item[(1)] {\bf \em It is non-negative:\/} By definition,
  $J_{\alpha}(\vX)$ represents the rate of variation of $h(\vX)$ under
  a small disturbance in the direction of a standard SS vector. It
  represents the limit of positive quantities and therefore, {\bf
    $J_{\alpha}(\vX) \geq 0$}.

\item[(2)] {\em $J_{2}(\vX)$ coincides with the usual notion of Fisher
    information:\/} When the stable noise $\vN$ is Gaussian,
  i.e. $\alpha =2$, {\bf $J_{2}(\vX)$ coincides is the trace of the
    Fisher information matrix\/}.

\item[(3)] {\em It's translation invariant:\/} Let $\vc \in
  \Reals^{d}$, then {\bf $J_{\alpha}(\vX+\vc) =
    J_{\alpha}(\vX)$}. This follows directly from the definition and
  from the translation invariant property of the differential entropy.

\item[(4)] {\em It has a closed-form expression for symmetric stable
    vectors:\/} {\bf If $\vX \sim \bm{\mathcal{S}} \left( \alpha,
      \gamma \right)$ then $J_{\alpha}(\vX) = \frac{d}{\alpha}
    \frac{1}{\gamma^\alpha}$ nats}. Indeed, if $\vX \sim
  \bm{\mathcal{S}} \left( \alpha, \gamma \right)$ then $\vX +
  \sqrt[\alpha]{\epsilon} \vN \sim \bm{\mathcal{S}} \left( \alpha,
    \sqrt[\alpha]{\gamma^{\alpha} + \epsilon} \right)$ and
  \begin{eqnarray*}
    J_{\alpha}(\vX) &=& \lim_{\epsilon \rightarrow 0} \frac{h\left(\vX + \sqrt[\alpha]{\epsilon}\vN\right) - h(\vX)}{\epsilon}\\
    &=& \lim_{\epsilon \rightarrow 0} \frac{h\left(\sqrt[\alpha]{\gamma^\alpha+\epsilon}\vN\right) - h(\gamma \vN)}{\epsilon}\\
    &=& \lim_{\epsilon \rightarrow 0} \frac{h\left(\vN\right) + d \ln \left(\sqrt[\alpha]{1 + \frac{\epsilon}{\gamma^\alpha}}\right)
      - h(\vN)}{\epsilon}\\
    &=& \frac{d}{\alpha}\frac{1}{\gamma^\alpha} \,\,\text{nats}.
  \end{eqnarray*}
  
  This result comes in accordance with the fact that $J_2(\vX) =
  J(\vX) = \frac{d}{\sigma^2}$ whenever $\vX \sim
  \bm{\mathcal{N}}(0;\sigma^2 \mat{I})$ is Gaussian with covariance
  matrix $\sigma^2 \mat{I}$. This is true since in this case $\alpha =
  2$ and for a Gaussian variable $\gamma^2 = \frac{\sigma^2}{2}$.

\item[(5)] {\em Scales:\/}  $J_{\alpha}(a\vX) =
  \frac{1}{|a|^{\alpha}}J_{\alpha}(\vX)$ {\bf for} $a \neq 0$. Indeed,
  \begin{eqnarray*}
    J_{\alpha}(a\vX) &=& \lim_{\epsilon \rightarrow 0} \frac{h\left(a\vX + \sqrt[\alpha]{\epsilon}\vN\right) - h(a\vX)}{\epsilon}\\
    &=& \lim_{\epsilon \rightarrow 0} \frac{h\left(\vX + \sqrt[\alpha]{\frac{\epsilon}{|a|^{\alpha}}}\vN\right) + d \ln |a| - h(\vX) 
      - d \ln |a|}{\epsilon}\\
    &=& \frac{1}{|a|^{\alpha}} \lim_{\epsilon \rightarrow 0} \frac{h\left(\vX + \sqrt[\alpha]{\frac{\epsilon}{|a|^{\alpha}}}\vN\right) 
      - h(\vX)}{\frac{\epsilon}{|a|^{\alpha}}}\\
    &=& \frac{1}{|a|^{\alpha}} J_{\alpha}(\vX),
  \end{eqnarray*}
  where we used the fact that $(-\vN)$ is identically distributed as
  $\vN$.
      
\item[(6)] {\em Independent sums:\/} {\bf $J_{\alpha}(\vX + \vZ) \leq
    J_{\alpha}(\vX)$ when $\vZ$ is independent of $\vX$}. Indeed
  \begin{align*}
    J_{\alpha}(\vX + \vZ) &= \lim_{t \rightarrow 0} \frac{h\left(\vX + \vZ + \sqrt[\alpha]{t}\vN\right) - h(\vX + \vZ)}{t} \\
    &= \lim_{t \rightarrow 0} \frac{I(\vX+\vZ +  \sqrt[\alpha]{t}\vN;\vN)}{t} \\
    &\leq \lim_{t \rightarrow 0} \frac{I(\vX+ \sqrt[\alpha]{t}\vN;\vN)}{t} = J_{\alpha}(\vX),
  \end{align*}
  where the inequality is due to the fact that $\vN \mbox{ --- } \vX +
  \sqrt[\alpha]{t} \vN \mbox{ --- } \vX + \vZ + \sqrt[\alpha]{t}\vN$ is
  a Markov chain.

\item[(7)] {\em Sub-additivity:\/} {\bf $J_{\alpha}(\cdot)$ is
    sub-additive for independent random vectors\/}. Let $\vX =
  \left(X_1,\cdots,X_d\right)$ be a collection of $d$ independent RVs
  having Fisher information $\{J_{\alpha}(X_i)\}_{i=1}^{d}$, then $
  J_{\alpha}(\vX) = J_{\alpha}(X_1, \cdots, X_d) \leq \sum_{i=1}^{d}
  J_{\alpha}(X_i)$, because $h(Z_1, \cdots, Z_d) \leq \sum_{i=1}^{d}
  h(Z_i)$ with equality whenever $\{Z_i\}_{i=1}^d$ are independent.
  It is known that $J_2(\cdot)$ is additive and it will be later shown
  that {\bf $J_{\alpha}(\cdot)$ is in fact additive}.
\end{itemize}
  
Due to the above, one may consider $J_{\alpha}(\vX)$, $0 < \alpha \leq
2$ as a measure of information. A single random vector $\vX$ might
hence have different information measures which represent from an
estimation theory perspective a reasonable fact since the statistics
of the additive noise $\vN$ affect the estimation of $\vX$ based on
the observation of $\vX+\vN$. From this perspective, the original
Fisher information would seem suitable when the adopted noise model is
Gaussian or when we are restricting the RV to have a finite second
moment. 

\subsection{An expression of $J_{\alpha}(\cdot)$}
\label{sc:Main}

We find in what follows an expression of $J_{\alpha}(\vX)$
  whenever the random vector is absolutely continuous with a positive
  PDF. More precisely, let $\vX \in \set {V}$ where,
\begin{equation*}
  \set{V} = \left\{ \text{Absolutely continuous RVs } \vU: p_{\vU}(\vu) > 0, h(\vU) \text{ is finite } \& \int \ln\left(1 
      + \|\vU\|\right)\,p_{\vU}(\vu)\,d\vu \text{ is finite } \right\}.
\end{equation*}

\begin{lemma}[An Expression of the \aFI]
  \label{lem2}
  Let $\vN \sim \bm{\mathcal{S}} \left( \alpha, \gamma \right)$ be a
  SS vector and let $\vX \in \mathcal{V}$ be independent of $\vN$ with
  a characteristic function
  $\phi_{\vX}(\vomega)$ such that $\left[\|\vomega\|^{\alpha}
    \phi_{\vX}(-\vomega)\right] \in \text{L}^{1}(\Reals^{d})$. If
  there exists an $\epsilon > 0$ such that \footnote[5]{${\mathcal
      F}_{\mathcal{I}}(\cdot)$ denotes the inverse distributional
    Fourier transform. The regularity condition imposed
    in~(\ref{condthrough}) is assumed to hold whenever
    $J_{\alpha}(\cdot)$ is being evaluated using
      equation~(\ref{anadef}) throughout the paper.
  }~\begin{equation}  \bigg\{ \Bigl| \ln p_{\vX + \sbs
      \sqrt[\alpha]{t} \vN}(\vx) {\mathcal F}_{\mathcal{I}}\bigl[ \|\vomega\|^{\alpha}
    \phi_{\vX + \sbs \sqrt[\alpha]{t} \vN}(-\vomega)\bigr](\vx)
    \Bigr| \bigg\}_{\hspace{-0.08cm}t\in
      [0,\epsilon)} \label{condthrough}
  \end{equation}
  are uniformly bounded in $t$ by an integrable function of $\vx$,
  then the \aFI of $\vX$ is
  \begin{equation}
    \label{anadef}
    J_{\alpha}(\vX) = \int \ln p_{\vX}(\vx) \, {\mathcal F}_{\mathcal{I}}\bigl[ \|\vomega\|^{\alpha} \phi_{\vX}(-\vomega) \bigr] (\vx)\,d\vx.
  \end{equation}
\end{lemma}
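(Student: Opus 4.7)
The plan is to interpret $J_\alpha(\vX)$ as the right-derivative at $t=0^+$ of the smoothed differential entropy $t \mapsto h(\vY_t)$, where $\vY_t = \vX + \sqrt[\alpha]{t}\vN$ and $\vN \sim \bm{\mathcal{S}}(\alpha,1)$, and to compute that derivative by exploiting the fractional-diffusion identity satisfied by the density of $\vY_t$. Theorem~\ref{exist-gfisher} already guarantees that this right-derivative exists and coincides with $J_\alpha(\vX)$ under the hypotheses on $\vX$, so the remaining work is to carry out the computation and justify the relevant interchanges of limits and integrals.

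The first step is to note that, since $\vN$ has characteristic function $e^{-\|\vomega\|^\alpha}$, independence of $\vX$ and $\vN$ yields $\phi_{\vY_t}(\vomega) = \phi_{\vX}(\vomega)\,e^{-t\|\vomega\|^\alpha}$, so formally
\[
\frac{\partial}{\partial t}\phi_{\vY_t}(\vomega) = -\|\vomega\|^\alpha\,\phi_{\vY_t}(\vomega).
\]
The hypothesis $[\|\vomega\|^{\alpha}\phi_{\vX}(-\vomega)] \in L^{1}(\Reals^{d})$ transfers to $\vY_t$ for every $t\geq 0$ because $|e^{-t\|\vomega\|^\alpha}|\leq 1$, so applying the inverse distributional Fourier transform produces the pointwise identity
\[
\frac{\partial}{\partial t} p_{\vY_t}(\vx) = -{\mathcal F}_{\mathcal{I}}\bigl[\|\vomega\|^{\alpha}\phi_{\vY_t}(-\vomega)\bigr](\vx).
\]

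The second step is to differentiate $h(\vY_t) = -\int p_{\vY_t}(\vx)\ln p_{\vY_t}(\vx)\,d\vx$ under the integral sign. By the product rule and the conservation identity $\int \partial_t p_{\vY_t}\,d\vx = 0$, the surviving term is
\[
\frac{d}{dt} h(\vY_t) = \int \ln p_{\vY_t}(\vx)\,{\mathcal F}_{\mathcal{I}}\bigl[\|\vomega\|^{\alpha}\phi_{\vY_t}(-\vomega)\bigr](\vx)\,d\vx.
\]
Letting $t \to 0^{+}$ and invoking the regularity hypothesis~(\ref{condthrough}), which furnishes a $t$-uniform $L^1$ majorant for the integrand, the dominated convergence theorem pulls the limit inside and delivers~(\ref{anadef}).

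The principal obstacle I foresee lies precisely in that last interchange. The stated hypothesis~(\ref{condthrough}) supplies a dominating function, but to apply dominated convergence one must also check that, at almost every $\vx$, both $p_{\vY_t}(\vx)$ and ${\mathcal F}_{\mathcal{I}}[\|\vomega\|^{\alpha}\phi_{\vY_t}(-\vomega)](\vx)$ depend continuously on $t$ at $t=0^{+}$; I would handle this via a secondary dominated-convergence argument in the Fourier variable with majorant $\|\vomega\|^{\alpha}|\phi_{\vX}(\vomega)|$. A minor related subtlety is justifying differentiation under the integral sign for $h(\vY_t)$ on the open interval $t>0$, but there the smoothing action of the stable kernel renders $p_{\vY_t}$ and its $t$-derivative sufficiently regular to legitimize the manipulation.
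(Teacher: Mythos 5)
Your proposal is correct and follows essentially the same route as the paper: differentiate the characteristic function $\phi_{\vX}(\vomega)\,e^{-t\gamma^{\alpha}\|\vomega\|^{\alpha}}$ in $t$, invert to obtain $\partial_t p_{\vX_t}(\vx) = -\gamma^{\alpha}{\mathcal F}_{\mathcal{I}}\bigl[\|\vomega\|^{\alpha}\phi_{\vX_t}(-\vomega)\bigr](\vx)$, use the conservation identity $\int {\mathcal F}_{\mathcal{I}}\bigl[\|\vomega\|^{\alpha}\phi_{\vX_t}(-\vomega)\bigr](\vx)\,d\vx = 0$ to discard the extra term, and invoke the uniform bound~(\ref{condthrough}) with dominated convergence. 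The only organizational difference is that the paper applies the mean value theorem directly to the difference quotient $\bigl(h(\vX_t)-h(\vX)\bigr)/t$, evaluating the integrand at an intermediate $b(t)\in[0,t]$ --- which is precisely what the uniform-in-$t$ hypothesis dominates --- rather than differentiating $h(\vX_t)$ for $t>0$ and then passing $t\to 0^{+}$, so the continuity-in-$t$ subtlety you flag is sidestepped rather than separately argued.
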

 
\begin{proof}
  Using Theorem~\ref{exist-gfisher} $J_{\alpha}(\vX)$ exists. Now, let
  $t \geq \eta \geq 0$ and denote $ \vX_{t} = \vX + \sqrt[\alpha]{t}
  \vN$
  with characteristic function
  \begin{align*}
    \phi_{\vX_{t}}(\vomega) & = \phi_{\vX}(\vomega) \, e^{-t \gamma^{\alpha} \|\vomega\|^{\alpha}}
    \bs = \phi_{\vX_\eta}(\vomega) \, e^{-(t-\eta) \gamma^{\alpha} \|\vomega\|^{\alpha}} \\
    & = \phi_{\vX_\eta}(\vomega) - (t-\eta) \gamma^{\alpha}  \| \vomega \|^{\alpha} \phi_{\vX_\eta}(\vomega) + o(t-\eta).
  \end{align*}
  
  By the linearity of the inverse Fourier transform,
  \begin{equation}
    p_{\vX_{t}} \sbs (\vx) = p_{\vX_\eta} \sbs (\vx) - (t-\eta) \gamma^{\alpha} {\mathcal F}_{\mathcal{I}}\bigl[\| \vomega \|^{\alpha} 
    \phi_{\vX_\eta} \sbs (-\vomega)\bigr](\vx) + o(t-\eta),
    \label{derivp}
  \end{equation}
  which is valid since the inverse distributional Fourier transform
  ${\mathcal F}_{\mathcal{I}} \bigl[ \|\vomega\|^{m
    \alpha}\phi_{\vX_\eta}(\vomega) \bigr]$ exists for all $m \geq 1$
  because $\|\vomega\|^{m \alpha}\phi_{\vX_\eta}(\vomega)$ is a
  tempered function by virtue of the fact that
  $\phi_{\vX_\eta}(\vomega)$ is an $\text{L}^{1}$-characteristic
  function and hence is in $\set{L}^{\infty}(\Reals^d)$.
  Equation~(\ref{derivp}) implies that
  \begin{equation*}
    \frac{d\,p_{{\vX}_{\tau}}(\vx)}{d\tau} \bigg|_{\tau = \eta} =  - \gamma^{\alpha} {\mathcal F}_{\mathcal{I}}\bigl[ \|\vomega\|^{\alpha}
    \phi_{\vX_{\eta}}(-\vomega)\bigr](\vx),
  \end{equation*}
  and by the Mean Value Theorem, for some $0 \leq b(t) \leq t$,
  \begin{align*}
    & \frac{h({\vX}_{t}) - h({\vX})}{t} = - \int_{\Reals^d} \frac{p_{{\vX}_{t}}(\vx)\,\ln p_{{\vX}_{t}}(\vx) - p_{{\vX}}(\vx)\,
      \ln p_{{\vX}}(\vx)}{t} \,d\vx \\
    & = - \int_{\Reals^d}  \left[ 1+\ln p_{{\vX}_{b(t)}}(\vx) \right] \frac{d\,p_{{\vX}_{\tau}}(\vx)}{d\tau} \bigg|_{\tau = b(t)}\,d\vx \\
    & = \gamma^{\alpha} \int_{\Reals^d}  \left[ 1+\ln p_{{\vX}_{b(t)}}(\vx) \right] {\mathcal F}_{\mathcal{I}}\bigl[ \|\vomega\|^{\alpha}
    \phi_{\vX_{b(t)}}(-\vomega)\bigr](\vx) \,d\vx\\
    & = \gamma^{\alpha} \int_{\Reals^d} \ln p_{{\vX}_{b(t)}}(\vx)  {\mathcal F}_{\mathcal{I}}\bigl[ \|\vomega\|^{\alpha} 
    \phi_{\vX_{b(t)}}(-\vomega)\bigr](\vx) \,d\vx,
  \end{align*}
  which is true since $\left[\|\vomega\|^{\alpha} \phi_{\vX}
    (-\vomega) \right] \in \text{L}^{1} (\Reals^{d})$ and
  \begin{equation*}
    \int {\mathcal F}_{\mathcal{I}}\bigl[ \|\vomega\|^{\alpha} \phi_{\vX_{b(t)}}(-\vomega)\bigr](\vx) \, d\vx
    = \int \delta(\vomega) \, \|\vomega \|^{\alpha}  \phi_{\vX_{b(t)}}(-\vomega) \, d \vomega = 0.
  \end{equation*}

  The imposed conditions insure that Lebesgue's Dominated Convergence
  Theorem (DCT) holds and the limit may be passed inside the integral
  and
  \begin{equation*}
    J_{\alpha}(\vX) = \int_{\Reals^d} \ln p_{{\vX}}(\vx)  {\mathcal F}_{\mathcal{I}}\bigl[ \|\vomega\|^{\alpha} 
    \phi_{\vX}(-\vomega)\bigr](\vx) \,d\vx.
  \end{equation*}
  
\end{proof}

We note that, whenever $\alpha = 2$, equation~(\ref{anadef}) gives the
regular expression of the Fisher information. In fact, in the scalar
case
\begin{equation*}
  J(X) = J_2(X) = \int \ln p_X(x) \, {\mathcal F}_{\mathcal{I}} \bigl[ |\omega|^{2} \phi_X(-\omega) \bigr] (x)\,dx = 
  - \int \ln p_X(x) \, \frac{d^{2}}{dx^{2}}\,p_X(x) \, dx,
\end{equation*}
where the last equality is valid as long as $\ln p_X(x) \,
\frac{d}{dx}\,p_X(x)|^{+\infty}_{-\infty}$ vanishes. In the
$d$-dimensional case, $J_2(\vX)$ is also consistent with the regular
definition of the Fisher information being the trace of the Fisher
information matrix. The sufficient condition listed in the statement
of the lemma, is a technical condition involving ``fractional''
derivatives of the PDF $p_{\vX}(\vx)$. Whenever $\alpha = 2$, this
condition boils down to similar type of conditions imposed by
Kullback~\cite[pages 26-27]{kullback} to prove the well-known result
relating the second derivative of the divergence to the Fisher
information: a result that implies de Bruijn's identity at zero~(see
\cite{rioul2011}).

Let $\vX_{\eta} = \vX + \sqrt[\alpha]{\eta} \vN'$ for some $\eta > 0$
where $\vN' \sim \bm {\mathcal{S}}(\alpha,\gamma)$ independent of
$\vN$. In Appendix~\ref{appsuff} it is shown that the regularity
condition on~(\ref{condthrough}) is satisfied and therefore
\begin{equation*}
  \frac{d}{d t}\,h(\vX_{\eta}+\sqrt[\alpha]{t} \vN) \Big|_{t = 0^+} = \gamma^{\alpha} J_{\alpha}( \vX_{\eta}).
\end{equation*}

Since $\sqrt[\alpha]{\eta} \, \vN' + \sqrt[\alpha]{t} \, \vN$ is
distributed according to $\sqrt[\alpha]{\eta + t} \, \vN$, this
equation is equivalent to a generalized de Bruijn's identity stated in
the following theorem.

\begin{theorem}[Generalized de Bruijn's identity]
  \label{thm:debruijn}
  Let $\vX \in \mathcal{V}$ and define for $\eta > 0$ the random
  vector $\vX_{\eta} = \vX + \sqrt[\alpha]{\eta} \vN$. For any $\eta >
  0$,
  \begin{equation}
    \label{debruijngen1}
    \frac{d}{d \eta}\,h(\vX_{\eta}) = \gamma^{\alpha} J_{\alpha}(\vX_{\eta}),
  \end{equation}
  where $J_{\alpha}(\vX_{\eta})$ is given by equation~(\ref{anadef}).
  Additionally, whenever the regularity condition~(\ref{condthrough})
  is satisfied by $\vX$,
  \begin{equation}
    \label{debruijngen0}
    \frac{d}{d \eta}\,h(\vX_{\eta}) \Big|_{\eta = 0^+} = \gamma^{\alpha} J_{\alpha}(\vX),
  \end{equation}
  where $J_{\alpha}(\vX)$ is given by equation~(\ref{anadef}).
\end{theorem}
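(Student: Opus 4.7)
The plan is to reduce the derivative at arbitrary $\eta > 0$ to a situation already handled by Lemma~\ref{lem2}, but with the smoothed vector $\vX_\eta$ playing the role of $\vX$. The key observation is a semigroup identity for the stable noise: if $\vN'$ is an independent copy of $\vN \sim \bm{\mathcal{S}}(\alpha, \gamma)$, then by stability $\sqrt[\alpha]{\eta}\,\vN + \sqrt[\alpha]{t}\,\vN'$ is distributed as $\sqrt[\alpha]{\eta + t}\,\vN$ for $t > 0$, whence $\vX_{\eta+t}$ has the same law as $\vX_\eta + \sqrt[\alpha]{t}\,\vN'$. Therefore
\begin{equation*}
\frac{h(\vX_{\eta+t}) - h(\vX_\eta)}{t} \;=\; \frac{h\bigl(\vX_\eta + \sqrt[\alpha]{t}\,\vN'\bigr) - h(\vX_\eta)}{t},
\end{equation*}
and upon scaling $\vN' = \gamma\,\vN''$ with $\vN'' \sim \bm{\mathcal{S}}(\alpha, 1)$ and setting $s = t\gamma^\alpha$, the limit $t \to 0^+$ is precisely $\gamma^{\alpha} J_{\alpha}(\vX_\eta)$ provided that Lemma~\ref{lem2} is applicable at $\vX_\eta$.

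To invoke Lemma~\ref{lem2} I would verify its hypotheses at $\vX_\eta$ for every fixed $\eta > 0$. Positivity of $p_{\vX_\eta}$ follows since it is the convolution of $p_{\vX}$ with a strictly positive stable density; finiteness of $h(\vX_\eta)$ and of $\E{\ln(1+\|\vX_\eta\|)}$ is inherited from the corresponding finiteness for $\vX$ and $\vN$; and the integrability $\|\vomega\|^{\alpha}\phi_{\vX_\eta}(-\vomega) \in \text{L}^{1}(\Reals^{d})$ is immediate from the representation
\begin{equation*}
\phi_{\vX_\eta}(\vomega) \;=\; \phi_{\vX}(\vomega)\,e^{-\eta\gamma^{\alpha}\|\vomega\|^{\alpha}},
\end{equation*}
whose exponential damping dominates any polynomial weight. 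What I expect to be the genuinely delicate point, and what the main obstacle will be, is the uniform-in-$t$ domination condition~(\ref{condthrough}) applied to the family $\vX_{\eta+t}$ for small $t$; this is exactly what the authors defer to Appendix~\ref{appsuff}, where the exponential factor in $\phi_{\vX_{\eta+t}}$ is used to bound $\ln p_{\vX_{\eta+t}}$ together with $\mathcal{F}_\mathcal{I}\bigl[\|\vomega\|^{\alpha}\phi_{\vX_{\eta+t}}(-\vomega)\bigr]$ by a single integrable envelope.

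Once the above is in place, the two-sided derivative at $\eta > 0$ would be obtained by rerunning the mean-value argument from the proof of Lemma~\ref{lem2}, but centered at $\eta$ rather than at $0$, admitting increments of either sign with $|t|$ small enough that $\eta + b(t) > 0$. The continuity of the integrand in~(\ref{anadef}) along the smoothed path, combined with the DCT under the same envelope, identifies both one-sided derivatives with $\gamma^{\alpha}J_{\alpha}(\vX_\eta)$; alternatively, since Theorem~\ref{exist-gfisher} already provides concavity of $\eta \mapsto h(\vX_\eta)$, matching one-sided derivatives is enough. Finally, equation~(\ref{debruijngen0}) arises as the boundary case $\eta = 0$: when $\vX$ itself satisfies~(\ref{condthrough}), Lemma~\ref{lem2} applies directly to $\vX$ and identifies the right derivative at $0^+$ with $\gamma^{\alpha} J_{\alpha}(\vX)$, so no further work is required.
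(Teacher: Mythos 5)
Your proposal is correct and follows essentially the same route as the paper: the stability (semigroup) identity $\vX_{\eta+t} \stackrel{d}{=} \vX_\eta + \sqrt[\alpha]{t}\,\vN'$ reduces the claim to Lemma~\ref{lem2} applied to the smoothed vector $\vX_\eta$, with the regularity condition~(\ref{condthrough}) for $\vX_\eta$, $\eta>0$, supplied by Appendix~\ref{appsuff}, and equation~(\ref{debruijngen0}) obtained by applying Lemma~\ref{lem2} directly to $\vX$. Your added care about the two-sided derivative at $\eta>0$ (mean-value argument centered at $\eta$, or concavity from Theorem~\ref{exist-gfisher}) is a point the paper's exposition glosses over but its Appendix~\ref{appsuff} (Lemma~\ref{lemt2}) handles in the same spirit.
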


To compute $J_{\alpha}(\cdot)$, we use the fast Fourier transform
using {\em Matlab} by following a similar methodology as
in~\cite{MItt99}. We plot in Figure~\ref{fig:figureJeval} the
evaluation of $J_{\alpha}(\cdot)$ for a collection of alpha-stable
variables $X \sim \mathcal{S}\left(r,(r)^{-\frac{1}{r}}\right)$
parameterized by the characteristic exponent $r$. It is observed that
as the value of $r$ increases, $J_{\alpha}(X)$ increases. Furthermore
for fixed $r$, $J_{\alpha}(X)$ decreases with $\alpha$.
\begin{figure}[htb]
  \begin{center}
    \includegraphics[width=5.5in]{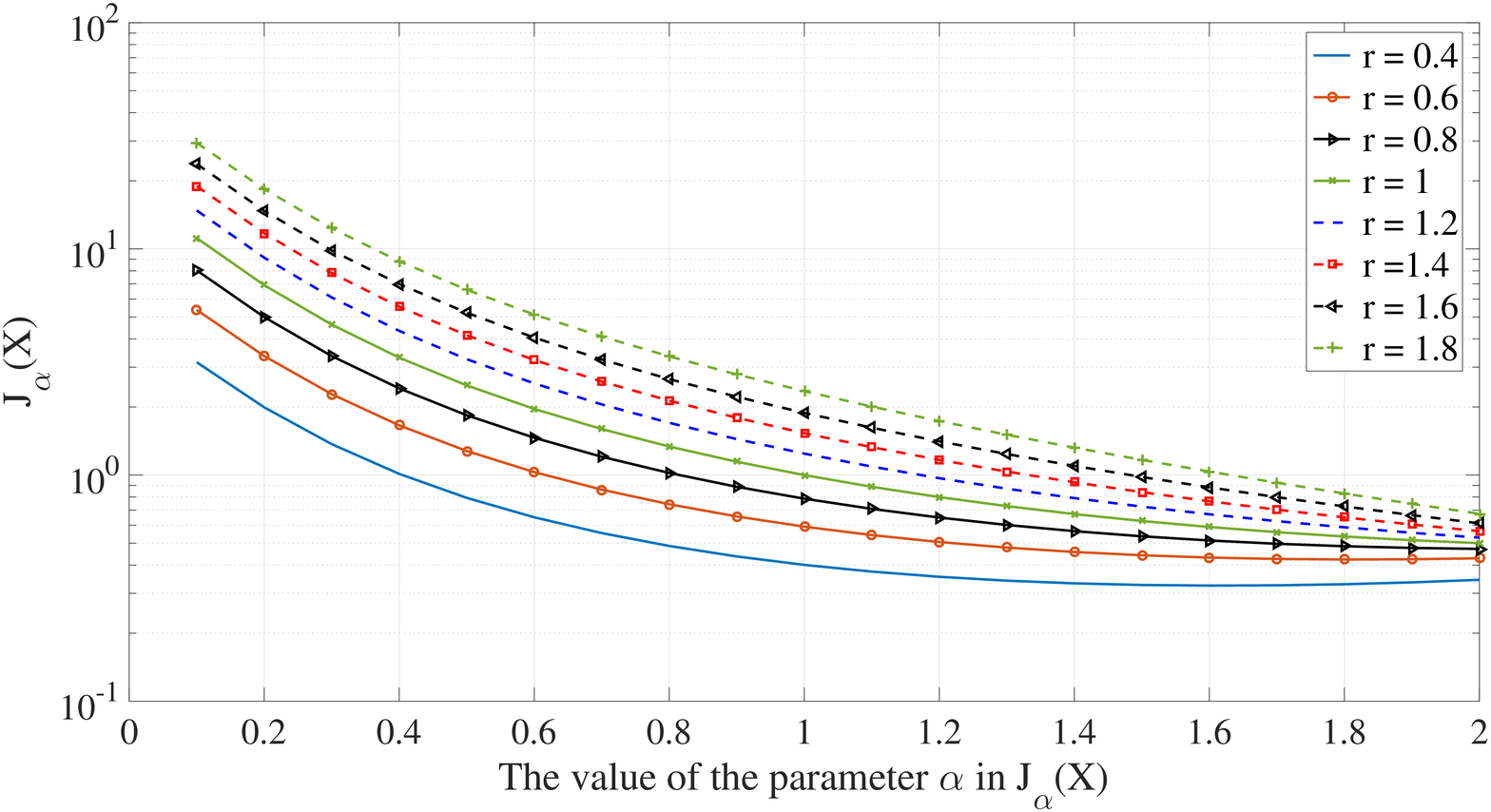}
    \caption{\small Evaluation of $J_{\alpha}(X)$ for $X \sim
      \mathcal{S}\left(r,(r)^{-\frac{1}{r}}\right)$ for different
      values of $\alpha$ and $r$}.
    \label{fig:figureJeval}
  \end{center}
\end{figure}

\section{Generalized Information Theoretic Identities}
\label{sec:GITI}

In addition to their theoretical relevance, information inequalities
have important implications in information theory. For example, by the
means of the FII, one can prove the EPI which is useful for finding
bounds on capacity regions and in proving strong versions of Central
Limit Theorems (CLT)s.  In what follows, we state and prove a list of
information inequalities featuring $J_{\alpha}(\cdot)$. Namely, we
list and prove a generalized FII, an upper bound on the differential
entropy of sums having a stable component and a generalized IIE.

\subsection{A Generalized Fisher Information Inequality}
\label{secgfii}

The Fisher information inequality is an important identity that
relates the Fisher information of the sum of independent RVs to those
of the individual variables. It was first proven by Stam~\cite{sta}
and then by Blachman~\cite{bla}. Both authors deduced the EPI from the
FII via de Bruijn's identity. Stam relied on a data processing
inequality of the Fisher information in the proof of the FII, a
methodology that was later used by Zamir~\cite{zam98} in a more
elaborate fashion. Finally, Rioul~\cite{rioul2011} derived a mutual
information inequality, an identity that implies the EPI and by the
means of de Bruijn's identity implies the FII.

\paragraph*{\underline{Data processing inequality for $J_{\alpha}$, $1
    < \alpha \leq 2$}}
The data processing inequality asserts that gains could not be
achieved when processing information. In terms of mutual information,
if the RVs $X \mbox{--} Y \mbox{--} Z$ form a Markov chain~\cite[p.34
Theorem 2.8.1]{cover},
\begin{equation*}
  I(Z;X) \leq I(Y;X),
\end{equation*}
with equality if $X \mbox{--} Z \mbox{--} Y$ is also a Markov
chain. In~\cite{zam98}, Zamir proved an equivalent inequality for the
Fisher information in a variable $Y$ about a parameter $\theta$. We
follow similar steps and extend the data processing inequality to
$J_{\alpha}$; an inequality which we will use next to prove the GFII.

\begin{definition}
  \label{defrtheta}
  Let $m > 0$ and let $\vtheta = \left[\theta_1 \, \theta_2 \, \cdots
    \, \theta_m\right]^{\text{t}}$ be a fixed vector of parameters.
  For $1 < \alpha \leq 2$ define,
  \begin{align}
    J_{\alpha}(\vY_{\vtheta};\vtheta) & \eqdef - \E{I^{\vtheta}_{2-\alpha}\left[\bigtriangleup_{\vtheta} 
        \ln p_{\vY_{\vtheta}}\right](\vY_{\vtheta})} \label{eqdeftheta}\\
    \& \qquad J_{\alpha}(\vY_{\vtheta};\vtheta|\vZ) & \eqdef \quad \Ep{\vZ}{J_{\alpha}\left(\vY_{\vtheta};\vtheta | 
        \vZ = \vz\right)} \nonumber,
  \end{align} 
  where for $f_{\vtheta}(\cdot): \Reals^{d} \rightarrow \Reals$ that
  is parameterized by $\vtheta \in \Reals^m$
  \begin{align}
    I^{\vtheta}_{2-\alpha}\left[f_{\vtheta}\right]: \, &\Reals^{d} \rightarrow \Reals \nonumber\\
    &\vx \,\,\,\rightarrow  I^{\vtheta}_{2-\alpha}\left[f_{\vtheta}\right](\vx) = \frac{\Gamma\left(\frac{m}{2}
        -\frac{2-\alpha}{2}\right)}{\pi^{\frac{m}{2}}2^{2-\alpha}\Gamma\left(\frac{2-\alpha}{2}\right)} \int_{\Reals^m} 
    \|\vtheta-\veta\|^{-m+2-\alpha}f_{\veta}(\vx)\,d\veta,  \label{Rtheta}
  \end{align}
  and 
  \begin{equation*}
    \bigtriangleup_{\vtheta}f_{\vtheta}(\vx) = \sum_{i =1}^{m} \frac{\partial^2 f_{\vtheta}}{\partial \theta^2_{i}}(\vx).
  \end{equation*}
\end{definition}

The operator $I_{2-\alpha}[\cdot]$ is the Riesz potential of order
$(2-\alpha)$ presented in Appendix~\ref{asd}. Note that the Riesz
  potential in equation~(\ref{Rtheta}) is that of function
  $f_{\vtheta}(\cdot)$ when $\vtheta$ is considered the variable
  instead of $\vx$.

\begin{theorem}[Translation Property for $d=m$]
  \label{th:transprop}
  
  If $d=m$ and $\vY_{\vtheta} = \vY +\vtheta$, then
  \begin{equation}
    J_{\alpha}(\vY_{\vtheta};\vtheta) = J_{\alpha}(\vY) .
    \label{locpar}
  \end{equation}
\end{theorem}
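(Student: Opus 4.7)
The plan is to unfold the two nested structures in~(\ref{eqdeftheta})---the translation $\vY_{\vtheta} = \vY + \vtheta$ and the $\vtheta$-Riesz potential---and to recognize the resulting quantity as the analytic expression for $J_{\alpha}(\vY)$ furnished by Lemma~\ref{lem2}.

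First, since a deterministic translation merely shifts the density, $p_{\vY_{\vtheta}}(\vx) = p_{\vY}(\vx - \vtheta)$. Each $\theta_i$-derivative of $\ln p_{\vY}(\vx - \vtheta)$ equals minus the corresponding $x_i$-derivative, while taking a second derivative restores the sign, so
\begin{equation*}
\bigtriangleup_{\vtheta} \ln p_{\vY_{\vtheta}}(\vx) \;=\; (\bigtriangleup \ln p_{\vY})(\vx - \vtheta).
\end{equation*}
Substituting this into~(\ref{Rtheta}), evaluating at $\vx = \vY_{\vtheta} = \vY + \vtheta$, and changing variables $\vu = \vtheta - \veta$ (so that $\vY_{\vtheta} - \veta = \vY + \vu$) followed by $\vw = \vY + \vu$ reduces the $\vtheta$-Riesz potential to the standard Riesz potential $I_{2-\alpha}[\bigtriangleup \ln p_{\vY}](\vY)$; the prefactor in~(\ref{Rtheta}) coincides exactly with the normalization of $I_{2-\alpha}$ on $\Reals^{d}$ precisely because $m = d$. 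The integrand thereby loses all dependence on $\vtheta$, and taking the expectation yields
\begin{equation*}
J_{\alpha}(\vY_{\vtheta};\vtheta) \;=\; -\int p_{\vY}(\vy)\, I_{2-\alpha}\bigl[\bigtriangleup \ln p_{\vY}\bigr](\vy)\, d\vy.
\end{equation*}

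Next, I would identify the composite operator $-I_{2-\alpha} \circ \bigtriangleup$ with the fractional Laplacian $(-\bigtriangleup)^{\alpha/2}$: the Fourier symbols $\|\vomega\|^{-(2-\alpha)}$ and $-\|\vomega\|^{2}$ multiply to $-\|\vomega\|^{\alpha}$. Invoking the self-adjointness of $(-\bigtriangleup)^{\alpha/2}$ (a Parseval swap between $\ln p_{\vY}$ and $p_{\vY}$) then gives
\begin{equation*}
-\int p_{\vY}\, I_{2-\alpha}[\bigtriangleup \ln p_{\vY}]\, d\vy \;=\; \int \ln p_{\vY}(\vy)\, (-\bigtriangleup)^{\alpha/2} p_{\vY}(\vy)\, d\vy \;=\; \int \ln p_{\vY}(\vy)\, \mathcal{F}_{\mathcal{I}}\bigl[\|\vomega\|^{\alpha} \phi_{\vY}(-\vomega)\bigr](\vy)\, d\vy,
\end{equation*}
and the right-hand side is precisely $J_{\alpha}(\vY)$ by Lemma~\ref{lem2}, concluding~(\ref{locpar}).

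The main obstacle is the rigorous justification of the Parseval swap and the self-adjointness pairing of $(-\bigtriangleup)^{\alpha/2}$ between the possibly unbounded $\ln p_{\vY}$ and the density $p_{\vY}$; this should follow from the same regularity condition~(\ref{condthrough}) already imposed in Lemma~\ref{lem2}, which ensures that $\|\vomega\|^{\alpha} \phi_{\vY}(-\vomega)$ is an integrable Fourier pair whose inverse is $\ln p_{\vY}$-integrable. The remaining steps---matching the Riesz normalization constants when $m = d$ and the two changes of variables---are pure bookkeeping.
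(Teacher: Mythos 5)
Your proposal is correct and is essentially the paper's own proof traversed in the opposite direction: the paper starts from Lemma~\ref{lem2}'s expression for $J_{\alpha}(\vY)$, factors $\|\vomega\|^{\alpha}=\|\vomega\|^{2}\,\|\vomega\|^{\alpha-2}$, applies Green's identity twice and the self-adjointness of $I_{2-\alpha}$, and then performs exactly your translation change of variables to land on $J_{\alpha}(\vY_{\vtheta};\vtheta)$. The only substantive difference is that the paper splits your single ``Parseval swap'' into an explicit double integration by parts (with vanishing boundary terms on spheres $\|\vy\|=R$) followed by the Fubini-type self-adjointness of the Riesz potential, each with its own integrability hypothesis, rather than appealing to the regularity condition~(\ref{condthrough}) alone.
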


\begin{proof}
  \begin{align}
    J_{\alpha}(\vY)  &= \int \ln p_\vY(\vy) \,\mathcal{F}_{\mathcal{I}}\bigl[\|\vomega\|^{\alpha}\phi_\vY(- \vomega)
    \bigr](\vy)\,d\vy\nonumber\\
    & = -\int \ln p_\vY(\vy) \, \bigtriangleup_{\vy} \biggl(\mathcal{F}_{\mathcal{I}}\bigl[\|\vomega\|^{\alpha-2}\phi_\vY(- \vomega)
    \bigr](\vy)\biggr)\,d\vy\label{Fprop}\\
    & = -\int \bigtriangleup_{\vy} \left(\ln p_\vY(\vy)\right) \, I_{2-\alpha}[p_\vY](\vy) \,d\vy \label{parts}\\
    &= -\int I_{2-\alpha}\left[\bigtriangleup_{\vy} \left(\ln p_\vY\right)\right](\vy) \, p_\vY(\vy) \,d\vy \label{appjust}\\
    & = -\int  I_{2-\alpha}\left[ \sum_j \frac{d^2}{d y_j^2} \ln p_{\vY} \right](\vy)  \,p_\vY(\vy)\,d\vy \label{forlater}\\
    & = -\int I^{\vtheta}_{2-\alpha}\left[ \sum_j \frac{d^2}{d \theta_j^2} \ln p_{\vY_{\vtheta}} \right](\vy +\vtheta) 
    \,p_\vY(\vy) \,d\vy \label{sum}\\
    & = -\int  I^{\vtheta}_{2-\alpha}\left[ \bigtriangleup_{\vtheta} \ln p_{\vY_{\vtheta}} \right](\vy +\vtheta)  
    \,p_{\vY_{\vtheta}}(\vy+\vtheta)\,d\vy \nonumber\\
    & = J_{\alpha}(\vY_{\theta};\theta) \nonumber,
  \end{align}
  Equation~(\ref{Fprop}) is due to basic properties of the Fourier
  transform since $I_{2-\alpha}(p_\vY)(\vy) =\mathcal{F}_{\mathcal{I}}
  \left[\|\vomega\|^{\alpha-2}\phi_\vY(- \vomega) \right](\vy)$ decays
  to $0$ at ``$\infty$''. In order to write equation~(\ref{parts}), we
  use Green's first identity\cite{walt} in the following form: Let
  $\bigtriangledown$ denotes the gradient operator and $\times$
  denotes the dot product. If $\Psi(\cdot)$ and $\Phi(\cdot)$ are real
  valued functions on $\Reals^d$, then
  \begin{equation*}
    \int_{\Reals^d}\Psi(\vy) \bigtriangleup \Phi (\vy) \, d\vy = -\int_{\Reals^d} \bigtriangledown \Psi (\vy) 
    \times \bigtriangledown \Phi (\vy)\,d\vy + \lim_{R \rightarrow +\infty} \int_{\|\vy\| = R} \Psi(\vy) 
    \bigtriangledown \Phi (\vy) \times \vn \,dS(\vy),
  \end{equation*}
  where $\vn$ is the outward pointing unit normal vector of surface
  element $dS(\vy)$. Applying twice Green's theorem justifies
  equation~(\ref{parts}) as long as:
  \begin{eqnarray*}
    &\lim_{R \rightarrow +\infty}& \int_{\|\vy\| = R} \ln p_\vY(\vy) \bigtriangledown I_{2-\alpha}(p_\vY)(\vy) \times \vn \,dS(\vy)= 0\\
    &\text{and}&\\
    &\lim_{R \rightarrow +\infty}& \int_{\|\vy\| = R} I_{2-\alpha}(p_\vY)(\vy) \bigtriangledown \ln p_\vY(\vy) \times \vn \, dS(\vy)= 0.
  \end{eqnarray*}
  As stated in Appendix~\ref{asd}, equation~(\ref{appjust}) holds true
  whenever $\left|\bigtriangleup_{\vy} \ln p_{\vY}(\vy)\right| \,
  I_{2-\alpha}(p_{\vY})(\vy)$ is integrable. It remains to justify
  equation~(\ref{sum}) which we prove next,
  \begin{align}
    &I^{\vtheta}_{2-\alpha}\left[\sum_j \frac{d^2}{d \theta_j^2} \ln p_{\vY_{\vtheta}} \right](\vy+\vtheta)\nonumber\\
    &=  \frac{\Gamma\left(\frac{d}{2}-\frac{2-\alpha}{2}\right)}{\pi^{\frac{d}{2}}2^{2-\alpha}\Gamma\left(\frac{2-\alpha}{2}
      \right)} \int_{\Reals^d} \|\vtheta-\veta\|^{-d+2-\alpha}\sum_j \frac{d^2}{d \eta_j^2} 
    \ln p_{\vY_{\veta}}(\vy + \vtheta) \,d\veta \label{defrt}\\
    &=  \frac{\Gamma\left(\frac{d}{2}-\frac{2-\alpha}{2}\right)}{\pi^{\frac{d}{2}}2^{2-\alpha}\Gamma\left(\frac{2-\alpha}{2}
      \right)} \int_{\Reals^d} \|\vtheta-\veta\|^{-d+2-\alpha}\sum_j \frac{d^2}{d \theta_j^2} \ln p_{\vY}(\vy + \vtheta -\veta) 
    \,d\veta \label{transprop}\\
    &=  \frac{\Gamma\left(\frac{d}{2}-\frac{2-\alpha}{2}\right)}{\pi^{\frac{d}{2}}2^{2-\alpha}\Gamma\left(\frac{2-\alpha}{2}
      \right)} \int_{\Reals^d} \|\vtheta-\veta\|^{-d+2-\alpha}\sum_j \frac{d^2}{d y_j^2} \ln p_{\vY}(\vy + \vtheta -\veta) 
    \,d\veta \nonumber\\
    &=  \frac{\Gamma\left(\frac{d}{2}-\frac{2-\alpha}{2}\right)}{\pi^{\frac{d}{2}}2^{2-\alpha}\Gamma\left(
        \frac{2-\alpha}{2}\right)} \int_{\Reals^d} \|\vy - \vv\|^{-d+2-\alpha}\sum_j \frac{d^2}{d y_j^2} 
    \ln p_{\vY}(\vv) \,d\vv \label{chovar}\\
    &= I^{\vtheta}_{2-\alpha}\left[\sum_j \frac{d^2}{d y_j^2}  \ln p_{\vY} \right](\vy). \nonumber
  \end{align} 
  Equation~(\ref{defrt}) is the definition of $I^{\vtheta}_{2-\alpha}
  \left[ \cdot \right]$ given in equation~(\ref{Rtheta})
  and~(\ref{transprop}) is due to the fact that $\vY_{\veta} = \vY +
  \veta$. Equation~(\ref{chovar}) is obtained by the change of
  variable $\vv = \vy + \vtheta - \veta$ and the last equation is due
  to the definition of $I_{2-\alpha}\left[\cdot\right]$ (see
  Appendix~\ref{asd}).
\end{proof}

\begin{theorem}[Chain Rule and Data Processing Inequality for the
  \aFI]
  \label{dpif}

  If $\vtheta \text{--} \vY_{\vtheta} \text{--} \vZ_{\vtheta}$, i.e.,
  the conditional distribution of $\vZ_{\vtheta}$ given
  $\vY_{\vtheta}$ is independent of $\vtheta$, then
  \begin{equation*}
    J_{\alpha}(\vZ_{\vtheta};\vtheta) \leq J_{\alpha}(\vY_{\vtheta};\vtheta),
  \end{equation*}
  whenever $J_{\alpha}(\vY_{\vtheta};\vtheta|\vZ_{\vtheta}) \geq 0$.
\end{theorem}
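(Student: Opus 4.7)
The plan is to mimic Zamir's approach to the classical data processing inequality: first establish a chain rule for $J_\alpha(\cdot\,;\vtheta)$ in both orderings of the pair $(\vY_\vtheta,\vZ_\vtheta)$, then use the Markov condition to eliminate one of the resulting conditional terms, and finally invoke the non-negativity hypothesis.

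To prove the chain rule, I would factor the joint density as $p_{(\vY_\vtheta,\vZ_\vtheta)}(\vy,\vz) = p_{\vY_\vtheta}(\vy)\, p_{\vZ_\vtheta|\vY_\vtheta}(\vz|\vy)$. Taking the logarithm and then the Laplacian in $\vtheta$ gives the additive decomposition
\begin{equation*}
\bigtriangleup_{\vtheta} \ln p_{(\vY_\vtheta,\vZ_\vtheta)}(\vy,\vz) = \bigtriangleup_{\vtheta} \ln p_{\vY_\vtheta}(\vy) + \bigtriangleup_{\vtheta} \ln p_{\vZ_\vtheta|\vY_\vtheta}(\vz|\vy),
\end{equation*}
which the linear operator $I^{\vtheta}_{2-\alpha}[\cdot]$ preserves. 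Taking the negated expectation under $(\vY_\vtheta,\vZ_\vtheta)$, marginalizing the first summand over $\vz$ (its argument depends only on $\vy$), and handling the second summand by iterated expectations given $\vY_\vtheta=\vy$, I would obtain
\begin{equation*}
J_\alpha\bigl((\vY_\vtheta,\vZ_\vtheta);\vtheta\bigr) = J_\alpha(\vY_\vtheta;\vtheta) + J_\alpha(\vZ_\vtheta;\vtheta\mid\vY_\vtheta).
\end{equation*}
Swapping the roles of $\vY_\vtheta$ and $\vZ_\vtheta$ yields the dual identity $J_\alpha((\vY_\vtheta,\vZ_\vtheta);\vtheta) = J_\alpha(\vZ_\vtheta;\vtheta) + J_\alpha(\vY_\vtheta;\vtheta\mid\vZ_\vtheta)$.

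The Markov condition $\vtheta \text{--} \vY_\vtheta \text{--} \vZ_\vtheta$ states precisely that $p_{\vZ_\vtheta|\vY_\vtheta}(\vz|\vy)$ does not depend on $\vtheta$, so its $\vtheta$-Laplacian vanishes pointwise in $(\vy,\vz)$, the Riesz potential of zero is zero, and hence $J_\alpha(\vZ_\vtheta;\vtheta\mid\vY_\vtheta) = 0$. Equating the two chain-rule decompositions then gives $J_\alpha(\vY_\vtheta;\vtheta) = J_\alpha(\vZ_\vtheta;\vtheta) + J_\alpha(\vY_\vtheta;\vtheta\mid\vZ_\vtheta)$, and the assumption $J_\alpha(\vY_\vtheta;\vtheta\mid\vZ_\vtheta) \geq 0$ finishes the proof.

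The step I expect to be the hardest is the rigorous justification of the chain rule, because $I^{\vtheta}_{2-\alpha}[\cdot]$ is a \emph{non-local} fractional integral in $\vtheta$ rather than a pointwise derivative as in the classical ($\alpha = 2$) case. Concretely, I need to interchange this fractional integral with marginalization in $\vz$ (for the first summand) and with the conditional expectation $\E[\cdot \mid \vY_\vtheta=\vy]$ (for the second), which requires a Fubini-type argument controlling the joint integrability of the kernel $\|\vtheta-\veta\|^{-m+2-\alpha}$ against $\bigtriangleup_{\veta}\ln p_{(\vY_\veta,\vZ_\veta)}(\vy,\vz)$ weighted by the joint density. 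This is the analogue of the regularity hypothesis used around equation~(\ref{appjust}) in the proof of Theorem~\ref{th:transprop}, and assuming an appropriate integrability condition at the conditional level, the manipulations above go through and the result follows algebraically.
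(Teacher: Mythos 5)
Your proposal is correct and follows essentially the same route as the paper's proof: both rely on the two chain-rule decompositions of $J_{\alpha}\left(\left(\vY_{\vtheta},\vZ_{\vtheta}\right);\vtheta\right)$ obtained from the two factorizations of the joint density, use the Markov condition to make the term conditioned on $\vY_{\vtheta}$ vanish, and invoke the hypothesis $J_{\alpha}(\vY_{\vtheta};\vtheta|\vZ_{\vtheta}) \geq 0$ to conclude. Your remark about the non-local Riesz potential requiring a Fubini-type justification for the chain rule is a fair observation about a step the paper itself only justifies by appealing to linearity.
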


We note that the condition $J_{\alpha} ( \vY_{\vtheta}; \vtheta |
\vZ_{\vtheta} ) \geq 0$ is needed since there are no formal guarantees
of non-negativeness according to Definition~\ref{defrtheta} as it is
the case for $J_{\alpha}(\vY)$. The non-negativity of
$J_{\alpha}(\vY_{\vtheta};\vtheta)$ is guaranteed, for example,
whenever $\vtheta$ is a translation parameter. Another case when
non-negativity holds is found next in the proof of
Theorem~\ref{GFIItheorem}.

\begin{proof}
  Consider
  \begin{equation*}
    J_{\alpha}(\vY_{\vtheta},\vZ_{\vtheta};\vtheta) = - \Ep{\vY,\vZ}{I_{2-\alpha}\left[\bigtriangleup_{\vtheta}
        \left(\ln p_{\vY_{\vtheta},\vZ_{\vtheta}}\right)\right] (\vY_{\vtheta},\vZ_{\vtheta})}. 
  \end{equation*} 
  We have 
  \begin{equation*}
    \ln p_{\vY_{\vtheta},\vZ_{\vtheta}}(\vy,\vz;\vtheta) = \ln p_{\vZ_{\vtheta}}(\vz;\vtheta) + \ln p_{\vY_{\vtheta}|\vZ_{\vtheta}}
    (\vy;\vtheta | \vz),
  \end{equation*}
  which yields
  \begin{eqnarray}
    J_{\alpha}(\vY_{\theta},\vZ_{\theta};\theta) &=&  J_{\alpha}(\vZ_{\theta};\theta) + J_{\alpha}(\vY_{\theta};\theta|\vZ_{\theta}) 
    \label{chainrule}\\
    &\geq&  J_{\alpha}(\vZ_{\theta};\theta) \label{refbd}.
  \end{eqnarray}
  Equation~(\ref{chainrule}) is due to the linearity property of the
  Laplacian operator, the Riesz potential~\cite{stein70} and the
  expectation operator.  Equation~(\ref{refbd}) is justified by the
  fact that
  $J_{\alpha}(\vY_{\theta};\theta|\vZ_{\theta}) \geq 0$ by
  assumption. Equality holds if and only if $J_{\alpha}(\vY_{\theta};
  \theta|\vZ_{\theta}) = 0$ which is true if $\theta \text{--}
  \vZ_{\theta} \text{--} \vY_{\theta}$ forms a Markov chain. On the
  other hand, since $\vZ_{\theta}$ is conditionally independent of
  $\theta$ given $\vY_{\theta}$, $\ln p_{\vZ_{\theta}|\vY_{\theta}} (
  \cdot | \vy )$ is independent of $\theta$ and
  \begin{equation*}
    J_{\alpha}(\vY_{\theta},\vZ_{\theta};\theta) = J_{\alpha}(\vY_{\theta};\theta),
  \end{equation*}
  which along with equation~(\ref{refbd}) gives the required result.
\end{proof}

\paragraph*{\underline{Additivity property of $J_{\alpha}(\vY)$ for
    vectors $\vY$ having independent components}}
Before proceeding to state and prove the GFII, we prove the {\em
  additivity of $J_{\alpha}(\vY)$ when $\vY$ has independent
  components\/}, as mentioned in property (7). Starting from
equation~(\ref{forlater}),
\begin{eqnarray}
  J_{\alpha}(\vY) &=&  -\int p_\vY(\vy)\, I_{2-\alpha}\left[ \sum_j \frac{d^2}{d y_j^2} \ln p_{\vY} \right](\vy)  \,d\vy \nonumber\\
  &=& -\int p_\vY(\vy)\, I_{2-\alpha}\left[ \sum_j \frac{d^2}{d y_j^2} \ln p_{Y_{j}} \right](\vy)  \,d\vy \label{indcomp}\\
  &=& -\sum_j \int p_\vY(\vy)\, I_{2-\alpha}\left[ \frac{d^2}{d y_j^2} \ln p_{Y_{j}} \right](y_j)  \,d\vy \label{linearriesz}\\
  &=&  -\sum_j \int p_{Y_j}(y_j)\, I_{2-\alpha}\left[ \frac{d^2}{d y_j^2} \ln p_{Y_{j}} \right](y_j)  \,dy_j \label{indcomp1}\\
  &=& -\sum_j \int  \frac{d^2}{d y_j^2}I_{2-\alpha}\left[p_\vY(\vy)\right]\, \ln p_{Y_{j}} (y_j)  \,dy_j \label{reggie}\\
  &=& \sum_j J_{\alpha}(Y_j)\nonumber,
\end{eqnarray}
where equations~(\ref{indcomp}) and~(\ref{indcomp1}) are due to the
independence of the $Y_{j}$'s. Equation~(\ref{linearriesz}) is
justified by the linearity of the Riesz potential and
equation~(\ref{reggie}) holds true whenever $\left\{\ln p_{Y_{j}}
  \frac{d}{dy_j}I_{2-\alpha}[p_{Y_{j}}](y_j)\right\}_{j}$ go to $0$ at
``$\infty$'' and the regularity condition~(\ref{condthrough}) is
satisfied by the $\left\{Y_j\right\}$'s.

\paragraph*{\underline{Generalized Fisher Information Inequality}}

\begin{theorem}[Generalized Fisher Information Inequality (GFII)]
  \label{GFIItheorem}

  Let $1 < \alpha \leq 2$ and let $\vY_1$ and $\vY_2$ be two
  independent $d$-dimensional random vectors, then
  \begin{equation}
    \label{gfii}
    J^{\frac{1}{1-\alpha}}_{\alpha}(\vY_1+\vY_2) \geq J^{\frac{1}{1-\alpha}}_{\alpha}(\vY_1)  
    + J^{\frac{1}{1-\alpha}}_{\alpha}(\vY_2).
  \end{equation}
\end{theorem}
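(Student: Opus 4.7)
The plan is to mimic Zamir's derivation of the classical Fisher information inequality, replacing ordinary scores by the Riesz-potential formulation of $J_{\alpha}$ and invoking the data-processing inequality of Theorem~\ref{dpif} together with the translation identity (Theorem~\ref{th:transprop}) and the independent-components additivity derived just above. Concretely, I would introduce a translation parameter $\vtheta \in \Reals^{d}$ and non-negative scalars $\lambda_1,\lambda_2$ with $\lambda_1+\lambda_2 = 1$, and consider
\[
\vY_{\vtheta} = (\vY_1 + \lambda_1\vtheta,\; \vY_2 + \lambda_2\vtheta) \in \Reals^{2d}, \qquad \vZ_{\vtheta} = \vY_1 + \vY_2 + \vtheta \in \Reals^{d}.
\]
Since $\vZ_{\vtheta}$ is a deterministic function of $\vY_{\vtheta}$, the Markov-chain hypothesis of Theorem~\ref{dpif} holds trivially, and the translation property gives $J_{\alpha}(\vZ_{\vtheta};\vtheta) = J_{\alpha}(\vY_1+\vY_2)$.

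Second, I would evaluate $J_{\alpha}(\vY_{\vtheta};\vtheta)$ directly. By independence, $\ln p_{\vY_{\vtheta}}(\vy_1,\vy_2) = \ln p_{\vY_1}(\vy_1-\lambda_1\vtheta) + \ln p_{\vY_2}(\vy_2-\lambda_2\vtheta)$, so $\bigtriangleup_{\vtheta}\ln p_{\vY_{\vtheta}}$ splits into two pieces, each carrying a factor $\lambda_i^{2}$ from the chain rule. Applying $I^{\vtheta}_{2-\alpha}$ term by term and rescaling the Riesz-integral variable --- exactly in the spirit of equations~(\ref{defrt})--(\ref{chovar}) from the proof of the translation property --- pulls out an additional $|\lambda_i|^{\alpha-2}$ and leaves an ordinary $I_{2-\alpha}$ evaluated at the shifted point. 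This yields the scaling identity $J_{\alpha}(\vY+\lambda\vtheta;\vtheta) = |\lambda|^{\alpha} J_{\alpha}(\vY)$ and, combined with additivity over independent coordinates,
\[
J_{\alpha}(\vY_{\vtheta};\vtheta) = \lambda_1^{\alpha} J_{\alpha}(\vY_1) + \lambda_2^{\alpha} J_{\alpha}(\vY_2).
\]
Granting the non-negativity hypothesis $J_{\alpha}(\vY_{\vtheta};\vtheta\,|\,\vZ_{\vtheta}) \geq 0$ (discussed below), Theorem~\ref{dpif} then delivers
\[
J_{\alpha}(\vY_1+\vY_2) \;\leq\; \lambda_1^{\alpha} J_{\alpha}(\vY_1) + \lambda_2^{\alpha} J_{\alpha}(\vY_2)
\]
for every admissible split.

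Third, I would optimize the right-hand side. Because $\alpha > 1$, the map $\lambda \mapsto \lambda^{\alpha}$ is strictly convex, so a Lagrange-multiplier computation under $\lambda_1+\lambda_2 = 1$ picks out
\[
\lambda_i^{\star} = \frac{J_{\alpha}(\vY_i)^{1/(1-\alpha)}}{J_{\alpha}(\vY_1)^{1/(1-\alpha)} + J_{\alpha}(\vY_2)^{1/(1-\alpha)}},
\]
giving the minimum value $\bigl(J_{\alpha}(\vY_1)^{1/(1-\alpha)} + J_{\alpha}(\vY_2)^{1/(1-\alpha)}\bigr)^{1-\alpha}$. Consequently,
\[
J_{\alpha}(\vY_1+\vY_2) \;\leq\; \bigl(J_{\alpha}(\vY_1)^{1/(1-\alpha)} + J_{\alpha}(\vY_2)^{1/(1-\alpha)}\bigr)^{1-\alpha},
\]
and raising both sides to the power $1/(1-\alpha)$, which is negative for $1<\alpha\leq 2$ and therefore reverses the inequality, delivers exactly equation~(\ref{gfii}). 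The specialization $\alpha=2$ recovers Stam's classical FII.

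The main obstacle is the non-negativity of the conditional $J_{\alpha}(\vY_{\vtheta};\vtheta\,|\,\vZ_{\vtheta})$, which the text flags just after Theorem~\ref{dpif} as not guaranteed by Definition~\ref{defrtheta} and needing a separate verification. Conditioning on $\vZ_{\vtheta} = \vz$ confines $\vY_{\vtheta}$ to the affine subspace $\{\vu_1+\vu_2 = \vz\}$, with conditional density (in $\vv=\vu_1$) proportional to $p_{\vY_1}(\vv - \lambda_1\vtheta)\,p_{\vY_2}(\vz-\vv-\lambda_2\vtheta)/p_{\vY_1+\vY_2}(\vz-\vtheta)$. I would compute $\bigtriangleup_{\vtheta}\ln p$ on this slice, apply $I^{\vtheta}_{2-\alpha}$, and show that the resulting expectation is non-negative --- essentially a fractional-order analogue of Blachman's identity, which expresses the score of the convolution $p_{\vY_1+\vY_2}$ as a conditional average of the individual scores and then closes the loop via a Jensen/Cauchy--Schwarz-type argument adapted to the Riesz potential. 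This is the only genuinely delicate step; everything else is bookkeeping on top of Theorems~\ref{th:transprop}--\ref{dpif}.
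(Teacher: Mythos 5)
Your overall architecture matches the paper's: introduce a translation parameter, invoke the data-processing inequality of Theorem~\ref{dpif} together with the translation property, evaluate $J_{\alpha}(\vY_{\vtheta};\vtheta)$ via scaling and additivity over independent components, and optimize over the convex split $\lambda_1+\lambda_2=1$. Your scaling identity $J_{\alpha}(\vY+\lambda\vtheta;\vtheta)=|\lambda|^{\alpha}J_{\alpha}(\vY)$, the optimizer $\lambda_i^{\star}$, and the final algebra (including the sign reversal when raising to the power $1/(1-\alpha)<0$) are all correct and coincide with the paper's. The genuine gap is exactly where you flag it: the non-negativity of $J_{\alpha}(\vY_{\vtheta};\vtheta\,|\,\vZ_{\vtheta})$. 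You defer this to a ``fractional-order analogue of Blachman's identity'' closed by a Jensen/Cauchy--Schwarz argument ``adapted to the Riesz potential,'' but no such argument is in sight: for $\alpha<2$, $J_{\alpha}$ is \emph{not} the expectation of a squared score, so there is no quadratic form on which Cauchy--Schwarz or Jensen can act; the only handle on non-negativity in this framework is the entropy-derivative characterization (property (1)), which applies to $J_{\alpha}$ of an actual distribution, not to the conditional parametric quantity of Definition~\ref{defrtheta}. Worse, with your choice $\vZ_{\vtheta}=\vY_1+\vY_2+\vtheta$ a deterministic function of $\vY_{\vtheta}$, the conditional law of $\vY_{\vtheta}$ given $\vZ_{\vtheta}$ is singular on an affine slice of $\Reals^{2d}$, so the chain-rule decomposition~(\ref{chainrule}) --- the tool that makes the conditional term tractable --- does not directly apply.

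The idea your sketch is missing is the paper's smoothing of $\vZ_{\vtheta}$: it takes $\vZ_{\vtheta}=\omega_1\vY_{\vtheta,1}+\omega_2\vY_{\vtheta,2}+\sqrt[\alpha]{\epsilon}\,\vN$ with $\vN\sim\bm{\mathcal{S}}(\alpha,1)$ independent. The conditional density then factors as $p_{\vY_{\vtheta,1}|\vZ_{\vtheta}}(\vy_1|\vz)\,p_{\sqrt[\alpha]{\epsilon}\vN|\vY_{\vtheta,1},\vZ_{\vtheta}}(\vz-\omega_1\vy_1-\omega_2\vy_2|\vy_1,\vz)$, and the chain rule splits the conditional information into $J_{\alpha}\bigl(\vY_1/\omega_1\,\big|\,\vZ_{\vtheta}=\vz\bigr)$ --- non-negative because it is a bona fide $J_{\alpha}$ of a (conditional) distribution, hence a limit of entropy increments --- plus a term that vanishes because the law of $\vY_{\vtheta,2}$ given $(\vY_{\vtheta,1},\vZ_{\vtheta})$ does not depend on $\vtheta$. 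The price is an extra limit $\epsilon\to 0^{+}$ at the very end, taken under the regularity condition~(\ref{condthrough}). Until you supply an argument of this kind for the non-negativity step, the proof is incomplete, and the route you propose for that step is unlikely to close.
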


We note that whenever $\alpha = 2$, equation~(\ref{gfii}) boils down
to the well-known ``classical'' FII.

\begin{proof}
  For the matter of the proof, we make use of the data processing
  inequality established in Theorem~\ref{dpif}. Let $\omega_1$ and
  $\omega_2 \in \Reals^{+*}$ be two positive numbers such that
  $\omega_1 + \omega_2 = 1$. Also let $\epsilon > 0$ and ${\bf N}$ be
  an independent random vector distributed according to $\bm
  {\mathcal{S}}(\alpha,1)$. For any $\vtheta \in \Reals^{d}$ we have
  \begin{equation*}
    \vtheta - \left(\frac{\vY_1}{\omega_1} +\vtheta ,\frac{\vY_2}{\omega_2}+\vtheta \right) - (\vY_1 + \vY_2 
    + \vtheta + \sqrt[\alpha]{\epsilon}{\bf N})
  \end{equation*} 
  forms a Markov chain.  Define $\vY_{\vtheta,1} =
  \frac{\vY_1}{\omega_1} + \vtheta$, $\vY_{\vtheta,2} =
  \frac{\vY_2}{\omega_2} + \vtheta$ and $\vZ_{\vtheta} = \omega_1
  \vY_{\vtheta,1} + \omega_{2} \vY_{\vtheta,2} +
  \sqrt[\alpha]{\epsilon}\vN$, then
  \begin{equation*}  
    J_{\alpha}\left(\left(\vY_{\vtheta,1},\vY_{\vtheta,2}\right);\vtheta \Big|\vZ_{\vtheta}\right) \eqdef \Ep{\vZ_{\vtheta}}
    {J_{\alpha}\left(\left(\vY_{\vtheta,1},\vY_{\vtheta,2}\right);\vtheta | \vZ_{\vtheta} \right)} \geq 0.
  \end{equation*}
  
  Indeed, let $p_{\left(\vY_{\vtheta,1}, \vY_{\vtheta,2} \right)
    |\vZ_{\vtheta}}(\cdot,\cdot |\vz)$ be the PDF of
  $\left(\vY_{\vtheta,1},\vY_{\vtheta,2}\right)$ given $\vZ_{\vtheta}
  = \vz$. Then,
  \begin{eqnarray*}
    p_{\left(\vY_{\vtheta,1},\vY_{\vtheta,2}\right) |\vZ_{\vtheta}}(\vy_{1},\vy_2 |\vz)
    &=& p_{\vY_{\vtheta,1} |\vZ_{\vtheta}}(\vy_1|\vz)\,p_{\vY_{\vtheta,2} |\vY_{\vtheta,1},\vZ_{\vtheta}}(\vy_2 |\vy_1,\vz)\\
    &=& p_{\frac{\vY_{1}}{\omega_1} |\vZ_{\vtheta}}(\vy_1-\vtheta|\vz)\,p_{\sqrt[\alpha]{\epsilon}{\bf N}|\vY_{\vtheta,1},\vZ_{\vtheta}}
    (\vz-\omega_1\vy_1-\omega_2\vy_2 |\vy_1,\vz).
  \end{eqnarray*} 
  
  One can now write:
  \begin{eqnarray}
    J_{\alpha}\left(\left(\vY_{\vtheta,1},\vY_{\vtheta,2}\right);\vtheta \Big|\vZ_{\vtheta} = \vz\right)  &=&   
    J_{\alpha}\left(\vY_{\vtheta,1};\vtheta \Big|\vZ_{\vtheta}=\vz\right)  +  J_{\alpha}\left(\vY_{\vtheta,2};\vtheta 
      \Big|\left(\vY_{\vtheta,1},\vZ_{\vtheta}=\vz\right)\right) \nonumber \\
    &=& J_{\alpha}\left(\frac{\vY_{1}}{\omega_1}\Big|\vZ_{\vtheta}=\vz\right)  +  J_{\alpha}\left(\vY_{\vtheta,2};\vtheta 
      \Big|\left(\vY_{\vtheta,1},\vZ_{\vtheta}=\vz\right)\right)  \nonumber\\
    &=& J_{\alpha}\left(\frac{\vY_{1}}{\omega_1}\Big|\vZ_{\vtheta}=\vz\right)  \label{chainrule1},
  \end{eqnarray}   
  where we used Theorem~\ref{th:transprop} and the fact that
  \begin{equation*}
    J_{\alpha}\left(\vY_{\vtheta,2};\vtheta \Big|\left(\vY_{\vtheta,1},\vZ_{\vtheta}=\vz\right)\right) = \Ep{\vY_{\vtheta,1}}
    { J_{\alpha} \left(\vY_{\vtheta,2}; \vtheta \Big| \left(\vY_{\vtheta,1}, \vZ_{\vtheta} =\vz \right) \right)} = 0,
  \end{equation*} 
  since $J_{\alpha}\left(\vY_{\vtheta,2}; \vtheta \Big| \left(
      \vY_{\vtheta,1} = \vy_1, \vZ_{\vtheta}=\vz \right)
  \right) = 0$ for every $\vy_1$ because $p_{\vY_{\vtheta,2}
    |\left( \vY_{\vtheta,1}, \vZ_{\vtheta} \right)}(\cdot)$ is
  independent of $\vtheta$.
  Equation~(\ref{chainrule1}) is non-negative by property (1) and
  therefore by Theorem~\ref{dpif},
  \begin{eqnarray}
    J_{\alpha}\left(\vZ_{\vtheta};\vtheta\right) &\leq& J_{\alpha}\left(\left(\vY_{\vtheta,1},\vY_{\vtheta,2}\right);\vtheta\right).
    \label{fiiprim}
  \end{eqnarray}
  
  Since $\vY_{\vtheta,1}$ and $\vY_{\vtheta,2}$ are statistically
  independent and using the definition of $J_{\alpha}(\cdot;\vtheta)$
  in~(\ref{eqdeftheta}), the RHS of equation~(\ref{fiiprim}) boils
  down to:
  \begin{equation*}
    J_{\alpha}\left(\left(\vY_{\vtheta,1},\vY_{\vtheta,2}\right);\vtheta\right) =  J_{\alpha}\left(\vY_{\vtheta,1};\vtheta\right) 
    + J_{\alpha}\left(\vY_{\vtheta,2};\vtheta\right),
  \end{equation*}    
  which implies by means of the translation invariance property (2)
  in~(\ref{locpar}) that equation~(\ref{fiiprim}) is equivalent to:
  \begin{equation*}
    J_{\alpha}\left(\vY_1 + \vY_2 + \sqrt[\alpha]{\epsilon}\vN\right).
    \leq J_{\alpha}\left(\frac{\vY_1}{\omega_1}\right) + J_{\alpha}\left(\frac{\vY_2}{\omega_2} \right),
  \end{equation*}
  Under the regularity condition~(\ref{condthrough}), taking the limit
  as $\epsilon \rightarrow 0$ yields
  \begin{eqnarray}
    J_{\alpha}\left(\vY_1 + \vY_2\right) &\leq& J_{\alpha}\left(\frac{\vY_1}{\omega_1}\right) + J_{\alpha}\left(\frac{\vY_2}
      {\omega_2}\right) \nonumber\\
    &\leq& \omega_1^{\alpha} J_{\alpha}(\vY_1) + \omega_2^{\alpha} J_{\alpha}(\vY_2) \label{minpro},
  \end{eqnarray}
  by property (5) of $J_{\alpha}(\cdot)$.  Equation~(\ref{minpro})
  holds true for any $\omega_1$ and $\omega_2$ satisfying the
  conditions of the theorem, the tightest choice $\omega_1^*$ and
    $\omega_2^*$ being,
  \begin{align*}
    \omega_1^* & = \arg\min_{0 \leq \omega_1 \leq 1} \, \{\omega_{1}^{\alpha} J_{\alpha}(\vY_1) + (1-\omega_1)^{\alpha} J_{\alpha}(\vY_2)\} \\
    & = \frac{J^{\frac{1}{\alpha-1}}_{\alpha}(\vY_2)}{J^{\frac{1}{\alpha-1}}_{\alpha}(\vY_1) + J^{\frac{1}{\alpha-1}}_{\alpha}(\vY_2)} \\
    \omega_2^* & = 1 - \omega_1^* = \frac{J^{\frac{1}{\alpha-1}}_{\alpha}(\vY_1)}{J^{\frac{1}{\alpha-1}}_{\alpha}(\vY_1) + 
      J^{\frac{1}{\alpha-1}}_{\alpha}(\vY_2)}, 
  \end{align*}
  for which~(\ref{minpro}) becomes
  \begin{equation*}
    J_{\alpha}\left(\vY_1 + \vY_2\right)  
    \leq \frac{J_{\alpha}(\vY_1)J_{\alpha}(\vY_2)}{\left[J^{\frac{1}{\alpha-1}}_{\alpha}(\vY_1) + J^{\frac{1}{\alpha-1}}_{\alpha}
        (\vY_2)\right]^{\alpha-1}},
  \end{equation*}
  which completes the proof of the theorem.
\end{proof}

\subsection{Upper Bounds on the Differential Entropy of Sums Having a
  Stable Component}

An important category of information inequalities consists of finding
upper bounds on the entropy of independent sums.  Starting with
fundamental inequalities such that the upper bound on the discrete
entropy of independent sums~\cite{cover} and the upper bound on the
differential entropy of the sum of independent finite-variance
RVs~\cite{Sha48_1}, several identities involving discrete and
differential entropy of sums were subsequently shown
in~\cite{ruz,tao,kont1,mad08,cov94,ord06,bob13}. Recently
in~\cite{FAF15-1}, an upper bound on the differential entropy of the
sum $X + N$ of two independent RVs was found where $N$ is a
finite-variance infinitely divisible variable having a Gaussian
component. We extend in this section the known upper bound results to
cases where $\vN$ is SS stable vector using the GFII and the
generalized de~Bruijn's identity.

\begin{theorem}[Upper bound on the Entropy of Sums having a Stable
  Component]
  \label{bdsta}
  
  Let $\vZ \sim \bm{\mathcal{S}} \left( \alpha, \gamma \right)$, $1
  <\alpha \leq 2$, and let $\vX$ be a $d$-dimensional vector that is
  independent of $\vZ$ such that $h(\vX)$ and $J_{\alpha}(\vX)$ are
  finite. Then
  \begin{equation*}
    h(\vX+\vZ) - h(\vX) \leq \gamma^{\alpha} J_{\alpha}(\vX) \, _{2}F_{1}\left(\alpha-1,\alpha-1;\alpha;
      -\left(\frac{\alpha\gamma^{\alpha}}{d} J_{\alpha}(\vX)\right)^{\frac{1}{\alpha-1}}\right),
  \end{equation*}
  where $_{2}F_{1}(a,b;c;z)$ is the analytic continuation of the Gauss
  hypergeometric function on the complex plane with a cut along the
  real axis from 1 to +$\infty$.
\end{theorem}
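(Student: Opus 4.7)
The natural strategy is to write the entropy gap $h(\vX+\vZ)-h(\vX)$ as an integral using the generalized de Bruijn identity (Theorem~\ref{thm:debruijn}), bound the integrand pointwise using the GFII (Theorem~\ref{GFIItheorem}), perform the integral explicitly, and identify it with the hypergeometric function. Concretely, I set $\vX_\eta = \vX + \sqrt[\alpha]{\eta}\,\vZ$ so that $\vX_0 = \vX$ and $\vX_1 = \vX+\vZ$. Since $\vZ \sim \bm{\mathcal{S}}(\alpha,\gamma)$, Theorem~\ref{thm:debruijn} yields
\begin{equation*}
h(\vX+\vZ) - h(\vX) \;=\; \int_0^{1} \frac{d}{d\eta}\,h(\vX_\eta)\, d\eta \;=\; \gamma^{\alpha} \int_0^{1} J_{\alpha}(\vX_\eta)\, d\eta.
\end{equation*}

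Next, I apply the GFII to the independent sum $\vX + \sqrt[\alpha]{\eta}\,\vZ$. Using properties (4) and (5) of $J_\alpha(\cdot)$, one has $J_\alpha(\sqrt[\alpha]{\eta}\,\vZ) = \frac{d}{\alpha\gamma^\alpha\eta}$, so the GFII gives
\begin{equation*}
J_{\alpha}^{\frac{1}{1-\alpha}}(\vX_\eta) \;\geq\; J_{\alpha}^{\frac{1}{1-\alpha}}(\vX) + \left( \frac{d}{\alpha\gamma^\alpha\eta} \right)^{\frac{1}{1-\alpha}}.
\end{equation*}
Since $1-\alpha < 0$, raising to the power $1-\alpha$ flips the inequality. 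Setting $A := J_\alpha(\vX)$ and $\beta := 1/(\alpha-1)$, a short algebraic rearrangement gives
\begin{equation*}
J_{\alpha}(\vX_\eta) \;\leq\; \frac{A}{\Bigl(1+\bigl(\tfrac{A\alpha\gamma^\alpha\eta}{d}\bigr)^{\beta}\Bigr)^{\alpha-1}}.
\end{equation*}

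Plugging this bound back into the de~Bruijn integral and making the substitution $u = A\alpha\gamma^\alpha\eta/d$, the entropy gap is bounded by
\begin{equation*}
h(\vX+\vZ)-h(\vX) \;\leq\; \frac{d}{\alpha}\int_0^{v} \frac{du}{(1+u^{\beta})^{\alpha-1}}, \qquad v := \frac{\alpha\gamma^\alpha A}{d}.
\end{equation*}
It remains to identify this integral with $v\cdot {}_2F_1(\alpha-1,\alpha-1;\alpha;-v^{\beta})$. Expanding $(1+u^{\beta})^{-(\alpha-1)}$ as a binomial series, integrating term by term, and using the Pochhammer identity $(\alpha-1)_k/(\alpha)_k = (\alpha-1)/(\alpha+k-1)$ matches the series expansion of the hypergeometric function exactly, yielding the desired expression after analytic continuation.

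The main obstacle I anticipate is justifying the interchanges of limits and the applicability of the GFII: in particular, the GFII (Theorem~\ref{GFIItheorem}) was derived under a regularity condition of the form~(\ref{condthrough}) and involves passing $\epsilon\to 0$ inside. I would need to verify that this condition is inherited uniformly in $\eta\in(0,1]$ by $\vX_\eta$, which is plausible because convolution with a stable density smooths $\vX$ and makes all relevant Riesz potentials well-defined. A secondary point is the convergence of the series manipulation: the binomial series for $(1+u^{\beta})^{-(\alpha-1)}$ converges for $|u^{\beta}|<1$, so the identification with ${}_2F_1$ is first established for small $v$ and then extended to all admissible $v$ by the analytic continuation of ${}_2F_1$ along the cut from $1$ to $+\infty$, which is precisely the branch specified in the statement.
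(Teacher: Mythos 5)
Your proposal is correct and follows essentially the same route as the paper's proof: the generalized de Bruijn identity turns the entropy gap into $\gamma^{\alpha}\int_0^1 J_{\alpha}(\vX+\sqrt[\alpha]{\eta}\vZ)\,d\eta$, the GFII together with properties (4) and (5) bounds the integrand by $J_{\alpha}(\vX)\bigl(1+(\tfrac{\alpha\gamma^{\alpha}\eta}{d}J_{\alpha}(\vX))^{\frac{1}{\alpha-1}}\bigr)^{-(\alpha-1)}$, and the resulting integral is identified with $_2F_1(\alpha-1,\alpha-1;\alpha;\cdot)$. The only (immaterial) difference is that you match the integral to the hypergeometric series by term-by-term binomial expansion plus analytic continuation, whereas the paper uses the change of variable $\eta=u^{\alpha-1}$ to land directly on the Euler integral representation.
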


For more details on hypergeometric functions, the reader may refer to
Appendix~\ref{asd}.  Theorem~\ref{bdsta} provides an upper bound on
the entropy of the sum of two variables when one of them is stable. As
a special case, when $\alpha =2$, it recovers the upper bound for
Gaussian noise setups~\cite{FAF15-1}.

\begin{proof}
  Using the extended de Bruijn's identity~(\ref{debruijngen1}), we
  write:
  \begin{align}
    h(\vX + \vZ) - h(\vX) & = \int^{1}_{0} \gamma^{\alpha} J_{\alpha}(\vX + \sqrt[\alpha]{\eta} \vZ) \,d\eta \nonumber\\
    &\leq \gamma^{\alpha} \int^{1}_{0} \frac{J_{\alpha}(\vX)J_{\alpha}( \sqrt[\alpha]{\eta} \vZ)}{\left(J^{\frac{1}{\alpha-1}}_{\alpha}(\vX) 
        + J^{\frac{1}{\alpha-1}}_{\alpha}(\sqrt[\alpha]{\eta}  \vZ)\right)^{\alpha -1}} \, d\eta \label{jgfii}\\
    &=   \gamma^{\alpha} \int^{1}_{0} \frac{J_{\alpha}(\vX)\frac{d}{\alpha \gamma^{\alpha} \eta }}{\left(J^{\frac{1}{\alpha-1}}_{\alpha}(\vX) 
        + (\frac{d}{\alpha \gamma^{\alpha} \eta })^{\frac{1}{\alpha-1}}\right)^{\alpha -1}} \, d\eta \label{propj}\\
    &= (\alpha-1)\gamma^{\alpha} J_{\alpha}(\vX) \int_{0}^{1} \frac{u^{\alpha-2}}{\left((\frac{\alpha \gamma^{\alpha}}{d}
        J_{\alpha}(\vX))^{\frac{1}{\alpha-1}}u + 1 \right)^{\alpha -1}} du \nonumber\\
    &=\gamma^{\alpha} J_{\alpha}(\vX) \, _{2}F_{1}\left(\alpha-1,\alpha-1;\alpha;-\left(\frac{\alpha\gamma^{\alpha}}{d}
        J_{\alpha}(\vX)\right)^{\frac{1}{\alpha-1}}\right) \label{uppsta},
  \end{align}
  where we use the GFII in order to write equation~(\ref{jgfii}) and
  properties (4) and (5) of $J_{\alpha}(\cdot)$ to validate
  equation~(\ref{propj}).
\end{proof}

Interestingly, Theorem~\ref{bdsta} gives an analytical bound on the
change in the transmission rates of the linear stable channel function
of an input scaling operation: let $a \neq 0$, then
\begin{align*}
  h(a\vX + \vZ)
  &\leq h(a \vX) + \gamma^{\alpha} J_{\alpha}(a\vX) \, _{2}F_{1}\left(\alpha-1,\alpha-1;\alpha;-\left(\frac{\alpha\gamma^{\alpha}}{d}
      J_{\alpha}(a\vX)\right)^{\frac{1}{\alpha-1}}\right),\\
  &= h(\vX) + d \ln |a| + \left(\frac{\gamma}{|a|}\right)^{\alpha} J_{\alpha}(\vX) \, _{2}F_{1}\left(\alpha-1,\alpha-1;\alpha;
    -\left(\frac{\alpha}{d}\left(\frac{\gamma}{|a|}\right)^{\alpha} J_{\alpha}(\vX)\right)^{\frac{1}{\alpha-1}}\right),
\end{align*}
where we used the fact that $h(a \vX) = h(\vX) + d \ln |a|$ and
$J_{\alpha}(a \vX) = \frac{1}{|a|^{\alpha}}
J_{\alpha}(\vX)$. Subtracting $h(\vZ)$ from both sides of the equation
gives
\begin{equation*}
  I(a\vX+\vZ;\vX) - I(\vX+\vZ;\vX) \leq  \ln |a| + \left(\frac{\gamma}{|a|}\right)^{\alpha} J_{\alpha}(\vX) 
  \, _{2}F_{1}\left(\alpha-1,\alpha-1;\alpha;-\left(\frac{\alpha}{d}\left(\frac{\gamma}{a}\right)^{\alpha}
      J_{\alpha}(\vX)\right)^{\frac{1}{\alpha-1}}\right).
\end{equation*}

Since $_{2}F_{1}\left(\alpha-1,\alpha-1;\alpha;0\right) = 1$,
\begin{equation*}
  \lim_{|a| \rightarrow +\infty}\left(\frac{\gamma}{a}\right)^{\alpha} J_{\alpha}(\vX) \, _{2}F_{1}\left(\alpha-1,\alpha-1;
    \alpha;-\left(\frac{\alpha}{d}\left(\frac{\gamma}{a}\right)^{\alpha}
      J_{\alpha}(\vX)\right)^{\frac{1}{\alpha-1}}\right) = 0,
\end{equation*}
and for large values of $|a|$ the variation in the transmissions rates
is bounded by a logarithmically growing function of $|a|$.  This is a
known behavior of the optimal transmission rates that are achieved by
Gaussian inputs in a Gaussian setting.

On a final note, making use of the identity:
\begin{equation*}
  \ln(1+t)  = t _2F_1(1,1;2;-t),
\end{equation*}
equation~(\ref{uppsta}) when evaluated for $\vZ \sim
\bm{\mathcal{N}}(0;\sigma^2 \mat{I})$ and $\alpha = 2$ boils down to
the following:
\begin{corollary}[Upper bound on the Entropy of Sums having a Gaussian
  Component]~\cite{FAF15-1}
  \label{newbdgaussian}

  Let $\vZ \sim \bm{\mathcal{N}}(0,\sigma^2 \mat{I})$ and $\vX$ be a
  $d$-dimensional vector that is independent of $\vZ$ such that
  $h(\vX)$ and $J(\vX)$ are finite. The differential entropy of $\vX +
  \vZ$ is upper
  bounded~by: 
  \begin{equation}
    \label{bdnew}
    h(\vX+\vZ) \leq h(\vX) + \frac{d}{2}\ln \left(1 + \frac{\sigma^2}{d} J(\vX)\right),
  \end{equation}
  and equality holds if and only if both $X$ and $Z$ are Gaussian
  distributed.
\end{corollary}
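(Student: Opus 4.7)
The plan is to obtain the corollary as a direct specialization of Theorem~\ref{bdsta} at $\alpha = 2$, followed by the hypergeometric simplification explicitly suggested before the statement. The derivation itself is essentially a one-line substitution once the parameters are identified correctly, so the only substantive work is the ``only if'' direction of the equality claim.

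First I would translate the Gaussian noise into the symmetric stable parameterization used in Theorem~\ref{bdsta}. For $\alpha = 2$, the reference scale satisfies $\gamma^2 = \sigma^2/2$, which is consistent with Property~(4) of $J_{\alpha}(\cdot)$ giving $J_{2}(\vZ) = d/\sigma^{2}$. Substituting $\alpha = 2$ and $\gamma^{\alpha} = \sigma^{2}/2$ into the conclusion of Theorem~\ref{bdsta} yields
$$h(\vX + \vZ) - h(\vX) \;\leq\; \tfrac{\sigma^{2}}{2}\, J(\vX)\, {}_{2}F_{1}\!\left(1,1;2;-\tfrac{\sigma^{2}}{d}\,J(\vX)\right).$$
Setting $t = \tfrac{\sigma^{2}}{d}\,J(\vX)$ and applying the stated identity $\ln(1+t) = t\,{}_{2}F_{1}(1,1;2;-t)$ rewrites the right-hand side as $\tfrac{d}{2}\ln\!\bigl(1+\tfrac{\sigma^{2}}{d}J(\vX)\bigr)$, which is exactly~(\ref{bdnew}).

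For the equality condition, the ``if'' direction is a direct computation: when $\vX \sim \bm{\mathcal{N}}(0, \sigma_{X}^{2}\mat{I})$, then $\vX+\vZ \sim \bm{\mathcal{N}}(0,(\sigma_{X}^{2}+\sigma^{2})\mat{I})$ and both sides of~(\ref{bdnew}) evaluate to $\tfrac{d}{2}\ln(1+\sigma^{2}/\sigma_{X}^{2})$. For the ``only if'' direction I would trace equality through the proof of Theorem~\ref{bdsta}: the sole inequality employed there is~(\ref{jgfii}), namely the GFII applied along the de~Bruijn integration in $\eta\in[0,1]$ to the pair $(\vX + \sqrt{\eta}\,\vZ',\sqrt{1-\eta}\,\vZ')$. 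At $\alpha = 2$ the GFII reduces to the classical FII, whose equality condition is that both summands be Gaussian with proportional covariances. Since the second summand is isotropic Gaussian, equality for almost every $\eta$ forces $\vX + \sqrt{\eta}\,\vZ'$ to be isotropic Gaussian, and hence $\vX$ itself must be Gaussian with covariance proportional to $\mat{I}$. The main obstacle, modest as it is, lies in this equality tracking rather than in the bound itself.
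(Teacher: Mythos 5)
Your derivation is correct and is exactly the paper's route: the paper obtains the corollary by substituting $\alpha=2$, $\gamma^{2}=\sigma^{2}/2$ into equation~(\ref{uppsta}) and invoking the identity $\ln(1+t)=t\,{}_{2}F_{1}(1,1;2;-t)$, precisely as you do. Your additional sketch of the equality condition goes beyond what the paper proves (it simply cites~\cite{FAF15-1}); the idea of tracing equality through the FII step is sound, though note that the proof of Theorem~\ref{bdsta} applies the GFII to the pair $(\vX,\sqrt[\alpha]{\eta}\,\vZ)$ rather than to $(\vX+\sqrt{\eta}\,\vZ',\sqrt{1-\eta}\,\vZ')$, which only makes your conclusion more immediate.
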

 
As shown in~\cite[Section 4]{FAF15-1}, we note that~(\ref{bdnew})
implies a reverse EPI when one of the vectors is Gaussian distributed,
which is equivalent to the concavity of the entropy power proved by
Costa~\cite{costa}. This was noted by
Courtade~\cite{courtadearxiv2016} who provided a generalization of the
reverse EPI in~\cite[Theorem 5]{courtade2016}.

\subsection{A Generalized Isoperimetric Inequality for Entropies}

Let ${\bf \atilde{Z}} \sim \bm{\mathcal{S}} \left( \alpha,
  (\frac{1}{\alpha})^{\frac{1}{\alpha}} \right)$ and define
$N_{\alpha}(\vX)$, $0 < \alpha \leq 2$, the entropy power of order
$\alpha$ as
\begin{equation}
  N_{\alpha}(\vX) = \frac{1}{e^{\frac{\alpha}{d} h({\bf \atilde{Z}})}} e^{\frac{\alpha}{d} h(\vX)}.
  \label{entrogen}
\end{equation}

\begin{theorem}[Generalized Isoperimetric Inequality for Entropies
  (GIIE)]
  \label{thGII}

  Let $\vX$ be a d-dimensional random vector such that both $h(\vX)$
  and $J_{\alpha}(\vX)$ exist, for some $1 < \alpha \leq 2$. Then
  \begin{equation}
    \frac{1}{d} \, N_{\alpha}(\vX)J_{\alpha}(\vX) \geq  \, \kappa_{\alpha} \eqdef  e^{(\alpha-1) \left(\psi(\alpha) +\gamma_{\text{e}}\right) - 1}, 
    \label{IPIgen}
  \end{equation}
  where $\gamma_{\text{e}}$ is the Euler-Mascheroni constant and
  $\psi(\cdot)$ is the digamma function.
\end{theorem}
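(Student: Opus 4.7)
The plan is to derive the GIIE by combining the upper bound of Theorem~\ref{bdsta} with the elementary entropy inequality $h(\vX+\vY)\geq h(\vY)$ for independent $\vX,\vY$, and then extracting asymptotics in the ``noise scale''. Concretely, let $\vN\sim\bm{\mathcal{S}}(\alpha,1)$ be independent of $\vX$ and set $\vZ=\sqrt[\alpha]{t}\,\vN\sim\bm{\mathcal{S}}(\alpha,\sqrt[\alpha]{t})$ for $t>0$. On one hand $h(\vX+\sqrt[\alpha]{t}\vN)\geq h(\sqrt[\alpha]{t}\vN)=\frac{d}{\alpha}\ln t+h(\vN)$; on the other hand Theorem~\ref{bdsta} (with $\gamma^{\alpha}=t$) gives
\[
\frac{d}{\alpha}\ln t+h(\vN)-h(\vX)\;\leq\;t\,J_{\alpha}(\vX)\,{}_2F_1\!\left(\alpha-1,\alpha-1;\alpha;-y(t)\right),\qquad y(t)=\Bigl(\tfrac{\alpha\,t\,J_{\alpha}(\vX)}{d}\Bigr)^{\frac{1}{\alpha-1}}.
\]
The claim will be obtained by letting $t\to\infty$, so everything reduces to getting a sharp two-term asymptotic of the hypergeometric function.

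The central computation is therefore to show that, as $y\to\infty$,
\[
{}_2F_1(\alpha-1,\alpha-1;\alpha;-y)=(\alpha-1)\int_0^1\frac{u^{\alpha-2}}{(1+yu)^{\alpha-1}}\,du = (\alpha-1)\,y^{1-\alpha}\bigl[\ln y+C+o(1)\bigr],
\]
with an explicit $C$. I plan to perform the substitution $s=u/(1+u)$, which transforms the integral into $\int_0^{y/(1+y)} s^{\alpha-2}/(1-s)\,ds$, and then isolate the logarithmic divergence via the decomposition $s^{\alpha-2}/(1-s)=(s^{\alpha-2}-1)/(1-s)+1/(1-s)$. The divergent piece produces the $\ln y$, while the convergent piece contributes $\int_0^1(s^{\alpha-2}-1)/(1-s)\,ds=-(\psi(\alpha-1)+\gamma_{\text{e}})$ by the classical identity $\psi(a)+\gamma_{\text{e}}=\int_0^1(1-s^{a-1})/(1-s)\,ds$. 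Using the recursion $\psi(\alpha)=\psi(\alpha-1)+1/(\alpha-1)$ then yields
\[
C=\frac{1}{\alpha-1}-\psi(\alpha)-\gamma_{\text{e}}.
\]

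With this asymptotic in hand, the right-hand side of the inequality simplifies to $\frac{d}{\alpha}\ln t+\frac{d}{\alpha}\ln\!\bigl(\alpha J_{\alpha}(\vX)/d\bigr)+\frac{(\alpha-1)d}{\alpha}C+o(1)$. Cancelling $\frac{d}{\alpha}\ln t$ on both sides and sending $t\to\infty$ gives
\[
h(\vN)-h(\vX)\leq\frac{d}{\alpha}\ln\!\Bigl(\tfrac{\alpha J_{\alpha}(\vX)}{d}\Bigr)+\tfrac{(\alpha-1)d}{\alpha}C.
\]
Exponentiating by $\alpha/d$ and using the relation $\tilde{\vZ}=(1/\alpha)^{1/\alpha}\vN$, which gives $h(\vN)=h(\tilde{\vZ})+(d/\alpha)\ln\alpha$, the left side becomes $\alpha/N_{\alpha}(\vX)$ (by the definition~(\ref{entrogen})) and one arrives at
\[
\frac{1}{d}N_{\alpha}(\vX)J_{\alpha}(\vX)\geq e^{-(\alpha-1)C}=e^{(\alpha-1)(\psi(\alpha)+\gamma_{\text{e}})-1}=\kappa_{\alpha}.
\]

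The main obstacle is the sharp identification of the constant $C$ in the hypergeometric asymptotic; the rest of the argument is a clean algebraic manipulation of Theorem~\ref{bdsta} together with the trivial entropy bound. A minor technical point is justifying that the $o(1)$ remainder in the hypergeometric expansion is uniform enough to commute with the limit $t\to\infty$, but this follows directly from the explicit integral form above once $C$ is computed. As a sanity check, the formula gives $\kappa_2=e^{0}=1$ (since $\psi(2)+\gamma_{\text{e}}=1$), recovering Stam's classical isoperimetric inequality $N(\vX)J(\vX)\geq d$.
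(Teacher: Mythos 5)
Your proof is correct, and it shares the paper's overall architecture: both arguments sandwich $h(\vX+\vZ)$ between the trivial lower bound $h(\vZ)$ and the upper bound of Theorem~\ref{bdsta}, and then send the scale of the stable perturbation to infinity. Where you diverge is in the one place where real work is needed, namely the two-term large-argument asymptotics of ${}_2F_1(\alpha-1,\alpha-1;\alpha;-y)$. The paper first applies the Pfaff transformation to rewrite the relevant quantity as $\left(\frac{t}{1+t}\right)^{\alpha-1}{}_2F_1\left(\alpha-1,1;\alpha;\frac{t}{1+t}\right)$, expands this as a power series, splits off the $\ln(1+t)$ divergence via the partial fraction $\frac{1}{n+\alpha-1}=\frac{1}{n}-\frac{\alpha-1}{n(n+\alpha-1)}$, and identifies the surviving constant through the series representation $(\alpha-1)\sum_{n\geq 1}\frac{1}{n(n+\alpha-1)}=\psi(\alpha)+\gamma_{\text{e}}$, justified by absolute convergence of the series as $t\to\infty$. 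You instead start from the Euler integral representation, reduce it by the substitution $s=v/(1+v)$ to $\int_0^{y/(1+y)}s^{\alpha-2}/(1-s)\,ds$, and extract the constant from Gauss's integral $\psi(a)+\gamma_{\text{e}}=\int_0^1(1-s^{a-1})/(1-s)\,ds$ with $a=\alpha-1$, followed by the recursion $\psi(\alpha)=\psi(\alpha-1)+1/(\alpha-1)$. The two computations are equivalent (series versus integral representation of the digamma function) and yield the same constant $C=\frac{1}{\alpha-1}-\psi(\alpha)-\gamma_{\text{e}}$, hence the same $\kappa_{\alpha}$; your integral route is arguably a bit cleaner in that no interchange of limit and infinite sum needs justification, and the uniformity concern you raise about the $o(1)$ remainder is moot since only the single limit $t\to\infty$ is taken for a fixed $\vX$.
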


Since $\psi(2) = -\gamma_{\text{e}} + 1$, we note right away that the
evaluation~(\ref{IPIgen}) for $\alpha =2$ yields the well known
IIE~\cite[Theorem 16]{cov-dem}:
\begin{equation*}
  \frac{1}{d} \, N(\vX) J(\vX) \geq 1,
\end{equation*}
with equality when $\vX$ is Gaussian distributed. For general values
of $1 < \alpha \leq 2$, whether equality in equation~(\ref{IPIgen}) is
achievable or not and under which conditions are still not answered.

\begin{proof}
  Let $t = \left( \frac{ \alpha \gamma^{\alpha}}{d} J_{\alpha} ( \vX )
  \right)^{\frac{1}{\alpha-1}}$ and $\vZ \sim \bm{\mathcal{S}} \left(
    \alpha, \gamma \right)$. By Theorem~\ref{bdsta},
  \begin{align}
    \frac{\alpha}{d}(h(\vX+\vZ) - h(\vX)) & = t^{\alpha-1}\,_2F_1\left(\alpha-1,\alpha-1;\alpha;-t\right) \nonumber \\
    & = \left(\frac{t}{1+t}\right)^{\alpha-1}\,_2F_1\left(\alpha-1,1;\alpha;\frac{t}{1+t}\right) \label{eqfor1},
  \end{align}
  where we used the fact that $t > 0$ and a transformation property of
  the Gauss hypergeometric function as presented in
  Appendix~\ref{asd}. Using the series representation of the Gauss
  hypergeometric function on the open unit disk, one can write:
  \begin{eqnarray*}
    _2F_1\left(\alpha-1,1;\alpha;\frac{t}{1+t}\right) &=& \sum^{+\infty}_{n=0} \frac{(\alpha-1)_n (1)_n}{(\alpha)_n} 
    \left(\frac{t}{1+t}\right)^{n}\\
    &=& \sum^{+\infty}_{n=0}\frac{\alpha -1}{n + \alpha -1} \left(\frac{t}{1+t}\right)^{n},
  \end{eqnarray*} 
  where $(A)_n = \frac{\Gamma(A+n)}{\Gamma(A)}$.
  Equation~(\ref{eqfor1}) can hence be written as,
  \begin{equation}
    \frac{\alpha}{d}(h(\vX+\vZ) - h(\vX)) =  (\alpha-1)\left(\frac{t}{1+t}\right)^{\alpha-1}\sum^{+\infty}_{n=0}\frac{1}{n 
      + \alpha -1} \left(\frac{t}{1+t}\right)^{n}. \label{finfor}
  \end{equation}

  The LHS of equation~(\ref{finfor}) is lower bounded by:
  \begin{equation*}
    \frac{\alpha}{d}(h(\vX+\vZ) - h(\vX)) \geq \frac{\alpha}{d}(h(\vZ) - h(\vX)) = \ln\frac{t^{\alpha-1}}{\frac{N_{\alpha}(\vX)
        J_{\alpha}(\vX)}{d}},
  \end{equation*}
  where we used equation~(\ref{entrogen}), the fact that $t =
  \left(\frac{\alpha\gamma^{\alpha}}{d}J_{\alpha}(\vX)\right)^{\frac{1}{\alpha-1}}$
  and that $h(\vZ) = d \ln \left(\gamma
    \alpha^{\frac{1}{\alpha}}\right) + h({\bf \atilde{Z}})$ in order
  to write the equality. As for the RHS of~(\ref{finfor}),
  \begin{align*}
    & (\alpha-1)\left(\frac{t}{1+t}\right)^{\alpha-1} \sum^{+\infty}_{n=0}\frac{1}{n + \alpha -1} \left(\frac{t}{1+t}\right)^{n} \\
    =  \, \, & (\alpha-1)\left(\frac{t}{1+t}\right)^{\alpha-1}\left[\frac{1}{\alpha-1} - \ln\left(1-\frac{t}{1+t}\right) - (\alpha-1)
      \sum^{+\infty}_{n=1}\frac{1}{n(\alpha+n-1)} \left(\frac{t}{1+t}\right)^n\right] \\
    = \, \, & \left(\frac{t}{1+t}\right)^{\alpha-1} + (\alpha-1)\left(\frac{t}{1+t}\right)^{\alpha-1}\ln(1+t)  -(\alpha-1)^{2} 
    \sum^{+\infty}_{n=1}\frac{1}{n(\alpha+n-1)} \left(\frac{t}{1+t}\right)^{n+\alpha-1}.
  \end{align*}
  
  Therefore~(\ref{finfor}) implies for any $t > 0$:
  \begin{multline*}
    \ln \left[ \frac{N_{\alpha}(\vX)J_{\alpha}(\vX)}{d}\right] -(\alpha-1) \ln t \geq -\left(\frac{t}{1+t}\right)^{\alpha-1} - (\alpha-1)
    \left(\frac{t}{1+t}\right)^{\alpha-1}\ln(1+t) \\
    +(\alpha-1)^{2} \sum^{+\infty}_{n=1}\frac{1}{n(\alpha+n-1)} \left(\frac{t}{1+t}\right)^{n+\alpha-1},
  \end{multline*}
  which by letting the scale $\gamma \rightarrow +\infty$ --and
  therefore $t \rightarrow +\infty$, gives the required result
  \begin{align}
    \ln N_{\alpha}(\vX)J_{\alpha}(\vX) &\geq (\alpha-1)^{2} \sum^{+\infty}_{n=1}\frac{1}{n(\alpha+n-1)} -1  \label{contser}\\ 
    &= (\alpha-1) \left(\psi(\alpha) +\gamma_{\text{e}}\right) - 1. \nonumber
  \end{align}

  The fact that the series $\sum_{n=1}^{+\infty}\frac{1}{n (\alpha +
    n-1)} \left( \frac{t}{1+t} \right)^{n+\alpha-1}$ is absolutely
  convergent permits the interchange in the order of the limits and
  justifies equation~(\ref{contser}).
\end{proof}


We plot in Figure~\ref{fig:figureuncer} the evaluation of the LHS of
equation~(\ref{IPIgen}) at the values of $\alpha = [1.2, 1.4, 1.6,
1.8]$ for alpha-stable RVs $\mathcal{S} \left( r, (r)^{-\frac{1}{r}}
\right)$ for the values of $r = [0.4, 0.6, 0.8, 1, 1.2, 1.4, 1.6,
1.8]$. The horizontal lines represent the RHS of
equation~(\ref{IPIgen}) for the considered values of $\alpha$. Note
that stable variables do not achieve the lower bound of the
GIIE~(\ref{IPIgen}) except when $\alpha =2$ where Gaussian variables
achieve the lower bound. The tightness in~(\ref{IPIgen})
  is explored in Figure~\ref{fig:figureuncerplus} where we evaluate
the product $N_{1.8}(X)J_{1.8}(X)$ whenever $X = X_1 + X_2$ where $X_1
\sim \mathcal{S} \left( r, (r)^{-\frac{1}{r}} \right)$ for $r= 1.8$
and $X_2 \sim \mathcal{N}(0,\sigma^2)$ for different value of
$\sigma$. The minimum is achieved for $\sigma = 4$ and not when $X$ is
alpha-stable (i.e., when $\sigma = 0$). Note that the computed minimum
in Figure~\ref{fig:figureuncerplus} is by no means a global minimum.

Whether there exist RVs that achieve the minimum of
$N_{\alpha}(X)J_{\alpha}(X)$ and whether the lower bound
$\kappa_{\alpha}$ is tight or not are still to be determined.

\begin{figure}[htb]
  \begin{center}
    \includegraphics[width=6in]{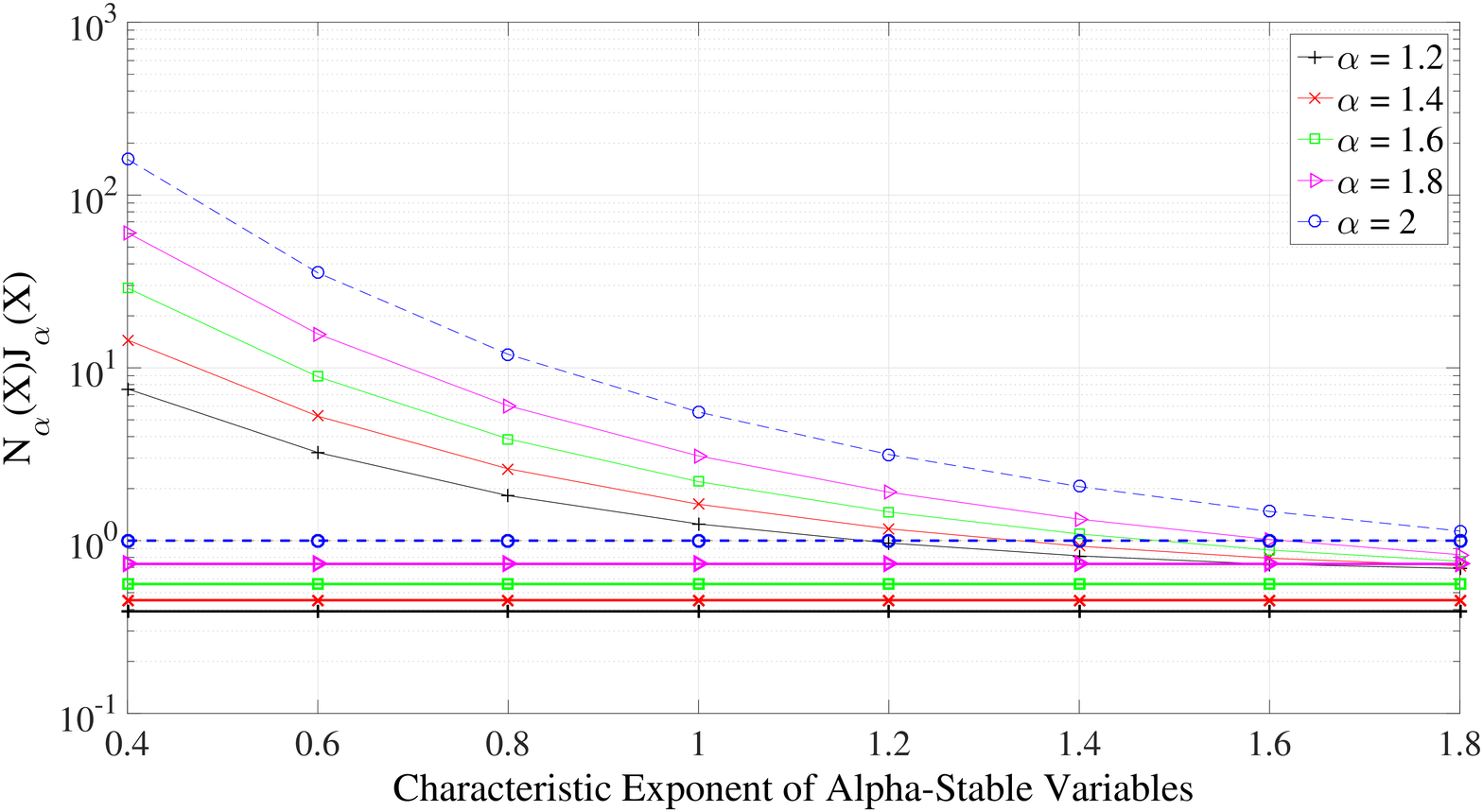}
    \caption{\small Evaluation of $N_{\alpha}(X)J_{\alpha}(X)$ and
      comparing it to $\kappa_{\alpha}$ for $X \sim
      \mathcal{S}\left(r,(r)^{-\frac{1}{r}}\right)$ for different
      values of $\alpha$ and $r$.
      \label{fig:figureuncer}}
  \end{center}
\end{figure}

\begin{figure}[htb]
  \begin{center}
    \includegraphics[width=5.5in]{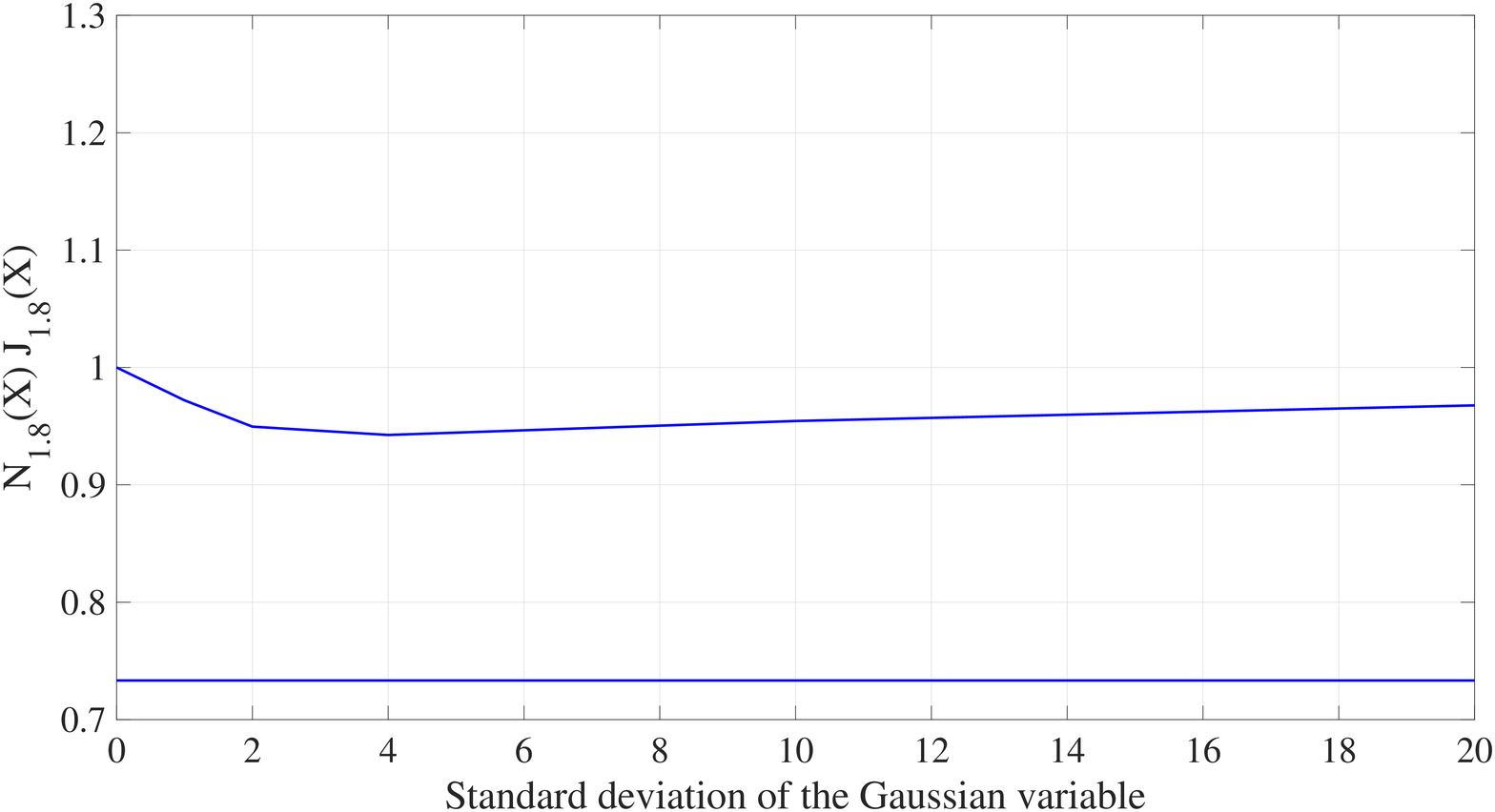}
    \caption{\small Comparison of $N_{1.8}(X)J_{1.8}(X)$ to
        $\kappa_{1.8} = 0.7333$ for $X = X_1 + X_2$ where $X_1 \sim
        \mathcal{S} \left( r, (r)^{-\frac{1}{r}} \right)$, $r = 1.8$
        and $X_2 \sim \mathcal{N}(0,\sigma^2)$ for different values of
        $\sigma$.
      \label{fig:figureuncerplus}}
  \end{center}
\end{figure}

Figure~\ref{fig:figuretight} shows the relative tightness of the lower
bound $\kappa_{\alpha}$ when the LHS of equation~(\ref{IPIgen}) is
evaluated at alpha-stable variables with characteristic exponents $r$
ranging from $0.4$ to $1.8$.  If we consider for example on the
$x$-axis the value of $r = 0.8$ which corresponds to the alpha-stable
variable $X \sim \mathcal{S}\left(r,(r)^{-\frac{1}{r}}\right)$, the
figure shows that as $\alpha$ decreases, $N_{\alpha}(X)J_{\alpha}(X)$
gets closer to $\kappa_{\alpha}$ in a relative manner.

\begin{figure}[h]
  \begin{center}
    \includegraphics[width=5.5in]{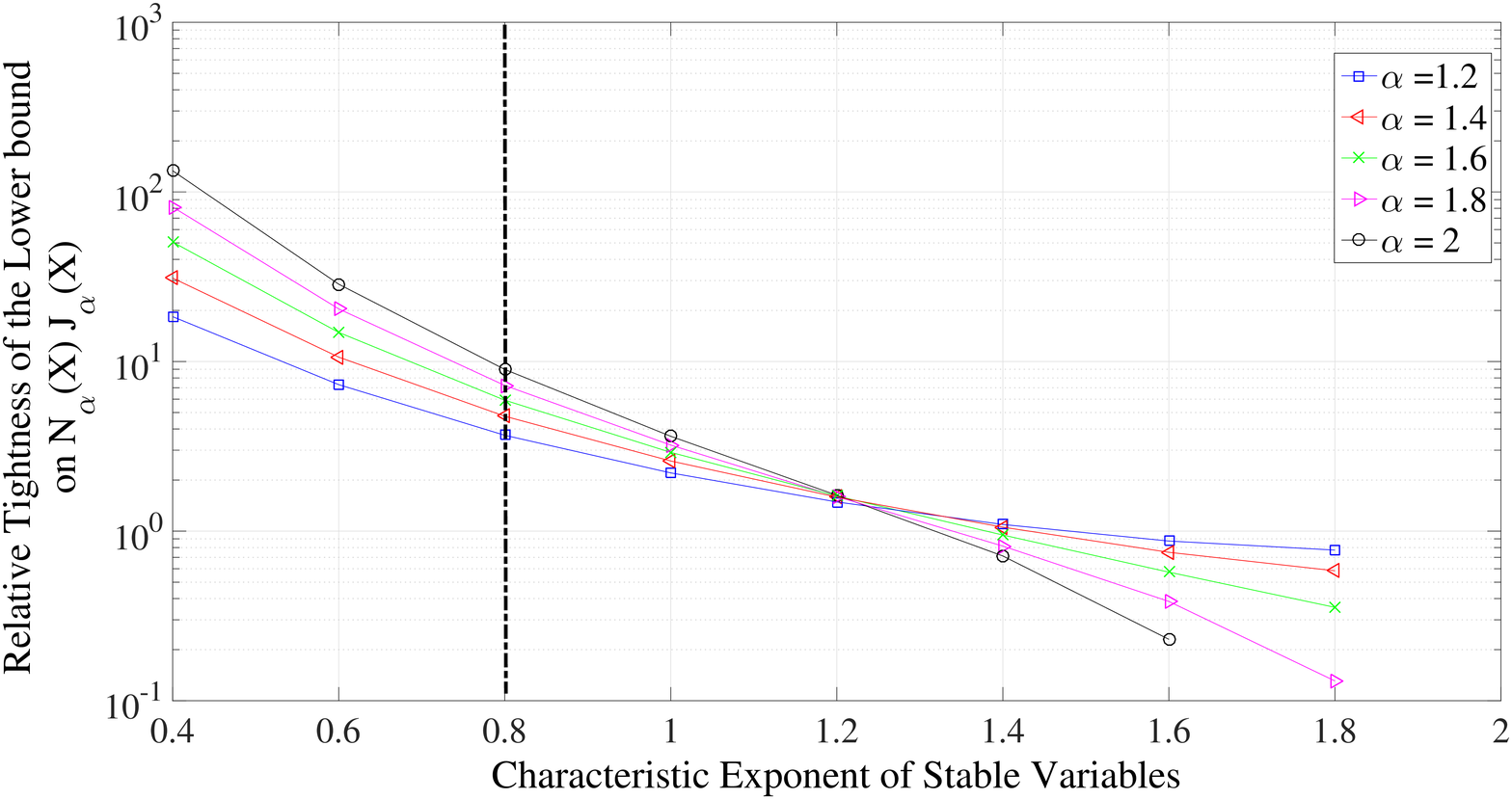}
    \caption{\small Relative tightness of $\kappa_{\alpha}$ for alpha-stable variables.
      \label{fig:figuretight}}
  \end{center}
\end{figure}


\section{Parameter Estimation in Impulsive Noise Environments: A
  Generalized Cramer-Rao Bound}
\label{sec:CRbd}

Consider now the problem of estimating a non-random vector of
parameters $\vtheta \in \Reals^{d}$ based on a noisy observation $\vX$
where the additive noise $\vN$ is of impulsive nature. Needless to say
that in this case the MSE criterion and the MMSE estimator are not
adequate.
More explicitly, let
\begin{equation*}
  \vX = \vtheta + \vN,
\end{equation*}
where $\vN$ is a noise variable having both $h(\vN)$ and
$J_{\alpha}(\vN)$ (for some $1 < \alpha \leq 2$) exist and finite.
Let $\hat{\vtheta}(\vX)$ be an estimator of $\vtheta$ based
on the observation of the random vector $\vX$.
%
%
%
A good indicator of the quality of the estimator
$\hat{\vtheta}(\vX)$ is the power of the ``error'' $\left(
  \hat{\vtheta}(\vX) - \vtheta \right)$. We find next a lower bound on
such metric which generalizes the previously known Cramer-Rao bound.
   
\begin{theorem}[Generalized Cramer-Rao Bound]
  \label{th:cramer-rao}
  
  Let $\hat{\vtheta}(\vX)$ be an estimator of the parameter $\vtheta$
  based on the observation $\vX = \vtheta + \vN$. Then the \apower
    of the error is lower bounded by
  \begin{equation}
    \label{CRfin}
    \aPw{\text{e}} = \aPw{\hat{\vtheta}(\vX) - \vtheta}  \geq \left( \frac{d \kappa_{\alpha}}{J_{\alpha}(\vN)} \right)^{\frac{1}{\alpha}}.
  \end{equation}
\end{theorem}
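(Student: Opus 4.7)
The strategy is to combine three ingredients established earlier in the paper: the Generalized Isoperimetric Inequality for Entropies (Theorem~\ref{thGII}), the max-entropy characterization of the \apower (Theorem~\ref{th:maxentro}), and the data-processing inequality for $J_{\alpha}(\cdot;\vtheta)$ (Theorem~\ref{dpif}). Let $\vE = \hat{\vtheta}(\vX) - \vtheta$ denote the estimation error.

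First, I would convert $\aPw{\vE}$ into a bound on the generalized entropy power. By Theorem~\ref{th:maxentro} applied with $A = \aPw{\vE}$, one has $h(\vE) \leq h(\atilde{\vZ}) + d\ln\aPw{\vE}$, and substitution into the definition~(\ref{entrogen}) of $N_{\alpha}(\vE)$ gives
\begin{equation*}
  N_{\alpha}(\vE) = e^{-\frac{\alpha}{d}h(\atilde{\vZ})}\,e^{\frac{\alpha}{d}h(\vE)} \leq \aPw{\vE}^{\alpha}.
\end{equation*}
Applying the GIIE to $\vE$ provides $N_{\alpha}(\vE)\,J_{\alpha}(\vE) \geq d\kappa_{\alpha}$, which when combined with the previous bound yields the intermediate inequality
\begin{equation*}
  \aPw{\vE}^{\alpha} \geq \frac{d\kappa_{\alpha}}{J_{\alpha}(\vE)}.
\end{equation*}

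Next, I would upper bound $J_{\alpha}(\vE)$ by $J_{\alpha}(\vN)$. Because $\vX = \vtheta + \vN$ is a translation family in $\vtheta$, Theorem~\ref{th:transprop} gives $J_{\alpha}(\vX;\vtheta) = J_{\alpha}(\vN)$. Since $\hat{\vtheta}(\vX)$ is a deterministic function of $\vX$, the Markov chain $\vtheta \to \vX \to \hat{\vtheta}(\vX)$ together with the data processing inequality (Theorem~\ref{dpif}) gives $J_{\alpha}(\hat{\vtheta}(\vX);\vtheta) \leq J_{\alpha}(\vN)$. The identification $J_{\alpha}(\vE) = J_{\alpha}(\hat{\vtheta}(\vX);\vtheta)$ then follows from translation invariance of $J_{\alpha}(\cdot)$ under constant shifts of its argument (property~(3) of the \aFI), since $\vE$ and $\hat{\vtheta}(\vX)$ differ only by the constant $\vtheta$, combined with Theorem~\ref{th:transprop}. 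Chaining the inequalities gives $\aPw{\vE}^{\alpha} \geq d\kappa_{\alpha}/J_{\alpha}(\vN)$, which is the theorem.

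The step I expect to be the main obstacle is the identification $J_{\alpha}(\vE) = J_{\alpha}(\hat{\vtheta}(\vX);\vtheta)$: the density of $\vE$ can depend on $\vtheta$ in a non-translation manner for general (non-equivariant) estimators $\hat{\vtheta}$, so linking the intrinsic $\alpha$-Fisher information of the error to the location-parameter Fisher information of the estimator requires care. I expect the cleanest route is to apply the generalized de Bruijn identity~(\ref{debruijngen0}) to $\vE + \sqrt[\alpha]{\epsilon}\atilde{\vZ}$ at $\epsilon = 0^+$ and use translation invariance of the differential entropy to write $h(\vE + \sqrt[\alpha]{\epsilon}\atilde{\vZ}) = h(\hat{\vtheta}(\vX) + \sqrt[\alpha]{\epsilon}\atilde{\vZ})$, after which the reparameterization of the $\vtheta$-family as a shifted copy of $\vE$ makes the identification transparent; the remainder of the proof then reduces to routine bookkeeping under the regularity assumptions that make $h(\vE)$, $J_{\alpha}(\vE)$, and $\aPw{\vE}$ well defined.
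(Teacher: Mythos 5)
Your proposal follows essentially the same route as the paper's proof: Theorem~\ref{th:maxentro} to get $N_{\alpha}(\hat{\vtheta}(\vX)-\vtheta) \leq \aPw{\text{e}}^{\alpha}$, the GIIE of Theorem~\ref{thGII} applied to the error, and the chain $J_{\alpha}(\hat{\vtheta}(\vX)-\vtheta) = J_{\alpha}(\hat{\vtheta}(\vX);\vtheta) \leq J_{\alpha}(\vX;\vtheta) = J_{\alpha}(\vN)$ via Theorems~\ref{th:transprop} and~\ref{dpif}. The subtlety you flag --- that identifying the intrinsic $J_{\alpha}$ of the error with the location-parameter quantity $J_{\alpha}(\hat{\vtheta}(\vX);\vtheta)$ implicitly assumes the estimator family behaves as a translation family --- is real, but the paper's own proof makes exactly the same identification without further comment, so your argument is not weaker than the original on this point.
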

 
Note that whenever $\alpha = 2$ the result of
Theorem~\ref{th:cramer-rao} is the classical Cramer-Rao bound when
$\vN$ has IID components. It gives a looser version for general $\vN$:
\begin{equation}
  \label{gausscrb}
  \frac{\E{\left\|\hat{\vtheta}(\vX) - \vtheta\right\|^2}}{d} \geq \frac{d}{J(\vN)}.
\end{equation}

\begin{proof}
  Using Theorem~\ref{th:maxentro}, among all random vectors that
  have an \apower equal to $\aPw{\text{e}}$, the entropy maximizing
  variable is distributed according to $\bm {\mathcal{S}} \left(
    \alpha, \left( \frac{1}{\alpha} \right)^{\frac{1}{\alpha}}
    \aPw{\text{e}} \right)$ and
  \begin{equation*}
    h\left(\hat{\vtheta}(\vX) - \vtheta\right) \leq h({\bf \atilde{Z}}) + d \ln \aPw{\text{e}},
  \end{equation*}
  which implies that
  \begin{equation}
    \label{Nmaxprop}
    N_{\alpha}\left(\hat{\vtheta}(\vX) - \vtheta\right)  \leq \aPw{\text{e}}^{\alpha}.
  \end{equation}

  On the other hand, 
  \begin{equation}
      \label{Jprop}
      J_{\alpha}\left(\hat{\vtheta}(\vX) - \vtheta\right) = J_{\alpha}\left(\hat{\vtheta}(\vX) \right) 
      =J_{\alpha}\left(\hat{\vtheta}(\vX);\vtheta\right) \leq J_{\alpha}(\vX;\vtheta) = J_{\alpha}(\vN),
    \end{equation}
    where the second and the last equalities are due to
    Theorem~\ref{th:transprop} and the inequality is due to the data
    processing inequality for $J_{\alpha}(\cdot)$ proven in
    Theorem~\ref{dpif}. Applying the GIIE~(\ref{IPIgen}) to
  $\hat{\vtheta}(\vX) - \vtheta$, we obtain:
  \begin{equation*}
    \frac{N_{\alpha}\left(\hat{\vtheta}(\vX) - \vtheta\right) J_{\alpha}\left(\hat{\vtheta}(\vX) - \vtheta\right)}{d}  
    \geq \kappa_{\alpha},
  \end{equation*}
  which along with equations~(\ref{Nmaxprop}) and~(\ref{Jprop}) gives,
  \begin{equation*}
    \frac{J_{\alpha}(\vN) \,\aPw{\text{e}}^{\alpha}}{d}  \geq \kappa_{\alpha}.
  \end{equation*}
\end{proof}

Whenever the noise is a SS vector $\vN \sim \bm{ \mathcal{S} }(\alpha,
\gamma_N)$ for some $1 < \alpha \leq 2$, and since $J_{\alpha}(\vN) =
\frac{d}{\alpha \gamma_N^{\alpha}}$ by property (4),
Theorem~\ref{th:cramer-rao} specializes to the following bound.

\begin{corollary}[Generalized Cramer-Rao Bound for Stable Noise]
  
When the noise is a SS vector $\vN \sim \bm{ \mathcal{S}
    }(\alpha, \gamma_N)$, $1 < \alpha \leq 2$, the \apower
    $\aPw{\text{e}}$ of the error of all estimators $\hat{\vtheta}
    \left( \vX \right)$ is lower bounded by
  \begin{equation*}
    \aPw{\text{e}} \geq  \left(\alpha \kappa_{\alpha}\right)^{\frac{1}{\alpha}} \gamma_N.
  \end{equation*}
\end{corollary}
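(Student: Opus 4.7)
The plan is to obtain this corollary as an immediate specialization of the generalized Cramer-Rao bound in Theorem~\ref{th:cramer-rao} by substituting the closed-form expression for the \aFI of a symmetric stable vector that was recorded as property (4) in Section~\ref{sc:GFI}. Concretely, the only quantity in the right-hand side of~(\ref{CRfin}) that is not already a universal constant (given $\alpha$) is $J_{\alpha}(\vN)$, and for stable $\vN$ this is known in closed form.

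First I would invoke Theorem~\ref{th:cramer-rao}, which applies here because the hypotheses --- that $h(\vN)$ and $J_{\alpha}(\vN)$ exist and are finite for some $1 < \alpha \leq 2$ --- hold for $\vN \sim \bm{\mathcal{S}}(\alpha,\gamma_N)$: finiteness of $h(\vN)$ is standard for stable laws (it equals $d \ln(\gamma_N \alpha^{1/\alpha}) + h({\bf \atilde{Z}})$), and by property (4),
\begin{equation*}
  J_{\alpha}(\vN) \;=\; \frac{d}{\alpha \, \gamma_N^{\alpha}}.
\end{equation*}
Theorem~\ref{th:cramer-rao} then gives
\begin{equation*}
  \aPw{\text{e}} \;\geq\; \left( \frac{d \, \kappa_{\alpha}}{J_{\alpha}(\vN)} \right)^{\frac{1}{\alpha}}.
\end{equation*}

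Next I would substitute the expression for $J_{\alpha}(\vN)$ into this bound and simplify:
\begin{equation*}
  \left( \frac{d \, \kappa_{\alpha}}{J_{\alpha}(\vN)} \right)^{\frac{1}{\alpha}}
  = \left( \frac{d \, \kappa_{\alpha} \cdot \alpha \, \gamma_N^{\alpha}}{d} \right)^{\frac{1}{\alpha}}
  = \left( \alpha \, \kappa_{\alpha} \right)^{\frac{1}{\alpha}} \gamma_N,
\end{equation*}
which is exactly the announced inequality. There is no real obstacle here --- the work has already been done in establishing Theorem~\ref{th:cramer-rao} and property (4); the corollary is simply the readout of those two results when the noise happens to be stable, and it serves to make the generalized CR bound explicit and easy to interpret in the canonical stable-noise setting (the bound scales linearly in the scale parameter $\gamma_N$, just as the classical CR bound scales linearly in $\sigma$ for Gaussian noise).
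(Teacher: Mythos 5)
Your proposal is correct and follows exactly the paper's own route: invoke Theorem~\ref{th:cramer-rao} and substitute the closed-form value $J_{\alpha}(\vN) = \frac{d}{\alpha\,\gamma_N^{\alpha}}$ from property (4), which simplifies the bound to $\left(\alpha\,\kappa_{\alpha}\right)^{\frac{1}{\alpha}}\gamma_N$. Your additional verification that the hypotheses of the theorem hold for stable $\vN$ is a welcome (if minor) extra step that the paper leaves implicit.
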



As an example, consider the Maximum Likelihood (ML) estimator
$\hat{\vtheta}_{\text{ML}}(\vX)$ which is given by,
\begin{equation*}
  \hat{\vtheta}_{\text{ML}}(\vX) = \arg\max_{\vtheta} \, \, \ln p_{\vN}(\vX-\vtheta).
\end{equation*}

Since $\vN$ is unimodal, $\hat{\vtheta}_{\text{ML}}(\vX) = \vX$, and
the \apower of the error $\hat{\vtheta}_{\text{ML}}(\vX) - \vtheta =
\vN$ is $\aPw{\text{e}} = \aPw{N} = (\alpha)^{\frac{1}{\alpha}}
\gamma_N = \left( \frac{d}{J_{\alpha}(\vN)}
\right)^{\frac{1}{\alpha}}$ for which equation~(\ref{CRfin}) holds
true.

\subsection*{The choice of $\alpha$:}
Note that equation~(\ref{CRfin}) establishes a new metric to measure
the average error strength and hence the estimator performance when
the noisy measurements are affected by an additive noise of impulsive
nature. The choice of a specific value of $\alpha$ is straightforward
whenever the noise belongs to the $\alpha$-parameterized domains of
normal attraction of stable variables. The quality of the estimator
$\hat{\vtheta}(\vX)$ is tied to the closeness of $\aPw{\text{e}}$ to
its lower bound, both of which are computable numerically as
previously shown for several probability laws. We mention that it is
not known in general whether equation~(\ref{CRfin}) is tight or
not. The tightness is already known when $\alpha=2$ for
$\hat{\vtheta}(\vX) = \vX$ and $\vN$ is a Gaussian vector. We believe
that answering the tightness question is equivalent to a similar
question about the GFII~(\ref{gfii}).

Finally, a direct implication of equation~(\ref{IPIgen}) is summarized
in the following: let $\aPw{\vX}$ denote the \apower of the random
vector $\vX$ according to equation~(\ref{newpow}). Using~(\ref{maxh}),
\begin{equation*}
  N_{\alpha}(\vX) \leq N_{\alpha}(\aPw{\vX} \atilde{\vZ}) = \frac{d}{J_\alpha( \aPw{\vX} \atilde{\vZ} )},
\end{equation*}
because
\begin{equation*}
  \frac{N_{\alpha}(\aPw{\vX} \atilde{\vZ}) J_{\alpha}(\aPw{\vX} \atilde{\vZ})}{d} = 1.
\end{equation*}

Equation~(\ref{IPIgen}) now yields,
\begin{equation*}
  J_\alpha(\vX) \geq \kappa_\alpha J_{\alpha}(\aPw{\vX} \atilde{\vZ})  = \kappa_\alpha \, \frac{d}{\aPw{\vX}^{\alpha}},
\end{equation*}
which is a generalization of the known fact that for any $\vX$
with covariance matrix of trace $d \sigma^2$, $J_{2}(\vX) \geq
J_{2}(\vZ) = \frac{d}{\sigma^2}$ where $\vZ \sim
\bm{\mathcal{N}}(\mu,\sigma^2 \mat{I})$ is a white Gaussian vector.

\section{Conclusion}
\label{con:Conclusion}

In a typical communication or measurement setup, the observed signal
is a noisy version of the signal of interest. Whether the source of
the noise comes from the equipment heating or an interferer, in many
instances, the effect of the perturbation is modeled in an additive
manner.  Generally, the role of a system designer is to build an
efficient system that recovers the information present in that
quantity of interest. In this work we highlighted various theoretical
aspects of such problems when the noise is heavy tailed, a scenario in
which alpha-stable distributions play a central role and find
applications in diverse fields of engineering and some other
disciplines.

Our main focus was on the parameter estimation problem in estimation
theory, where the basic estimation problem of the location parameter
of an alpha-stable variable is not yet well understood and performance
measures of a given estimator are to be further investigated. Since
the noise variable has an infinite second moment, standard tools such
as the second moment, the MSE and the Fisher information need to be
extended along with some inequalities satisfied by these information
measures.  Though the work of Gonzales~\cite{thgon} was in the
direction of some of these aspects, we believe that it is suitable for
the Cauchy case and not generic to the whole family of symmetric
stable distributions. Additionally, the work in~\cite{thgon} was with
a ``signal processing'' spirit.

We proposed in Section~\ref{seclocpow}, an expression to evaluate the
power of signals in symmetric alpha-stable noise environments. Though
the definition of the \apower has unfamiliar format where the value of
the power is incorporated within a cost function, it depends on an
average of a logarithmically tailed cost function. Besides the
logarithmic tail behavior of the averaged function, the main argument
for suggesting $\aPw{X}$ as defined in Definition~\ref{powdef}, is to
find a definition that is generic for the stable space of noise
distributions, including the Gaussian since stable distributions are
the most common noise models encountered by virtue of the generalized
CLT. Definition~\ref{powdef} is chosen to become the standard
deviation in the Gaussian case in order to unify the order of the
power operator in such a way if the variable is linearly scaled then
the power also scales linearly.  We proved that
Definition~\ref{powdef} defines a space where the alpha-stable noise
is the worst in terms of entropy/randomness which implies that the
alpha-stable channel model is a worst-case scenario whenever there is
an impulsive noise assumption.  This fact mimics the role of the
Gaussian variable among the finite variance space of RVs and
generalize it to an equivalent role of stable variables among the
space of RVs that have a finite power $\aPw{X}$.

A generalized notion of the Fisher information is introduced in
Section~\ref{sc:GFI} and is shown to satisfy standard information
measures properties: positiveness, scalability, additivity, etc. The
newly defined quantity $J_{\alpha}(\cdot)$ is shown to abide by
fundamental identities and relationships such as a chain rule, a
generalized Fisher information inequality and a generalized
isoperimetric inequality for entropies. These lead to a generalized
Cramer-Rao bound proven in Section~\ref{sec:CRbd} which sets a novel
lower bound on the \apower of the estimation error for any estimator
of a location parameter. This bound can be used to characterize the
performance of estimators in impulsive noise environments and
naturally opens the door to the related problems of efficiency and
optimality of estimators.

The newly defined power measure $\aPw{X}$ establishes a novel way to
approach communication theoretic problems. As an example, the
classical approach to the channel capacity problem is done from a
channel input perspective. Under this perspective and for the purpose
of emulating real scenarios, input signals are supposed to abide by
some power constraints such as the second moment constraint. Assuming
that the additive noise would also have a finite second moment, this
approach quantified the different metrics of the channel with respect
to the input power measure irrespective of the noise model. As an
example, the capacity of the linear additive Gaussian channel under an
average power constraint is given by the famous formula ``$C=
\frac{1}{2}\ln (1 + \text{SNR})$'' where the ``$\text{SNR}$'' is the
signal to noise ratio between the variance of the input to that of the
Gaussian noise, hence relating the input power as defined for the
input space to the noise power since the noise falls within the input
space.  Naturally, this approach breaks when the noise is not of the
same ``nature'' as the input space. This is true for impulsive noise
models such as the alpha-stable ones having infinite second moments
which do not belong to the input space of finite power (second moment)
RVs. Since the performance of any adopted strategy at the input is
viewed by its effect at the output end, it seems reasonable to
consider the additive channel while imposing a ``quality'' constraint
on the output. By restricting the output space to satisfy certain
power requirements, we are indirectly taking into consideration the
nature of the noise in the formulation of the constraint which
constructs an input space of variables of the same ``nature'' of the
noise. This is in accordance with the fact that the system designer
has no control over the noise model which is dictated by the channel
and can assume the possibility of choosing from an input space
similar in nature to that of the noise, the input signal that best
overcomes the noise effect. For the linear AWGN channel, exceptionally
the output approach gives exactly the same answer as the input
approach: constraining the output average power implies a constraint
on the input average power.

Finally, we emphasize that the generalized tools and identities
presented in this work constitutes an ``extension'' of the Gaussian
estimation theory to a stable estimation theory in general and may be
viewed as complementary to the works found in the literature by
answering some ``fundamental-limits'' questions.



\appendices
\section{Multivariate Alpha-Stable Distributions, Riesz Potentials and
  Hypergeometric Functions}
\label{asd}

\subsection{Univariate Alpha-Stable Distributions}
\label{appmult}

\begin{definition}[Univariate Stable Distributions]
  
  A univariate stable RV $X \sim \mathcal{S}(\alpha, \beta, \gamma,
  \delta)$ is one with characteristic function,
  \begin{align*}
    &\phi_{X}(\omega) = \exp\left[i \delta \omega -\gamma^{\alpha} \left(1 - i \beta \sgn(\omega) 
        \Phi(\omega) \right) |\omega|^{\alpha}\right]\\
    &\hspace{-0.25cm}\biggl(0< \alpha \leq 2 \quad -1 \leq \beta \leq 1 \quad \gamma > 0 \quad \delta \in \Reals \biggr),
  \end{align*}
  where $\sgn(\omega)$ is the sign of $\omega$ and $\Phi(\cdot)$ is
  given by:
  \begin{equation*}
    \Phi(\omega) = \left\{ \begin{array}{ll} 
        \displaystyle \tan \left(\frac{\pi \alpha}{2} \right) \quad & \alpha \neq 1 \\
        \displaystyle  -\frac{2}{\pi} \ln|\omega| \quad & \alpha = 1.
      \end{array} \right.
  \end{equation*} 
  The constant $\alpha$ is called the ``characteristic exponent'',
  $\beta$ is the ``skewness'' parameter, $\gamma$ is the ``scale''
  parameter ($\gamma^{\alpha}$ is often called the ``dispersion'') and
  $\delta$ is the ``location'' parameter.
\end{definition}

We make the following specifications:
\begin{itemize}
\item Whenever the parameters $\beta = 0$ and $\delta=0$, the stable
  variable is symmetric and denoted $X \sim
  \mathcal{S}(\alpha,\gamma)$.

\item The case where $\alpha = 2$ corresponds to the Gaussian RV $X
  \sim \mathcal{S}(2,0,\gamma,\delta) =
  \mathcal{N}(\delta,2\gamma^2)$.

\item Whenever $|\beta| = 1$, the alpha-stable variable is called
  totally-skewed. Furthermore, it is one sided when $\alpha < 1$.
\end{itemize}

\subsection{Multivariate Alpha-Stable Distributions}

\begin{definition}[Sub-Gaussian Symmetric Alpha-Stable]
  \cite[p.78 Definition 2.5.1]{sam1994} 
  \label{def:subgauss}

  Let $0 < \alpha <2$ and let $\mathcal{A} \sim \mathcal{S} \left(
    \frac{\alpha}{2}, 1, \left( \text{cos} \left( \frac{\pi \alpha}{4}
      \right) \right)^{\frac{2}{\alpha}}, 0 \right)$ be a totally
  skewed one sided alpha-stable distribution. Define $\vG =
  (G_1,\cdots,G_d)$ to be a zero mean Gaussian vector in
  $\Reals^{d}$. Then the random vector $\vN = (A^{\frac{1}{2}}G_1,
  \cdots , A^{\frac{1}{2}}G_d)$ is called a sub-Gaussian symmetric
  alpha-stable (\SaS) random vector in $\Reals^{d}$ with underlying
  vector $\vG$. In particular, each component $A^{\frac{1}{2}}G_i$, $1
  \leq i \leq d$ is a \SaS variable with characteristic exponent
  $\alpha$.  In this work we only use sub-Gaussian \SaS vectors such
  that the underlying Gaussian vector has IID zero-mean components
  with variance $2\gamma^2$, for some $\gamma > 0$.  We denote such a
  vector as $\bm{\mathcal{S}} \left( \alpha, \gamma \right)$.
\end{definition}  

\begin{proposition}~\cite[p.79 Proposition 2.5.5]{sam1994}
  \label{thdef0}
  
  Let $\vN = (N_1, \cdots , N_d)$ be a sub-Gaussian \SaS with an
  underlying Gaussian vector having IID zero-mean components with
  variance $2 \gamma^2$, for some $\gamma > 0$. Then, the
  characteristic function of $\vN$ is:
  \begin{equation*}
    \phi_{\vN} (\vomega)  = e^{-\gamma^{\alpha} \|\vomega\|^{\alpha}}.
  \end{equation*}
  The RVs $N_i$s, $1 \leq i \leq d$, are dependent and each
  distributed according to $\mathcal{S}(\alpha,\gamma)$.
\end{proposition}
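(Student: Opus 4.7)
The plan is to compute $\phi_{\vN}(\vomega) = \E{e^{i \vomega^t \vN}}$ by conditioning on the mixing variable $A$ that appears in the representation $\vN = (A^{1/2}G_1, \dots, A^{1/2}G_d)$ from Definition~\ref{def:subgauss}. Since conditional on $A = a$ the vector $\vN$ is a zero-mean Gaussian with covariance matrix $2 a \gamma^{2} \mat{I}$, the conditional characteristic function is the standard Gaussian one,
\begin{equation*}
  \E{e^{i \vomega^t \vN} \mid A = a} = e^{-a \gamma^{2} \|\vomega\|^{2}}.
\end{equation*}
Taking the expectation over $A$ yields
\begin{equation*}
  \phi_{\vN}(\vomega) = \E{e^{-A \gamma^{2} \|\vomega\|^{2}}} = \mathcal{L}_{A}\!\left(\gamma^{2} \|\vomega\|^{2}\right),
\end{equation*}
where $\mathcal{L}_{A}(\cdot)$ denotes the Laplace transform of the non-negative random variable $A$.

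The next step is to invoke the known Laplace transform of a totally skewed, one-sided, $(\alpha/2)$-stable variable. For $A \sim \mathcal{S}\!\left(\frac{\alpha}{2}, 1, \left(\cos\frac{\pi \alpha}{4}\right)^{2/\alpha}, 0\right)$ with $0 < \alpha < 2$, the scale parameter has been chosen precisely so that
\begin{equation*}
  \mathcal{L}_{A}(s) = \E{e^{-sA}} = e^{-s^{\alpha/2}}, \qquad s \geq 0.
\end{equation*}
I would derive this by starting from the stable characteristic function on the imaginary axis, analytically continuing the expression $\phi_A(\omega) = \exp[-\sigma^{\alpha/2}(1 - i \sgn(\omega) \tan(\pi\alpha/4))|\omega|^{\alpha/2}]$ to $\omega = i s$, and observing that $1 - i \sgn(i) \tan(\pi\alpha/4)$ combined with $(is)^{\alpha/2} = s^{\alpha/2} e^{i\pi\alpha/4}$ collapses to a purely real exponent exactly when $\sigma = (\cos(\pi\alpha/4))^{2/\alpha}$. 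Substituting $s = \gamma^{2}\|\vomega\|^{2}$ then gives $\phi_{\vN}(\vomega) = e^{-\gamma^{\alpha} \|\vomega\|^{\alpha}}$, as required.

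For the marginal statement, I would set $\vomega = (\omega, 0, \dots, 0)$ in the formula above, which yields $\phi_{N_1}(\omega) = e^{-\gamma^{\alpha}|\omega|^{\alpha}}$; this is exactly the characteristic function of $\mathcal{S}(\alpha, \gamma)$, and the same holds for each $N_i$ by symmetry. Dependence across the $N_i$'s is immediate from the representation itself: all components share the common scalar multiplier $A^{1/2}$, so (for $\alpha < 2$, where $A$ is non-degenerate) the joint characteristic function does not factor into a product of the marginals, and the $N_i$'s are dependent even though they are marginally identically distributed.

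The main obstacle is the analytic-continuation argument that pins down the Laplace transform of the totally skewed one-sided stable law with the specific normalization $(\cos(\pi\alpha/4))^{2/\alpha}$; everything else is routine conditioning and substitution. The degenerate boundary case $\alpha = 2$ would be handled separately, where $A$ collapses to a constant and $\vN$ is genuinely Gaussian with independent components.
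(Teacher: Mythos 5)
Your proof is correct and is essentially the standard argument behind the cited result: the paper itself offers no proof, deferring to Samorodnitsky--Taqqu, and your route (condition on the mixing variable $A$, reduce to the Laplace transform $\E{e^{-sA}} = e^{-s^{\alpha/2}}$ of the totally skewed one-sided $(\alpha/2)$-stable law with the normalization $\left(\cos\frac{\pi\alpha}{4}\right)^{2/\alpha}$, then read off marginals and non-factorization of the joint characteristic function) is exactly the one in that reference. The only step you rightly flag as nontrivial, the analytic continuation pinning down $\mathcal{L}_A$, is itself a standard cited fact there (Proposition 1.2.12 of the same book), and your sketch of it is sound.
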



\begin{property}[Isotropic property]
  \label{prop:tail}

  Let $\vN = (N_1, \cdots , N_d) \sim \bm{\mathcal{S}} (\alpha,
  \gamma)$. Then, for ${\bf n} \neq {\bf 0}$
  \begin{equation}
    p_{\vN}({\bf n}) = f(\|{\bf n}\|) = \frac{\Gamma\left(\frac{d}{2}\right)}{2\pi^{\frac{d}{2}}}\|{\bf n}\|^{1-d}
    p_{R}\left(\|{\bf n}\|\right),
    \label{eq:rotation}
  \end{equation}
  where $R = \|\vN\|$ is the amplitude of $\vN$ and $p_{R}(\cdot)$ is
  its density function. Furthermore, we have:
  \begin{equation*}
    \lim_{r \rightarrow +\infty} r^{1+\alpha}p_{R}(r) =  \alpha \gamma^{\alpha}  k_1, \qquad 
    k_1 = 2^{\alpha}\frac{\sin \left(\frac{\pi \alpha}{2}\right)}{\frac{\pi \alpha}{2}}\frac{\Gamma\left(\frac{2+ \alpha}{2}\right) 
      \Gamma\left(\frac{d+ \alpha}{2}\right)}{\Gamma\left(\frac{d}{2}\right)}.
  \end{equation*}
\end{property}

\begin{proof}
  Refer to~\cite{Nolan2013}.
\end{proof}

Note that by equation~(\ref{eq:rotation}), $\bm{\mathcal{S}} \left(
  \alpha, \gamma \right)$ is isotropic.

\begin{property}
  \label{prop:bd}

  Let $\vN = (N_1, \cdots , N_d) \sim \bm{\mathcal{S}} \left( \alpha,
    \gamma \right)$. Then $\frac{\partial p_{\vN}}{\partial n_i} \,
  \frac{1}{p_{\vN}} \left ({\bf n}\right)$ is bounded for all $1 \leq
  i \leq d$.
\end{property}

\begin{proof}
  Since $p_{\vN}(\cdot)$ is infinitely differentiable, it is enough to
  show boundedness at large values of $r = \|\vn\|$. We use the
  results of Property~\ref{prop:tail} to write
  \begin{equation}
    \label{eq:tailvec}
    \lim_{r \rightarrow \infty} \frac{p_{\vN}({\bf n})}{\|{\bf n}\|^{-d-\alpha}} = \lim_{r \rightarrow \infty} \frac{\Gamma\left(\frac{d}{2}
      \right)}{2\pi^{\frac{d}{2}}} \frac{\|{\bf n}\|^{1-d}p_{R}\left(\|{\bf n}\|\right)}{\|{\bf n}\|^{-d-\alpha}} =  
    \lim_{r \rightarrow \infty} \frac{\Gamma\left(\frac{d}{2}\right)}{2\pi^{\frac{d}{2}}} r^{1 + \alpha} p_R(r) = \frac{\alpha 
      \gamma^{\alpha} k_1 \Gamma\left(\frac{d}{2}\right)}{2\pi^{\frac{d}{2}}} = k_2,
  \end{equation}
  where $k_1$ is defined in Property~\ref{prop:tail}. Using
  l'H\^{o}pital's rule, we write
  \begin{eqnarray}
    k_2 &=& \lim _{r \rightarrow \infty} \frac{\frac{d}{dr}p_{\vN}({\bf n})}{\frac{d}{dr}r^{-d-\alpha}} \nonumber \\
    &=&  -\frac{1}{d + \alpha}  \lim _{r \rightarrow \infty} \frac{\frac{\partial}{\partial n_1}p_{\vN}({\bf n}) \times 
      \frac{\partial n_1}{\partial r} + \cdots + \frac{\partial }{\partial n_d}p_{\vN}({\bf n}) \times \frac{\partial n_d}
      {\partial r}}{r^{-d-\alpha-1}}\nonumber \\
    &=& -\frac{1}{d + \alpha}  \lim _{r \rightarrow \infty} \frac{r\left(\frac{1}{n_1}\frac{\partial}{\partial n_1}p_{\vN}({\bf n})
        + \cdots + \frac{1}{n_d}\frac{\partial}{\partial n_d}p_{\vN}({\bf n})\right)}{r^{-d-\alpha-1}}\nonumber\\
    &=& -\frac{1}{d + \alpha} \lim _{r \rightarrow \infty} \frac{\frac{1}{n_1}\frac{\partial}{\partial n_1}p_{\vN}({\bf n})
      + \cdots + \frac{1}{n_d}\frac{\partial}{\partial n_d}p_{\vN}({\bf n})}{r^{-d-\alpha-2}} \label{eq:lim}
  \end{eqnarray}

  Using equation~(\ref{eq:rotation}), $p_{\vN}({\bf n})$ is
    decreasing in $r$~\cite[Section 2.1]{Nolan2013}. Therefore,
  $\frac{1}{n_i}\frac{\partial}{\partial n_i} p_{\vN} ({\bf n})$ is
  negative for all $1 \leq i \leq d$ and $-\frac{1}{n_i}
  \frac{\partial}{\partial n_i} p_{\vN} ({\bf n}) \leq
  -\left(\frac{1}{n_1} \frac{\partial}{\partial n_1} p_{\vN} ({\bf n})
    + \cdots + \frac{1}{n_d} \frac{\partial}{\partial n_d} p_{\vN}
    ({\bf n}) \right)$. Hence, equation~(\ref{eq:lim}) implies
  \begin{eqnarray*}
    k_2 &=& -\frac{1}{d + \alpha} \lim _{r \rightarrow \infty} \frac{\frac{1}{n_1}\frac{\partial}{\partial n_1}p_{\vN}({\bf n})
      + \cdots + \frac{1}{n_d}\frac{\partial}{\partial n_d}p_{\vN}({\bf n})}{r^{-d-\alpha-2}}\\
    &\geq&\frac{1}{d + \alpha} \lim _{r \rightarrow \infty} \frac{-\frac{1}{n_i}\frac{\partial}{\partial n_i}p_{\vN}({\bf n})}
    {r^{-d-\alpha-2}},
  \end{eqnarray*}
  which implies that there exists a constant $\kappa_{i}$ such
  that at large values of $r$, $-\frac{1}{n_i}
  \frac{\partial}{\partial n_i} p_{\vN} ({\bf n}) \leq \kappa_{i} \,
  r^{-d-\alpha-2}$ for $1 \leq i \leq d$. The fact that $\lim_{r
    \rightarrow +\infty} r^{d+ \alpha} p_{\vN} ({\bf n}) = k_2$
  completes the proof.
\end{proof}

\subsection{Riesz Potentials}
\label{apprie}

\begin{definition}[Riesz Potentials]~\cite[p.117 Section 1]{stein70}

  Let $ 0 < \nu < 1$. The Riesz potential $I_{\nu}(f)(x)$ for a
  sufficiently smooth $f: \Reals^d \rightarrow \Reals$ having a
  sufficient decay at $\infty$ is given by:
  \begin{equation*}
    I_{\nu}(f)(\vx) = \frac{1}{\kappa(\nu)} \int_{\Reals^d} \|\vx-\vy\|^{-d+\nu}f(\vy)\,d\vy, \qquad
    \kappa(\nu) = \pi^{\frac{d}{2}}2^{\nu} \frac{\Gamma \left(
        \frac{\nu}{2} \right)}{\Gamma \left( \frac{d}{2} - \frac{\nu}{2}
      \right)}.
  \end{equation*}
\end{definition}

\begin{property}
  \label{propro}

  Among other properties, $\text{I}_{\nu}(f)$ satisfies the following:
  \begin{itemize}
  \item $\mathcal{F}\left(\text{I}_{\nu}(f)\right)(\vomega) =
    \|\vomega\|^{-\nu} \mathcal{F}(f(\vx))(\vomega)$ in the
    distributional sense.
  \item $\text{I}_{0}(f)(\vx) \eqdef \lim_{\nu \rightarrow 0}
    \text{I}_{\nu}(f)(\vx)= f(\vx)$.
  \item Whenever $\int |f|(\vx) \text{I}_{\nu}(|g|)(\vx) \, d\vx$ is
    finite, we have:
    \begin{equation*}
      \int f(\vx) \text{I}_{\nu}(g)(\vx) \, d\vx= \int
      \text{I}_{\nu}(f)(\vx) g(\vx)\,d\vx.
    \end{equation*}
  \end{itemize}
\end{property}

\subsection{Hypergeometric Functions}

\begin{definition}[Gauss Hypergeometric Functions]

  For generic parameters $a, b, c$, the Gauss hypergeometric
  function $_2F_1(a,b;c;z)$ is defined as the following power series:
  \begin{equation*}
    _2F_1(a,b;c;z) = \sum_{i = 0}^{+\infty} \frac{(a)_i(b)_i}{(c)_i i!}z^i, \qquad |z| < 1.
  \end{equation*}
  Outside of the unit circle $|z| < 1$, the function is defined as the
  analytic continuation of this sum with respect to $z$, with the
  parameters $a$, $b$ and $c$ held fixed. The notation $(d)_i$ is
  defined as:
  \begin{equation*}
    (d)_i = \left\{\begin{array}{ll} 
        \displaystyle 1 \quad & i = 0\\
        \displaystyle d(d+1) \dots (d+i-1) \quad & i > 0.
      \end{array} \right.
  \end{equation*}
\end{definition}

\begin{proposition}
  The Gauss hypergeometric function $_2F_1(a,b;c;z)$ satisfies the
  following property:
  \begin{equation*}
    _2F_1(a,b;c;z) = (1-z)^{-a} _2F_1\left(a,c-b;c;\frac{z}{z-1}\right), \,\,\,\,\,\,\, z \notin (1,+\infty).
  \end{equation*}
  
\end{proposition}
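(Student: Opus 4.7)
The statement is the classical Pfaff transformation of the Gauss hypergeometric function, and my plan is to establish it first on a restricted region by a direct series manipulation and then extend the equality to the cut plane by analytic continuation. I would fix parameters $a,b,c$ with $c$ not a non-positive integer and restrict to the half-plane $\text{Re}(z) < 1/2$, on which both $|z| < 1$ and $\left| z/(z-1) \right| < 1$ hold simultaneously; there all series at play converge absolutely and $(1-z)^{-a}$ admits the binomial expansion $\sum_{k \geq 0} \frac{(a)_k}{k!} z^k$ under the principal branch.

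The heart of the computation is to write
\[
(1-z)^{-a}\,{}_2F_1\!\left(a,c-b;c;\frac{z}{z-1}\right) = \sum_{n \geq 0} \frac{(a)_n (c-b)_n}{(c)_n\, n!}(-1)^n\, z^n\, (1-z)^{-a-n},
\]
re-expand $(1-z)^{-a-n}$ in its binomial series, and form the Cauchy product in $z$. Using the Pochhammer identity $(a)_n (a+n)_{m-n} = (a)_m$, the coefficient of $z^m$ collapses to $\frac{(a)_m}{m!} \sum_{n=0}^{m} \binom{m}{n}(-1)^n \frac{(c-b)_n}{(c)_n}$. Since $\binom{m}{n}(-1)^n = (-m)_n/n!$, the inner sum is the terminating series ${}_2F_1(-m, c-b; c; 1)$, which by the Chu--Vandermonde identity equals $(c-(c-b))_m/(c)_m = (b)_m/(c)_m$. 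Thus the coefficient of $z^m$ is $\frac{(a)_m (b)_m}{(c)_m\, m!}$, matching ${}_2F_1(a,b;c;z)$, and the identity holds throughout $\text{Re}(z) < 1/2$.

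To promote this to all $z \notin (1,+\infty)$, I would invoke analytic continuation. The left-hand side is holomorphic on the slit plane $\mathbf{C} \setminus [1,+\infty)$ by the definition recalled in the excerpt. On the right-hand side, $(1-z)^{-a}$ is holomorphic on the same slit plane under the principal branch, and the M\"obius map $z \mapsto z/(z-1) = 1 + \tfrac{1}{z-1}$ sends the slit plane into itself: one checks that $z/(z-1) \in [1,+\infty)$ if and only if $z \in [1,+\infty)$, so the composite ${}_2F_1(a, c-b; c; z/(z-1))$ is holomorphic on $\mathbf{C} \setminus [1,+\infty)$ as well. Both sides being holomorphic on a connected open set and agreeing on the open subset $\text{Re}(z) < 1/2$, the identity theorem yields equality throughout.

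The main delicate point is branch-tracking: one must verify that the principal branches on each side of the equation are compatible at a common reference point (the natural choice being $z = 0$, where both sides evaluate to $1$), so that the two analytic continuations on the cut plane truly coincide rather than differing by a multiplicative factor of the form $e^{2\pi i k a}$. Once this normalization is pinned down, the identity theorem closes the argument without further calculation.
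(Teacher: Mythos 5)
The paper states this proposition without proof: it is the classical Pfaff transformation, recorded in the appendix as a standard fact about the Gauss hypergeometric function and used only once (to rewrite equation~(\ref{eqfor1}) in the proof of the generalized isoperimetric inequality). So there is no in-paper argument to compare against; your proof is a correct, self-contained derivation along the standard textbook lines (series rearrangement plus Chu--Vandermonde, then analytic continuation via the identity theorem), and the branch bookkeeping you flag at the end -- normalizing both sides to the value $1$ at $z=0$ under the principal branch of $(1-z)^{-a}$, together with the observation that $z\mapsto z/(z-1)$ maps $\mathbf{C}\setminus[1,+\infty)$ into itself -- is exactly the point that makes the continuation to the full cut plane legitimate.

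One small correction to the opening of your argument: the half-plane $\operatorname{Re}(z)<1/2$ does guarantee $\left|z/(z-1)\right|<1$, but it does \emph{not} guarantee $|z|<1$ (take $z=-2$). Moreover, the Cauchy-product rearrangement you perform requires absolute convergence of the double series $\sum_{n,j}\left|\tfrac{(a)_n(c-b)_n}{(c)_n n!}\right|\,|z|^{n+j}\left|\tfrac{(a+n)_j}{j!}\right|$, which holds when $|z|/(1-|z|)<1$, i.e.\ on the disk $|z|<1/2$, not on the whole half-plane. This is harmless -- the identity theorem only needs agreement on some nonempty open subset of the connected slit plane -- but you should state the working region as $|z|<1/2$ (or any small disk about the origin) rather than $\operatorname{Re}(z)<1/2$, and note explicitly that absolute convergence there is what licenses collecting the coefficient of $z^m$. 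With that adjustment the proof is complete; you should also record the standing assumption that $c$ is not a non-positive integer, which the paper's statement leaves implicit in the phrase ``generic parameters.''
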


\section{Properties of $\aPw{\vX}$}
\label{ap:prop_aPw}

We consider random vectors $\vX$ and $\vY$ such that:
\begin{equation*} 
  \left\{ \begin{array}{ll}
      \E{\ln\left(1 + \| \vX \|\right)} < \infty, & \text{ when considering cases where } \alpha < 2. \\
      \E{ \| \vX \|^2} < \infty & \text{ when considering cases where } \alpha = 2.
    \end{array} \right.
\end{equation*} 

We first start by establishing the following Lemmas:
\begin{lemma}
  \label{lm:aPw_1}
  Let $\vX \neq {\bf 0}$ and define the function of $P > 0$,
  \begin{equation*}
    g\left(\text{P}\right) \eqdef - \E { \ln p_{{\bf \atilde{Z}}} \left( \frac{\vX}{\text{P}} \right) }.
  \end{equation*}
  The function $g\left(\text{P}\right)$ is continuous and decreasing
  on $\Reals^{+} \setminus \{0\}$.
\end{lemma}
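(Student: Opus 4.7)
The plan is to establish both properties by exploiting two structural features of the reference density $p_{\atilde{\vZ}}$: its rotational symmetry combined with strict monotonicity in $\|\vy\|$, and its tail behavior (polynomial of order $d+\alpha$ when $\alpha<2$ by Property~\ref{prop:tail}, Gaussian when $\alpha=2$). The two cases $\alpha<2$ and $\alpha=2$ will be treated in parallel, differing only in the dominator used in a dominated convergence argument.

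\textbf{Monotonicity.} I would first record that $p_{\atilde{\vZ}}(\vy)$ is a strictly decreasing function of $\|\vy\|$ on $[0,+\infty)$. For $\alpha=2$ this is immediate from the standard Gaussian form. For $\alpha<2$ it follows from the sub-Gaussian representation $\atilde{\vZ}=A^{1/2}\vG$ of Definition~\ref{def:subgauss}: conditioning on the positive scalar $A$ gives a Gaussian density strictly decreasing in $\|\vy\|$, a property preserved by the mixture in $A$. Given $0<P_1<P_2$, on the event $\{\vX\neq{\bf 0}\}$, which has positive probability by hypothesis, one has $\|\vX/P_1\|>\|\vX/P_2\|$, hence
\[
-\ln p_{\atilde{\vZ}}(\vX/P_1)\;>\;-\ln p_{\atilde{\vZ}}(\vX/P_2)
\]
pointwise. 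Taking expectations, which are finite by the argument below, yields $g(P_1)>g(P_2)$.

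\textbf{Continuity.} Fix $P_0>0$ and a compact interval $[a,b]\subset\Reals^+\setminus\{0\}$ containing $P_0$ in its interior. For any sequence $P_n\to P_0$ eventually inside $[a,b]$, the continuity and positivity of $p_{\atilde{\vZ}}$ give pointwise convergence $-\ln p_{\atilde{\vZ}}(\vX/P_n)\to -\ln p_{\atilde{\vZ}}(\vX/P_0)$. The task is then to produce a $P$-independent integrable majorant. Boundedness and positivity of $p_{\atilde{\vZ}}$ yield a uniform lower bound $-\ln p_{\atilde{\vZ}}(\vy)\geq -C_0$. For the upper bound, I invoke Property~\ref{prop:tail}: when $\alpha<2$, $p_{\atilde{\vZ}}(\vy)\sim\text{const}\cdot\|\vy\|^{-d-\alpha}$ as $\|\vy\|\to\infty$, which gives $-\ln p_{\atilde{\vZ}}(\vy)\leq A_1+A_2\ln(1+\|\vy\|)$ for suitable $A_1,A_2>0$; substituting $\vy=\vX/P$ and bounding $\|\vX/P\|\leq\|\vX\|/a$ for $P\in[a,b]$ produces the dominator $A_1+A_2\ln(1+\|\vX\|/a)$, integrable by the assumption $\E{\ln(1+\|\vX\|)}<\infty$. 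When $\alpha=2$, the explicit Gaussian form gives $-\ln p_{\atilde{\vZ}}(\vy)=\|\vy\|^2/2+(d/2)\ln(2\pi)$, dominated by $\|\vX\|^2/(2a^2)+\text{const}$, integrable by $\E{\|\vX\|^2}<\infty$. Lebesgue's Dominated Convergence Theorem then yields $g(P_n)\to g(P_0)$, and as a byproduct $g$ takes finite values, legitimizing the expectation-taking step in the monotonicity argument.

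\textbf{Main obstacle.} The principal subtlety lies in verifying the strict monotonicity of $p_{\atilde{\vZ}}$ in $\|\vy\|$ for $\alpha<2$, since no closed-form PDF is available. The sub-Gaussian mixture representation circumvents this cleanly: strict monotonicity holds conditionally on every value of the positive subordinator and is preserved upon integrating it out. The remainder of the proof is a standard dominated-convergence exercise once the correct moment hypothesis on $\vX$ is matched to the correct tail regime of $p_{\atilde{\vZ}}$.
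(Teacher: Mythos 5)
Your proof is correct and follows essentially the same route as the paper's: monotonicity from the pointwise decrease of $p_{\atilde{\vZ}}$ in $\|\cdot\|$ together with the existence of a non-zero point of increase of $\vX$, and continuity via dominated convergence using the $\Theta\left(\|\vx\|^{-d-\alpha}\right)$ tail (resp.\ the Gaussian form) matched to the assumption $\E{\ln(1+\|\vX\|)}<\infty$ (resp.\ $\E{\|\vX\|^2}<\infty$). The only cosmetic difference is your explicit logarithmic dominator in place of the paper's majorant $\bigl|\ln p_{\atilde{\vZ}}(\vx/\tilde{\text{P}})\bigr|$ for a fixed smaller $\tilde{\text{P}}$, plus your (welcome but not essential) verification of the radial monotonicity of $p_{\atilde{\vZ}}$ via the sub-Gaussian mixture representation.
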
 

\begin{proof}  \mbox{}
   
  \underline{Continuity:} Let $\text{P}_0 > 0$, then
  \begin{eqnarray*}
    -\lim_{\text{P} \rightarrow \text{P}_0} \E{ \ln p_{{\bf \atilde{Z}}} \left(\frac{\vX}{\text{P}} \right) }
    &=& -\lim_{\text{P} \rightarrow \text{P}_0}\int \ln p_{{\bf \atilde{Z}}}\left(\frac{\vx}{\text{P}}\right)\,dF(\vx)\\
    & = &- \int \lim_{\text{P} \rightarrow \text{P}_0}  \ln p_{{\bf \atilde{Z}}}\left(\frac{\vx}{\text{P}}\right)\,dF(\vx)\\
    &= & - \int \ln p_{{\bf \atilde{Z}}}\left(\frac{\vx}{\text{P}_{0}}\right)\,dF(\vx),
  \end{eqnarray*} 
  where in order to write the last equation we used the fact that
  $p_{{\bf \atilde{z}}}(\cdot)$ is continuous on $\Reals^d$. The
  interchange in the order between the limit and the integral signs is
  justified using DCT as follows: In a neighbourhood of $\text{P}_0$,
  choose a $\tilde{\text{P}}$ such that $0< \tilde{\text{P}} <
  \text{P}$. Since $p_{{\bf \atilde{Z}}}(\vx)$ is rotationally
  symmetric and decreasing in $\| \vx \|$,
  \begin{equation*}
    \left|\ln p_{{\bf \atilde{Z}}}\left(\frac{\vx}{\text{P}}\right)\right| \leq  
    \left|\ln p_{{\bf \atilde{Z}}}\left(\frac{\vx}{\tilde{\text{P}}}\right)\right|, \qquad \forall \, \vx \in \Reals^d,
  \end{equation*}
  which equality only at $\vx = {\bf 0}$. Therefore,
  \begin{equation*}
    \E{\left| \ln p_{{\bf \atilde{Z}}} \left(\frac{\vX}{\text{P}} \right) \right| }
    \leq \E{\left| \ln p_{{\bf \atilde{Z}}} \left( \frac{\vX}{\tilde{\text{P}}} \right) \right|},
  \end{equation*} 
  which is finite because $\E{\ln\left(1 + \| \vX \|\right)} < \infty$
  whenever ${\bf \atilde{Z}}$ is sub-Gaussian by virtue of the fact
  that $ p_{{\bf \atilde{Z}}} ( \vx ) = \Theta \left( \frac{1}{\| \vx
      \|^{d+\alpha}} \right)$\footnote[9]{In this work, we say that
    $f(x) = \Omega \left(g(x)\right)$ if and only if $\exists \,
    \kappa > 0, c > 0$ such that $|f(x)| \geq \kappa |g(x)|, \forall
    |x| \geq c$.  Equivalently, we say that $g(x) =
    O\left(f(x)\right)$. We say that $f(x) = \Theta \left(g(x)\right)$
    if and only if $f(x) = O \left(g(x)\right)$ and $f(x) = \Omega
    \left(g(x)\right)$.}
  (see Appendix~\ref{asd}) and because it is assumed that $\E{\|{\bf
      X}\|^2} < \infty$ whenever $\alpha = 2$ and ${\bf \twotilde{Z}}$
  is Gaussian.

  \underline{Monotonicity:} Let $0< \tilde{\text{P}} <
  \text{P}$. Since $p_{{\bf \atilde{Z}}}(\vx)$ is rotationally
  symmetric and decreasing in $\|\vx\|$, $p_{{\bf \atilde{Z}}}(\vx /
  \text{P}) \geq p_{{\bf \atilde{Z}}}(\vx / \tilde{\text{P}})$ for all
  $\vx$, with equality only at ${\bf 0}$. Since $\vX \neq {\bf 0}$,
  there exists a non-zero point of increase\footnote{A vector $\vx$ is
    said to be a point of increase if and only if, $\Pr(\|\vX - \vx\|
    < \eta)>0$ for all $\eta>0$.} $\vx_o$, and $g \left( \text{P}
  \right)$ is decreasing in $\text{P} > 0$.
\end{proof}

We evaluate next the limit values of $g(\text{P})$ at $0$ and
$+\infty$.

\begin{lemma}
  \label{lm:aPw_2}
  In the limit, 
  \begin{equation*}
    \lim_{\text{P} \rightarrow 0}  g\left(\text{P}\right) = +\infty \quad \& \quad 
    \lim_{\text{P} \rightarrow +\infty}g(\text{P}) < h\left({\bf \atilde{Z}}\right).
  \end{equation*}
\end{lemma}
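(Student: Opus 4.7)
The plan is to handle the two limits separately, exploiting the structure of $p_{\bf \atilde{Z}}$ established in Appendix~\ref{asd}: it is continuous, rotationally symmetric, strictly decreasing in $\|\vx\|$, attains its maximum at the origin, and decays to $0$ at infinity (with polynomial tail $\Theta(\|\vx\|^{-d-\alpha})$ in the sub-Gaussian case).

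For the limit $\text{P}\to 0^{+}$, I would use monotone convergence. Fix any $\vx \neq {\bf 0}$. As $\text{P}$ decreases, $\|\vx/\text{P}\|$ increases, so $p_{\bf \atilde{Z}}(\vx/\text{P})$ decreases, and consequently $-\ln p_{\bf \atilde{Z}}(\vx/\text{P})$ increases. Moreover, $\|\vx/\text{P}\| \to \infty$ forces $p_{\bf \atilde{Z}}(\vx/\text{P}) \to 0$, hence $-\ln p_{\bf \atilde{Z}}(\vx/\text{P}) \to +\infty$. Because $\vX \neq {\bf 0}$, the event $\{\vX \neq {\bf 0}\}$ has positive probability, so the monotone convergence theorem applied to the family $\{-\ln p_{\bf \atilde{Z}}(\vX/\text{P})\}_{\text{P}>0}$ (which is nondecreasing as $\text{P}\downarrow 0$, up to an additive shift) gives $g(\text{P}) \to +\infty$.

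For the limit $\text{P}\to+\infty$, I would apply dominated convergence to conclude that
\begin{equation*}
  \lim_{\text{P}\to+\infty} g(\text{P}) = -\ln p_{\bf \atilde{Z}}({\bf 0}).
\end{equation*}
Pointwise convergence is immediate from the continuity of $p_{\bf \atilde{Z}}$. For domination, note that for $\text{P} \geq 1$ and every $\vx$, $\|\vx/\text{P}\| \leq \|\vx\|$, so by monotonicity $p_{\bf \atilde{Z}}(\vx/\text{P}) \geq p_{\bf \atilde{Z}}(\vx)$, giving $-\ln p_{\bf \atilde{Z}}(\vx/\text{P}) \leq -\ln p_{\bf \atilde{Z}}(\vx)$; in the other direction $p_{\bf \atilde{Z}}(\vx/\text{P}) \leq p_{\bf \atilde{Z}}({\bf 0})$ yields $-\ln p_{\bf \atilde{Z}}(\vx/\text{P}) \geq -\ln p_{\bf \atilde{Z}}({\bf 0})$. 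Hence $|\ln p_{\bf \atilde{Z}}(\vx/\text{P})| \leq |\ln p_{\bf \atilde{Z}}(\vx)| + |\ln p_{\bf \atilde{Z}}({\bf 0})|$, and this bound is $F$-integrable by exactly the argument used in Lemma~\ref{lm:aPw_1} (the moment hypothesis $\E{\ln(1+\|\vX\|)}<\infty$ for $\alpha<2$ and $\E{\|\vX\|^2}<\infty$ for $\alpha=2$).

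It remains to show $-\ln p_{\bf \atilde{Z}}({\bf 0}) < h({\bf \atilde{Z}})$. Writing $h({\bf \atilde{Z}}) = \E{-\ln p_{\bf \atilde{Z}}({\bf \atilde{Z}})}$ and using that $p_{\bf \atilde{Z}}$ attains its strict maximum only at ${\bf 0}$, the integrand satisfies $-\ln p_{\bf \atilde{Z}}(\vz) \geq -\ln p_{\bf \atilde{Z}}({\bf 0})$ with strict inequality on a set of positive Lebesgue (and hence positive $p_{\bf \atilde{Z}}$) measure; integrating yields the strict inequality. The main (mild) obstacle is verifying that the integrable dominant exists under the stated moment conditions, which reduces to the polynomial tail behaviour of sub-Gaussian stable densities already invoked in Lemma~\ref{lm:aPw_1}; everything else is routine monotone/dominated convergence.
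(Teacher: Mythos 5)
Your proof is correct and follows essentially the same route as the paper's: for $\text{P}\rightarrow+\infty$ the paper likewise applies dominated convergence (with the same integrable dominant coming from Lemma~\ref{lm:aPw_1}) to obtain the limit $-\ln p_{{\bf \atilde{Z}}}({\bf 0})$, and then concludes $-\ln p_{{\bf \atilde{Z}}}({\bf 0}) < h({\bf \atilde{Z}})$ from the strict decrease of $p_{{\bf \atilde{Z}}}$ in $\|\vx\|$ — you merely spell out that last strict inequality in more detail. The only methodological difference is the limit at $0$: instead of monotone convergence, the paper splits the integral at a radius $\delta$ with $\Pr\left(\|\vX\|\geq\delta\right)>0$ and lower-bounds $g(\text{P})$ by $-\Pr\left(\|\vX\|\leq\delta\right)\ln p_{{\bf \atilde{Z}}}({\bf 0})-\Pr\left(\|\vX\|\geq\delta\right)\ln p_{{\bf \atilde{Z}}}\left(\delta/\text{P}\right)\rightarrow+\infty$, which is slightly more elementary; your MCT argument is equally valid since the family is uniformly bounded below by the constant $-\ln p_{{\bf \atilde{Z}}}({\bf 0})$.
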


\begin{proof} \mbox{}
  
  \underline{The limit at zero:} Since $\vX \neq {\bf 0}$, there exits
  a $\delta > 0$ such that $\text{Pr}\left(\| \vX \| \geq
    \delta\right) > 0$ and
  \begin{eqnarray*}
    g\left(\text{P}\right) &=&- \int_{\| \vx \| \leq \delta}  \ln p_{{\bf \atilde{Z}}}\left(\frac{\vx}{\text{P}}\right)
    \,dF_{\vX}(\vx) - \int_{\| \vx \| \geq \delta}  \ln p_{{\bf \atilde{Z}}}\left(\frac{\vx}{\text{P}}\right)
    \,dF_{\vX}(\vx) \\
    &\geq& -\text{Pr}\left(\|X\| \leq \delta\right)\ln p_{{\bf \atilde{Z}}}(0) - \text{Pr}\left(\|X\| \geq \delta\right)  
    \ln p_{{\bf \atilde{Z}}}\left(\frac{\delta}{\text{P}}\right),
  \end{eqnarray*}
  because $p_{{\bf \atilde{Z}}}(\vx)$ is decreasing in $\| \vx \|$.
  Since $p_{{\bf \atilde{Z}}}(\vx) \rightarrow 0$ as $\| \vx \|
  \rightarrow +\infty$, then $\displaystyle \lim_{\text{P} \rightarrow
    0} g\left(\text{P}\right) = +\infty$.

  \underline{The limit at infinity:} Computing the limit at infinity,
  \begin{eqnarray*}
    \lim_{\text{P} \rightarrow +\infty}g(\text{P}) &=& \lim_{\text{P} \rightarrow +\infty} - \int_{\Reals^d} \ln p_{{\bf \atilde{Z}}}
    \left(\frac{\vx}{\text{P}}\right)\,dF_{\vX}(\vx)\\
    &=& - \int_{\Reals^d} \lim_{\text{P} \rightarrow +\infty}  \ln p_{{\bf \atilde{Z}}}\left(\frac{\vx}{\text{P}}\right)\,dF_{\vX}(\vx)\\
    &=& -\ln p_{{\bf \atilde{Z}}}(0) < h\left({\bf \atilde{Z}}\right),
  \end{eqnarray*}
  where the last inequality is true because $p_{{\bf
      \atilde{Z}}}(\vx)$ is decreasing in $\| \vx \|$. The interchange
  between the limit and the integral sign is due to DCT as shown in
  the proof of Lemma~\ref{lm:aPw_1}.

\end{proof}

\begin{lemma}
  \label{lem:loc}
  Let $\vX$ be a random vector that has a rotationally symmetric PDF
  that is non-increasing in $\| \vx \|$, then
  \begin{equation*}
    - \E{ \ln p_{\atilde{\vZ}}(\vX - \bm{\nu}) } \geq - \E{ \ln p_{\atilde{\vZ}}(\vX ) }, 
    \qquad \forall \, \bm{\nu} \in \Reals^d.
  \end{equation*}
\end{lemma}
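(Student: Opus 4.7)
The plan is to reduce the inequality to a geometric monotonicity property of $\vX$-masses of balls, via a layer-cake decomposition of $G(\vx) \eqdef -\ln p_{\atilde{\vZ}}(\vx)$ combined with a reflection argument; the case $\bm{\nu} = \bm{0}$ being trivial, I shall assume $\bm{\nu} \neq \bm{0}$. Since $p_{\atilde{\vZ}}$ is rotationally symmetric and non-increasing in $\|\vx\|$, the function $G$ is radial and non-decreasing in $\|\vx\|$, attaining its minimum $G(\bm{0})$ at the origin; in particular, for every $s \geq 0$ the sub-level set $B_s \eqdef \{\vy : G(\vy) \leq G(\bm{0}) + s\}$ is a closed ball centered at the origin, of some radius $r_s$.

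Applying the layer-cake identity $G(\vx) - G(\bm{0}) = \int_0^{\infty}\mathbf{1}_{\vx \notin B_s}\,ds$ at the two points $\vX - \bm{\nu}$ and $\vX$, taking expectations, and exchanging expectation and integration via Fubini, I arrive at
\begin{equation*}
\E{G(\vX - \bm{\nu})} - \E{G(\vX)} = \int_0^{\infty}\bigl[\Pr(\vX \in B_s) - \Pr(\vX \in B_s + \bm{\nu})\bigr]\,ds.
\end{equation*}
It then suffices to show that each integrand is non-negative. To that end I introduce the reflection $\sigma$ across the hyperplane bisecting the segment from $\bm{0}$ to $\bm{\nu}$; this is an isometry with unit Jacobian, satisfies $\sigma(B_s) = B_s + \bm{\nu}$, and fulfills $\|\sigma(\vx)\| = \|\vx - \bm{\nu}\|$. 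Hence $\sigma$ restricts to a measure-preserving involution between $B_s \setminus (B_s + \bm{\nu})$ and $(B_s + \bm{\nu}) \setminus B_s$, and every $\vx$ in the former satisfies $\|\vx\| \leq r_s < \|\vx - \bm{\nu}\| = \|\sigma(\vx)\|$. The radial monotonicity of $p_{\vX}$ then gives $p_{\vX}(\vx) \geq p_{\vX}(\sigma(\vx))$ on this set, and a change of variables $\vy = \sigma(\vx)$ delivers $\Pr(\vX \in B_s) \geq \Pr(\vX \in B_s + \bm{\nu})$, which closes the argument.

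The principal technical obstacle will be justifying the Fubini exchange, i.e.\ establishing the finiteness of both $\E{G(\vX)}$ and $\E{G(\vX - \bm{\nu})}$. This is where the integrability hypothesis used throughout Appendix~\ref{ap:prop_aPw} (namely $\E{\ln(1 + \|\vX\|)} < \infty$ for $\alpha < 2$, and $\E{\|\vX\|^2} < \infty$ for $\alpha = 2$) enters: combined with the tail bound $p_{\atilde{\vZ}}(\vx) = \Theta(\|\vx\|^{-(d+\alpha)})$ from Property~\ref{prop:tail} in the stable case, or the Gaussian tail for $\alpha = 2$, one has $G(\vx) = \Theta(\ln(1 + \|\vx\|))$ (respectively $\Theta(\|\vx\|^2)$), so that both expectations are dominated by the assumed moments. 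The same dominated-convergence estimate used in the monotonicity proof of Lemma~\ref{lm:aPw_1} applies verbatim after the harmless shift $\vX \mapsto \vX - \bm{\nu}$. Once this integrability is settled, the rest of the proof reduces to the clean geometric reflection step above.
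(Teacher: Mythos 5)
Your proof is correct, and it takes a genuinely different route from the paper's. The paper fixes a rotation so that all components of $\bm{\nu}$ have one sign, differentiates $-\E{\ln p_{\atilde{\vZ}}(\vX-\bm{\nu})}$ in each $\nu_i$ under the integral sign (justifying the interchange via the boundedness of the score $\frac{\partial_i p_{\atilde{\vZ}}}{p_{\atilde{\vZ}}}$ from Property~\ref{prop:bd}), and then argues the sign of $\E{\,\partial_i p_{\atilde{\vZ}}(\vX-\bm{\nu})/p_{\atilde{\vZ}}(\vX-\bm{\nu})}$ from the oddness of the score together with the symmetric unimodality of $p_{\vX}$ --- a calculus argument whose sign bookkeeping is delicate (indeed the displayed inequality in the paper's proof reads with the sign that would give a maximum at $\bm{\nu}={\bf 0}$ rather than the minimum the lemma asserts). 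Your layer-cake decomposition of $G=-\ln p_{\atilde{\vZ}}$ reduces everything to the single clean fact $\Pr(\vX\in B_s)\geq\Pr(\vX\in B_s+\bm{\nu})$ for origin-centered balls, which you establish by the standard reflection argument (Anderson's inequality for balls): the reflection $\sigma$ across the bisecting hyperplane is a measure-preserving bijection between $B_s\setminus(B_s+\bm{\nu})$ and $(B_s+\bm{\nu})\setminus B_s$ that strictly increases $\|\cdot\|$, so radial monotonicity of $p_{\vX}$ gives the mass comparison. This buys you independence from any smoothness of $p_{\atilde{\vZ}}$ and avoids differentiation under the integral entirely; the only analytic care needed is the Fubini/finiteness step, which you correctly ground in the standing hypotheses of Appendix~\ref{ap:prop_aPw} and the tail estimate $p_{\atilde{\vZ}}(\vx)=\Theta(\|\vx\|^{-(d+\alpha)})$. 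One small polish: you can dispense with finiteness altogether by comparing the two layer-cake integrals termwise --- since $1-\Pr(\vX\in B_s+\bm{\nu})\geq 1-\Pr(\vX\in B_s)$ pointwise in $s$ and both integrands are non-negative, $\E{G(\vX-\bm{\nu})}-G({\bf 0})\geq\E{G(\vX)}-G({\bf 0})$ follows by Tonelli even when both sides are infinite, so the subtraction is never needed.
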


%
%

\begin{proof}
  Since $p_{\vX}(\cdot)$ and $ p_{{\bf \atilde{Z}}}(\cdot)$ are
  rotationally invariant, one can restrict the proof to the case where
  all the $\{\nu_i\}_{1 \leq i \leq d}$'s are non-positive by applying
  an appropriate rotation transformation to the variable of
  integration. Hence, for $\{\nu_i\}_{1 \leq i \leq d}$ non-positive,
  taking the partial derivative of
  \[
  - \E{ \ln p_{\atilde{\vZ}}(\vX - \bm{\nu}) } = -\int_{\Reals^d} p_{\vX}(\vx) \ln p_{{\bf \atilde{Z}}} \left( 
    \vx - \vnu \right)\, d \vx,
  \]
  with respect to $\nu_i$ and interchanging the integral and the
  derivative yields
  \begin{align*}
    \int^{+\infty}_{-\infty} p_{\vX}(\vx) \, \frac{\partial}{\partial z_i}p_{{\bf \atilde{Z}}}
    (\vx - \vnu) \frac{1}{p_{{\bf \atilde{Z}}}(\vx - \vnu)} \,d\vx
    =   \E{ \frac{\partial}{\partial z_i}p_{{\bf \atilde{Z}}} (\vX - \vnu ) \frac{1}{p_{{\bf \atilde{Z}}}(\vX - \vnu ) }} \geq 0,
  \end{align*}
  which is true by virtue of the facts that $p_{\vX}(\vx)$ is
  rotationally symmetric, non-increasing in $\| \vx \|$, that for all
  $1 \leq i \leq d$ the derivative function $\frac{\partial}{\partial
    z_i} p_{{\bf \atilde{Z}}} \frac{1}{p_{{\bf \atilde{Z}}}} (\vx)$ is
  an odd function that is non-positive on $x_i \geq 0$ and that
  $\{\nu_i\}_{1 \leq i \leq d}$ are non-positive. This implies that
  $-\E{ \ln p_{{\bf \atilde{Z}}} ( \vX - \vnu ) }$ is maximum at $\vnu
  = {\bf 0}$.

  The interchange between the derivative and the expectation operator
  is justified by Lebesgue's DCT since the integrand $\left[
    \frac{\partial p_{{\bf \atilde{Z}}}}{\partial z_{i}} (\vx)
    \frac{1}{p_{\bf \atilde{Z}} (\vx)} \right]$ is bounded by
  Property~\ref{prop:bd} in Appendix~\ref{asd}.
\end{proof}

We prove in what follows some properties of the \apower set in
Definition~\ref{powdef}.

\begin{itemize}
\item[(i)] $\aPw{\vX}$ exists, is unique and satisfies property R1,
  i.e. $\aPw{\vX} \geq 0$ with equality if and only if $\vX = {\bf
    0}$.
  Indeed, for a non-zero random vector, using the continuity of
  $g(\text{P})$ and the fact that it is decreasing from $+\infty$ to
  $-\ln p_{{\bf \atilde{Z}}}({\bf 0\/}) < h({\bf \atilde{Z}})$ proven
  in Lemmas~\ref{lm:aPw_1} and~\ref{lm:aPw_2}, there {\em exists\/} a
  {\em positive\/} and {\em unique\/} $\aPw{\vX}$ such that
  equation~(\ref{newpow}) is satisfied which proves property (i).

\item[(ii)] $\aPw{\vX}$ satisfies property R2. In fact, for any
  $a \in \Reals$,
  \begin{equation*}
    \aPw{a \vX} = |a| \, \aPw{\vX}.
  \end{equation*}
  This directly follows from equation~(\ref{newpow}) and the fact that
  $p_{{\bf \atilde{Z}}}(\cdot)$ is rotationally symmetric.

\item[(iii)] Let $\vX$ and $\vY$ be two independent random vectors and
  assume that $\vY$ has a rotationally symmetric PDF that is
  non-increasing in $\|\vy\|$. Let $\vZ = \vX + \vY$, then
  $\aPw{\vZ} \geq \aPw{\vY}$. Indeed,
  \begin{eqnarray}
    -\E{ \ln p_{{\bf \atilde{Z}}} \left( \frac{\vZ}{\aPw{\vY}}\right) } &=&  \Ep{\vX}
    { - \Ep{\vY}{ \ln p_{{\bf \atilde{Z}}} \left( \frac{\vx+\vY}{\aPw{\vY}}\right)
        \Big| \vX } } \nonumber\\
    &\geq& \Ep{\vX}{ -\Ep{\vY}{ \ln p_{{\bf \atilde{Z}}} \left( \frac{\vY}{\aPw{\vY}}\right) \Big|\vX } }\label{minloc}\\
    &=&-\Ep{\vY}{\ln p_{{\bf \atilde{Z}}} \left( \frac{\vY}{\aPw{\vY}} \right) }\nonumber\\
    &=& h({\bf \atilde{Z}}), \label{proppow}
  \end{eqnarray}
  where equation~(\ref{minloc}) is an application of
  Lemma~\ref{lem:loc} because $\vX$ and $\vY$ are independent and $\vY
  / \aPw{\vY}$ is rotationally symmetric. Equation~(\ref{proppow})
  implies that $\aPw{\vZ} \geq \aPw{\vY}$ since the function $ -\E{
    \ln p_{{\bf \atilde{Z}}} \left( \frac{\bf Z}{\text{P}} \right) }$
  is decreasing in $\text{P} \geq 0$.
 
\item[(iv)] Let $\vX$ and $\vY$ be two independent random vectors. If
  $\vY$ has a rotationally symmetric PDF that is non-increasing in $\|
  \vy \|$, then $\aPw{c \vX + \vY}$ is non-decreasing in $|c|$, $c \in
  \Reals$.

  We first show that $-\E{\ln p_{{\bf \atilde{Z}}}\left(\frac{c \vX +
        \vY }{\text{P}}\right)}$ is non-decreasing in $|c|$. To this
  end, we write
  \begin{equation*}
    -\E{\ln p_{{\bf \atilde{Z}}}\left(\frac{c \vX + \vY}{\text{P}}\right)} = \Ep{\vX}
    {-\Ep{\vY}{\ln p_{{\bf \atilde{Z}}} \left( \frac{c \vx + \vY}{\text{P}} \right) \Big| \vX }},
  \end{equation*}
  and it is enough to show that $- \Ep{\vY}{\ln p_{{\bf \atilde{Z}}}
    \left( \frac{c \vx + \vY}{\text{P}} \right)}$ is non-decreasing in
  $|c|$, which we argue as follows:
  \begin{equation}
    -\Ep{\vY}{\ln p_{{\bf \atilde{Z}}} \left( \frac{c \vx + \vY}{\text{P}}\right) } = 
    -\int_{\Reals^d} p_{\vY}(\vy) \ln p_{{\bf \atilde{Z}}}\left(\frac{c \vx + \vy}{\text{P}}\right)\, d \vy.
    \label{need}
  \end{equation}
  Since $p_{\vY}(\cdot)$ and $ p_{{\bf \atilde{Z}}}(\cdot)$ are
  rotationally invariant, one can restrict the proof to the case when
  $c \geq 0$ and the $\{x_i\}_{1 \leq i \leq d}$'s are non-negative by
  applying an appropriate rotation transformation to the variable of
  integration. Hence, for $c$ and $\{x_i\}_{1 \leq i \leq d}$
  non-negative, taking the derivative of equation~(\ref{need}) with
  respect to $c$ and interchanging the limit and the derivative as
  done in (ii) yields
  \begin{align*}
    & -\sum_{i = 1}^{d} \frac{x_i}{\text{P}} \int^{+\infty}_{-\infty} p_Y(y) \frac{\partial}{\partial z_i}p_{{\bf \atilde{Z}}}
    \left(\frac{c \vx + \vy}{\text{P}}\right) \frac{1}{p_{{\bf \atilde{Z}}}\left(\frac{c \vx + \vy}{\text{P}}\right)} \,dy \\
    =  & -\sum_{i = 1}^{d} \frac{x_i}{\text{P}} \, \E{ \frac{\partial}{\partial z_i}p_{{\bf \atilde{Z}}} \left( \frac{c \vx
          + \vy}{\text{P}} \right) \frac{1}{p_{{\bf \atilde{Z}}} \left( \frac{c \vx + \vy}{\text{P}}\right)}} \geq 0,
  \end{align*}
  which is true by virtue of the fact that $p_{\vY}(\vy)$ is
  rotationally symmetric, non-increasing in $\| \vy \|$, that for all
  $1 \leq i \leq d$ the derivative function $\frac{\partial}{\partial
    z_i}p_{{\bf \atilde{Z}}}\frac{1}{p_{{\bf \atilde{Z}}}} (\vx)$ is
  an odd function that is non-positive on $x_i \geq 0$ and that both
  $c$ and $\{x_i\}_{1 \leq i \leq d}$ are non-negative. This implies
  that both $-\Ep{\vY}{ \ln p_{{\bf \atilde{Z}}} \left( \frac{c \vx +
        \vY}{\text{P}} \right) }$ and $-\E{ \ln p_{{\bf \atilde{Z}}}
    \left( \frac{c \vX + \vY}{\text{P}} \right) }$ are non-decreasing
  in $|c|$. The fact that $-\E{ \ln p_{{\bf \atilde{Z}}} \left(
      \frac{c \vX + \vY}{\text{P}} \right) }$ is non-increasing in
  $\text{P}$ and non-decreasing in $|c|$ yields the required result.
  
\item[(v)] Whenever $\vX \sim \bm{\mathcal{S}} \left( \alpha,
    \gamma_{\vX} \right)$, $\aPw{\vX} = \frac{\gamma_{\bf
      X}}{\gamma_{\bf \atilde{Z}}} = (\alpha)^{\frac{1}{\alpha}}
  \gamma_{\vX}$. Indeed, $\vX$ has the same distribution as
  $\frac{\gamma_{\vX}}{\gamma_{\bf \atilde{Z}}} {\bf \atilde{Z}}$
  \begin{equation*}
    -\E{ \ln p_{\bf \atilde{Z}} \left( \frac{\vX}{\aPw{\vX}} \right) } =-\frac{\gamma_{\bf \atilde{Z}}}
    {\gamma_{\vX}} \int p_{\bf \atilde{Z}}\left( \frac{\gamma_{\bf \atilde{Z}}}{\gamma_{\vX}} \vx \right) \ln p_{\bf \atilde{Z}}
    \left(\frac{\vx}{\aPw{\vX}}\right)\, d \vx = h({\bf \atilde{Z}}),
  \end{equation*}
  and 
  therefore $\aPw{\vX} = \frac{\gamma_{\vX}}{\gamma_{\bf
      \atilde{Z}}}$.
\end{itemize}

\section{Sufficient Conditions for the regularity condition}
\label{appsuff}

In his technical report~\cite[sec. 6]{barrontech}, Barron proves that
the de Bruijn's identity for Gaussian perturbations~(\ref{debruijn}
with $\epsilon > 0$) holds for for any RV having a
finite variance. In this appendix, we follow steps similar to Barron's
to prove that condition~(\ref{condthrough}) is satisfied for any
$\left(\vX + \sqrt[\alpha]{\eta}\vN\right)$, $\eta > 0$ for any random
vector $\vX \in \mathcal{L}$ where
\begin{equation*}
  \mathcal{L} = \left\{ \text{random vectors} \,\,{\vU \in \Reals^{d}}: \int \ln\left(1 + \| \vU \|\right)\,
    dF_{\vU}(\vu) \text{ is finite } \right\},
\end{equation*} 
and where $\vN \sim \bm{\mathcal{S}} \left( \alpha, 1 \right)$ is
independent of $\vX$, $0< \alpha <2$.

In what follows, denote $q_{\eta}(\vy) = \E{ p_{\eta}(\vy-\vX) }$
be the PDF of $\vY = \vX + \sqrt[\alpha]{\eta}\vN$ where
$p_{\eta}(\cdot)$ is the density of the sub-Gaussian \SaS vector with
dispersion $\eta$. Note that since $p_{\eta}(\cdot)$ is bounded then
so is $q_{\eta}(\cdot)$ and since $\vX \in \mathcal{L}$ then so is
$\vY$. Then $h(\vY)$ is finite and is defined as
\begin{equation*}
  h(\vY) = -\int q_{\eta}(\vy) \ln q_{\eta}(\vy)\,d\vy.
\end{equation*}
We list and prove next two technical lemmas.

\begin{lemma}[Technical Result]
  \label{lemt1}
  \begin{equation*}
    \frac{d}{d \eta} q_{\eta}(\vy) = \E{ \frac{d}{d \eta} p_{\eta}(\vy-\vX) }.
  \end{equation*}
\end{lemma}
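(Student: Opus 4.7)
The plan is to establish this interchange of differentiation and expectation by a standard Leibniz/DCT argument. The statement is equivalent to saying that one can differentiate under the integral sign in the representation $q_{\eta}(\vy) = \int p_{\eta}(\vy - \vx)\, dF_{\vX}(\vx)$. To make this rigorous, I will first form the difference quotient
\begin{equation*}
  \frac{q_{\eta + h}(\vy) - q_{\eta}(\vy)}{h} = \E{\frac{p_{\eta + h}(\vy - \vX) - p_{\eta}(\vy - \vX)}{h}},
\end{equation*}
for $h$ small enough that $\eta + h > 0$, and then pass $h \to 0$ inside the expectation. By the Mean Value Theorem applied to the smooth map $\eta' \mapsto p_{\eta'}(\vy - \vx)$, the integrand on the right equals $\frac{d}{d\eta'} p_{\eta'}(\vy - \vx)\big|_{\eta' = \xi}$ for some $\xi = \xi(h, \vx) \in (\eta, \eta + h)$, so the problem reduces to exhibiting a dominating function that is integrable with respect to the distribution of $\vX$.

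The key computation is to control $\frac{d}{d\eta'} p_{\eta'}(\vz)$ uniformly in $\eta'$ (on a compact neighborhood of $\eta$ bounded away from $0$) and in $\vz$. Using the fact that the characteristic function of $\sqrt[\alpha]{\eta'}\vN$ is $e^{-\eta' \|\vomega\|^{\alpha}}$, Fourier inversion yields
\begin{equation*}
  \frac{d}{d\eta'} p_{\eta'}(\vz) = -\frac{1}{(2\pi)^{d}} \int_{\Reals^{d}} \|\vomega\|^{\alpha}\, e^{-\eta' \|\vomega\|^{\alpha}}\, e^{i \vomega \cdot \vz}\, d\vomega,
\end{equation*}
which is valid since the integrand is absolutely integrable in $\vomega$ for any $\eta' > 0$. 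This gives the crude but sufficient bound
\begin{equation*}
  \left| \frac{d}{d\eta'} p_{\eta'}(\vz) \right| \leq \frac{1}{(2\pi)^{d}} \int_{\Reals^{d}} \|\vomega\|^{\alpha}\, e^{-\eta' \|\vomega\|^{\alpha}}\, d\vomega \eqdef C(\eta'),
\end{equation*}
and $C(\eta')$ is continuous and finite on $(0, \infty)$, hence bounded by some constant $M$ on any compact interval $[\eta - \delta, \eta + \delta] \subset (0, \infty)$. The upshot is that for all $|h| \leq \delta$ and all $\vx$, the difference-quotient integrand is bounded pointwise by the constant $M$, which is trivially integrable with respect to $F_{\vX}$.

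With this uniform dominating constant in hand, Lebesgue's Dominated Convergence Theorem justifies exchanging the limit and the expectation, yielding
\begin{equation*}
  \frac{d}{d\eta} q_{\eta}(\vy) = \lim_{h \to 0} \E{\frac{p_{\eta + h}(\vy - \vX) - p_{\eta}(\vy - \vX)}{h}} = \E{\frac{d}{d\eta} p_{\eta}(\vy - \vX)},
\end{equation*}
which is the claimed identity. The main (and only nontrivial) obstacle is producing the uniform bound on $\frac{d}{d\eta'} p_{\eta'}(\vz)$, and the Fourier-inversion argument sidesteps it cleanly because the derivative in $\eta$ of the characteristic function remains an $\text{L}^1$ function of $\vomega$ for every $\eta > 0$; no pointwise asymptotics on the stable density or its tails, and no hypothesis on $\vX$ beyond being a bona fide random vector, are needed for this particular step.
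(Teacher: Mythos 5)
Your proof is correct, and it takes a genuinely different route from the paper's. The paper works entirely on the real-space side: it writes $p_{\eta}(\vt) = \eta^{-d/\alpha} p_{\vN}\bigl(\vt/\sqrt[\alpha]{\eta}\bigr)$, differentiates this scaling form explicitly, and then dominates the two resulting terms using the tail asymptotics of the stable density and of its gradient (Property~\ref{prop:tail} and the estimate $\sum_j |t_j|^{-1}|\partial p_{\vN}/\partial t_j| = \Theta(\|\vt\|^{-d-\alpha-2})$), producing a dominating function $s_b(\vt)$ that is not merely bounded but Lebesgue-integrable in $\vt$ and decays like $\|\vt\|^{-(d+\alpha)}$. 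Your Fourier-side argument sidesteps all of that: since the characteristic function of $\sqrt[\alpha]{\eta}\vN$ is $e^{-\eta\|\vomega\|^{\alpha}}$, the $\eta$-derivative of the density is dominated pointwise by the constant $C(\eta') = (2\pi)^{-d}\int \|\vomega\|^{\alpha} e^{-\eta'\|\vomega\|^{\alpha}}\,d\vomega$, which is locally bounded away from $\eta'=0$, and a constant is trivially integrable against the probability measure $F_{\vX}$ --- which is the only measure appearing in the statement of the lemma. (Your justification of differentiating the Fourier integral in $\eta'$ is stated a little loosely; the precise condition is that $\|\vomega\|^{\alpha} e^{-\eta'\|\vomega\|^{\alpha}} \leq \|\vomega\|^{\alpha} e^{-(\eta-\delta)\|\vomega\|^{\alpha}}$ uniformly on a compact $\eta'$-interval, but this is immediate.) What the paper's heavier machinery buys is precisely the decay of $s_b$ at infinity: that Lebesgue-integrable bound is reused in Lemma~\ref{lemt2} to dominate $|\tfrac{d}{d\eta}q_{\eta}(\vy)\ln q_{\eta}(\vy)|$ by an integrable function of $\vy$, which your uniform constant cannot do. So your argument is a cleaner and fully adequate proof of Lemma~\ref{lemt1} in isolation, but the real-space estimates are not dispensable for the subsequent lemma.
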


\begin{proof}
  The interchange between differentiation and integration holds
  whenever $|\frac{d}{d \eta} p_{\eta}(\vt)|$ is bounded uniformly
  by an integrable function in a neighbourhood of $\eta$ by virtue of
  the mean value theorem and the Lebesgue DCT.  To prove boundedness,
  we start by evaluating the derivative. Since
  \begin{equation*}
    p_{\eta} (\vt) = \frac{1}{\eta^{\frac{d}{\alpha}}}p_\vN\left(\frac{\vt}{\sqrt[\alpha]{\eta}}\right), 
  \end{equation*}
  then
  \begin{equation*}
    \frac{d}{d \eta} p_{\eta}(\vt) = -\frac{d}{\alpha}\frac{1}{\eta^{1+\frac{d}{\alpha}}}p_\vN\left(\frac{\vt}
      {\sqrt[\alpha]{\eta}}\right) -\frac{1}{\alpha}\frac{1}{\eta^{1+\frac{1+ d}{\alpha}}}\sum_{j =1}^{d} t_j 
    \left(\frac{\partial p_{\vN}}{\partial n_{j}}\right)_{\frac{\vt}{\sqrt[\alpha]{\eta}}},
  \end{equation*}
  which gives 
  \begin{equation}
    \left|\frac{d p_{\eta}}{d \eta} (\vt)\right| \leq \frac{d}{\alpha}\frac{1}{\eta^{1+\frac{d}{\alpha}}}p_\vN
    \left(\frac{\vt}{\sqrt[\alpha]{\eta}}\right) + \frac{1}{\alpha}\frac{1}{\eta^{1+\frac{1+d}{\alpha}}}\sum_{j =1}^{d} 
    \left|t_j\right| \left|\frac{\partial p_{\vN}}{\partial n_{j}}\right|_{\frac{\vt}{\sqrt[\alpha]{\eta}}}.  \label{deriv}
  \end{equation}
  For the purpose of finding the uniform bound on the derivative, let
  $b$ as a positive number such that $b < \eta$. Concerning the first
  term of the bound in~(\ref{deriv}), we consider two separate ranges
  of the variable $r = \| \vt \|$ to find the uniform upper
  bound. On compact sets, we have
  \begin{equation}
    \label{firstcom}
    \frac{d}{\alpha}\frac{1}{\eta^{1+\frac{d}{\alpha}}}p_\vN\left(\frac{\vt}{\sqrt[\alpha]{\eta}}\right) \leq \frac{d}{\alpha} 
    \frac{1}{b^{1+\frac{d}{\alpha}}}\max_{\vu \in \Reals^{d}} p_\vN(\vu),
  \end{equation}
  where the maximum exists since $p_{\vN}$ is a continuous PDF and
  thus upper bounded. As for large values of $\| \vt \|$, by virtue of
  equation~(\ref{eq:tailvec}) there exists some $k > 0$ such that
  $p_{\vN}(\vt) \leq k \frac{1}{\| \vt \|^{d+\alpha}}$ which gives
  \begin{equation}
    \label{firsth}
    \frac{d}{\alpha}\frac{1}{\eta^{1+\frac{d}{\alpha}}}p_\vN\left(\frac{\vt}{\sqrt[\alpha]{\eta}}\right) \leq k \frac{d}{\alpha}
    \frac{1}{\| \vt \|^{d +\alpha}},
  \end{equation}  
  an integrable upper bound function independent of
  $\eta$. Equations~(\ref{firstcom}) and~(\ref{firsth}) insures that
  the first term of the right-hand side (RHS) of
  equation~(\ref{deriv}) is uniformly upper bounded by an integrable
  function. When it comes to the second term of the RHS
  of~(\ref{deriv}), we formally have:
  \begin{equation*}
    \frac{\partial p_{\vN}}{\partial u_{j}}\left(\vu\right) = \frac{-i}{(2 \pi)^d} \int \omega _{j} \phi_{\vN}(\vomega) 
    e^{-i \sum _{l =1}^{d}\omega_{l} u_{l}} \,d\vomega, \quad \quad 1 \leq j \leq d
  \end{equation*}
  and
  \begin{equation}
    \label{uppcons}
    \left|\frac{\partial p_{\vN}}{\partial u_{j}}\left(\vu\right)\right| \leq \frac{1}{(2 \pi)^d} \int_{\Reals^d} 
    |\omega _{j}| e^{-\|\vomega\|^\alpha} \,d\vomega = \xi_j, \quad \quad \,\,\,\,\,\,\, 1 \leq j \leq d
  \end{equation}
  which is finite and where we used the fact that the characteristic
  function of $\bm {\mathcal{S}}(\alpha;1)$ is $\phi_{\vN}(\vomega) =
  e^{- \|\vomega\|^{\alpha}}$.  Hence, on compact sets,
  equation~(\ref{uppcons}) gives a uniform integrable upper bound on
  the second term of the RHS of equation~(\ref{deriv}) of the form
  \begin{equation}
    \label{secondcom}
    \frac{1}{\alpha}\frac{1}{\eta^{1+\frac{1+d}{\alpha}}}\sum_{j =1}^{d} \left|t_j\right| \left|\frac{\partial p_{\vN}}
      {\partial n_{j}}\right|_{\frac{\vt}{\sqrt[\alpha]{\eta}}} \leq \frac{1}{\alpha}\frac{1}{b^{1+\frac{1+d}{\alpha}}}\sum_{j = 1}^d |t_j| \xi_j,
  \end{equation}
  which is integrable and independent of $\eta$. Therefore, we only
  consider next the second term of the RHS of equation~(\ref{deriv})
  at large values of $\|t\|$. To this end, we make use of
  equation~(\ref{eq:lim}) proven in Appendix~\ref{asd} where it has
  been shown that $- \sum_{j =1}^{d} \frac{1}{t_j} \frac{ \partial
    p_{\vN}}{\partial t_j} (\vt) = \sum_{j =1}^{d} \frac{1}{|t_j|}
  \left| \frac{\partial p_{\vN}}{\partial t_j} \right|_{\vt} = \Theta
  \left( \frac{1}{\|t\|^{d + \alpha +2}} \right)$ and we write for
  some $\kappa > 0$
  \begin{equation}
    \label{secondh}
    \frac{1}{\alpha}\frac{1}{\eta^{1+\frac{1+d}{\alpha}}}\sum_{j =1}^{d} \left|t_j\right| \left|\frac{\partial p_{\vN}}{\partial n_{j}}
    \right|_{\frac{\vt}{\sqrt[\alpha]{\eta}}} \leq \frac{1}{\alpha}\frac{\| \vt \|^2}{\eta^{1+\frac{2+d}{\alpha}}} \sum_{j =1}^{d} 
    \frac{\sqrt[\alpha]{\eta}}{\left|t_j\right|} \left|\frac{\partial p_{\vN}}{\partial n_{j}}\right|_{\frac{\vt}{\sqrt[\alpha]{\eta}}}
    \leq \frac{1}{\alpha} \frac{\kappa}{\| \vt \|^{d+\alpha}},
  \end{equation}
  which is uniformly bounded at large values of $\|t\|$ by an
  integrable function. Equations~(\ref{secondcom}) and~(\ref{secondh})
  imply that the second term in the RHS of equation~(\ref{deriv}) is
  uniformly upper bounded by an integrable function. This proves
  Lemma~\ref{lemt1}.
\end{proof}

\begin{lemma}[Existence of the Generalized Fisher Information]
  \label{lemt2}

  The derivative
  \begin{equation*}
    \frac{d}{d \eta} h(\vX + \sqrt[\alpha]{\eta}\vN) =  -\int \frac{d}{d \eta} \left(q_{\eta}(\vy)\right)\,\ln q_{\eta}(\vy)\,d\vy
  \end{equation*}
  exists and is finite. Also,
  \begin{equation*}
    J_{\alpha}(\vX + \sqrt[\alpha]{\eta}\vN) = -\int \frac{d}{d \eta} \left(q_{\eta}(\vy)\right)\,\ln q_{\eta}(\vy)\,d\vy.
  \end{equation*}
\end{lemma}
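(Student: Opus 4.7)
The plan is to justify differentiating under the integral sign in
\[
h(\vY_\eta) = -\int q_\eta(\vy)\ln q_\eta(\vy)\,d\vy,
\]
and then identify the resulting expression with the integral representation of $J_\alpha(\vY_\eta)$ supplied by Lemma~\ref{lem2}. Formally, the $\eta$-derivative of the integrand is $(1+\ln q_\eta(\vy))\,\frac{d q_\eta}{d\eta}(\vy)$; by the mean value theorem and Lebesgue's DCT, it suffices to produce, on some neighborhood $[\eta_0-b,\eta_0+b] \subset (0,\infty)$ of an arbitrary $\eta_0$, a single integrable function of $\vy$ that dominates $\bigl|(1+\ln q_\eta(\vy))\,\frac{d q_\eta}{d\eta}(\vy)\bigr|$ for every $\eta$ in that neighborhood. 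Once this is in hand, $\frac{d}{d\eta}h(\vY_\eta) = -\int (1+\ln q_\eta)\frac{d q_\eta}{d\eta}\,d\vy$, and the additive ``$1\cdot\frac{d q_\eta}{d\eta}$'' piece drops out because $\int q_\eta\,d\vy = 1$ implies $\int \frac{d q_\eta}{d\eta}\,d\vy = 0$ (which follows from the very same domination applied to $\bigl|\frac{d q_\eta}{d\eta}\bigr|$).

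First I would recycle the bound obtained in the proof of Lemma~\ref{lemt1}: there $\bigl|\frac{d p_\eta}{d\eta}(\vt)\bigr|$ was majorized, on compact $\eta$-neighborhoods, by an integrable function $f(\vt)$ with $f(\vt) = O(\|\vt\|^{-d-\alpha})$ at infinity. Fubini then gives $\bigl|\frac{d q_\eta}{d\eta}(\vy)\bigr| \leq \E{f(\vy-\vX)}$, and tightness of $\vX$ (a consequence of $\vX \in \mathcal{L}$) transfers the $O(\|\vy\|^{-d-\alpha})$ tail to this expectation. I next need to control $\ln q_\eta(\vy)$. The upper direction is free: $q_\eta(\vy) \leq \|p_\eta\|_\infty = \eta^{-d/\alpha}p_\vN(\bm{0})$ yields a uniform lower bound on $\ln q_\eta$.

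The main obstacle is the uniform lower bound on $q_\eta(\vy)$ at large $\|\vy\|$. For this I would truncate: fix $R$ large enough that $\Pr(\|\vX\| \leq R) > 0$; then
\[
q_\eta(\vy) \;\geq\; \Pr(\|\vX\|\leq R)\cdot\inf_{\|\vz\|\leq R} p_\eta(\vy-\vz) \;\geq\; \frac{c(\eta_0,b,R)}{(1+\|\vy\|)^{d+\alpha}},
\]
where the final bound invokes the isotropic heavy-tail estimate $p_\eta(\vu) = \Theta(\eta\,\|\vu\|^{-d-\alpha})$ from Property~\ref{prop:tail}, valid uniformly for $\eta \in [\eta_0-b,\eta_0+b]$ once $\|\vy\|$ exceeds a threshold. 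Consequently $|\ln q_\eta(\vy)| \leq C_1 + (d+\alpha)\ln(1+\|\vy\|)$ uniformly in $\eta$, and pairing this with the $O(\|\vy\|^{-d-\alpha})$ tail of $\bigl|\frac{dq_\eta}{d\eta}\bigr|$ produces an integrable majorant at infinity; near $\vy=\bm{0}$ continuity and positivity of $q_\eta$ make $\ln q_\eta$ locally bounded, and the majorant derived in Lemma~\ref{lemt1} handles $\frac{dq_\eta}{d\eta}$ on compact sets. This completes the DCT justification and establishes existence and finiteness of the derivative.

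Finally, to identify the value with $J_\alpha(\vY_\eta)$: differentiating $\phi_{\vY_\eta}(\vomega) = \phi_\vX(\vomega)\,e^{-\eta\|\vomega\|^\alpha}$ in $\eta$ gives $-\|\vomega\|^\alpha \phi_{\vY_\eta}(\vomega)$, so inverting the distributional Fourier transform exactly as in the proof of Lemma~\ref{lem2} yields $\frac{dq_\eta}{d\eta}(\vy) = -\mathcal{F}_{\mathcal{I}}\bigl[\|\vomega\|^\alpha \phi_{\vY_\eta}(-\vomega)\bigr](\vy)$. Substituting this into $-\int \ln q_\eta(\vy)\,\frac{dq_\eta}{d\eta}(\vy)\,d\vy$ reproduces the integral in equation~(\ref{anadef}) evaluated at $\vY_\eta$, which by Lemma~\ref{lem2} equals $J_\alpha(\vY_\eta)$ (the scale parameter of $\sqrt[\alpha]{\eta}\vN$ seen from $\vN \sim \bm{\mathcal{S}}(\alpha,1)$ gives $\gamma^\alpha = 1$, consistent with the lemma statement). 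This closes the proof.
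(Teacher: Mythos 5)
Your overall architecture mirrors the paper's: reuse the uniform integrable majorant of $\bigl|\frac{d p_{\eta}}{d\eta}\bigr|$ from Lemma~\ref{lemt1}, push it through the expectation to dominate $\bigl|\frac{d q_{\eta}}{d\eta}\bigr|$, lower-bound $q_{\eta}$ at infinity by truncating $\vX$ and invoking the heavy tail of $p_{\eta}$ so that $|\ln q_{\eta}(\vy)| \leq C_1+(d+\alpha)\ln(1+\|\vy\|)$, and finally identify the derivative with $J_{\alpha}$ through the Fourier representation of Lemma~\ref{lem2}. There is, however, a genuine gap in the step where you claim that ``tightness of $\vX$'' transfers the $O(\|\vy\|^{-d-\alpha})$ tail of the majorant $f$ to the expectation $\E{f(\vy-\vX)}$. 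This is false in general: splitting the integral over $\{\|\vx\|\le\|\vy\|/2\}$ and its complement shows that $\E{f(\vy-\vX)}$ picks up a contribution of order $\Pr(\|\vX\|>\|\vy\|/2)$, and a vector $\vX\in\mathcal{L}$ may have tails decaying slower than any polynomial (e.g.\ $\Pr(\|\vX\|>t)\sim(\ln t)^{-2}$ still has a finite logarithmic moment). Convolution with such a law destroys the $\|\vy\|^{-d-\alpha}$ decay, so the dominating function you propose at infinity, $\ln(1+\|\vy\|)\cdot O(\|\vy\|^{-d-\alpha})$, is not actually established. A symptom of the problem is that your domination argument never invokes the hypothesis $\E{\ln(1+\|\vX\|)}<\infty$, which must enter somewhere.

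The repair is exactly what the paper does: make no pointwise tail claim for $r_b(\vy)=\E{s_b(\vy-\vX)}$ at all, and instead prove directly that $\int\ln(1+\|\vy\|)\,r_b(\vy)\,d\vy<\infty$ by Fubini together with the elementary inequality $\ln(1+\|\vy\|)\le\ln(1+\|\vx\|)+\ln(1+\|\vy-\vx\|)$, which splits the integral into $S_b\,\E{\ln(1+\|\vX\|)}+L_b$ with $S_b=\int s_b(\vy)\,d\vy$ and $L_b=\int\ln(1+\|\vy\|)s_b(\vy)\,d\vy$, both finite because $s_b$ is bounded with an $O(\|\vy\|^{-d-\alpha})$ tail; this is precisely where $\vX\in\mathcal{L}$ is used. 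The remaining pieces of your argument --- the lower bound $q_{\eta}(\vy)\ge c\,(1+\|\vy\|)^{-d-\alpha}$ via truncation, the uniform-in-$\eta$ control on compact sets, the vanishing of the $\int \frac{d q_{\eta}}{d\eta}\,d\vy$ term, and the Fourier identification with~(\ref{anadef}) --- are sound and match the paper; note only the minor slip that an upper bound on $q_{\eta}$ yields an upper, not a lower, bound on $\ln q_{\eta}$.
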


\begin{proof}
  \begin{align}
    \frac{d}{d \eta} h(\vY) &= -\int \frac{d}{d \eta} \left(q_{\eta}(\vy)\,\ln q_{\eta}(\vy)\right)\,d\vy \label{inter11111}\\
    &= -\int \frac{d q_{\eta}}{d \eta}(\vy)\,\ln q_{\eta}(\vy)\,d\vy -\int \frac{d q_{\eta}}{d \eta}(\vy) \,d\vy \nonumber\\
    &= -\int \frac{d q_{\eta}}{d \eta}(\vy) \,\ln q_{\eta}(\vy)\,d\vy - \frac{d}{d \eta}\int q_{\eta}(\vy)\,d\vy \label{inter22222}\\
    &= -\int \frac{d q_{\eta}}{d \eta}(\vy) \,\ln q_{\eta}(\vy)\,d\vy. \label{final111}
  \end{align}
  Equation~(\ref{final111}) is true since $q_\eta(\vy)$ is a PDF and
  integrates to $1$. Next, we justify equation~(\ref{inter22222}):
  note that by Lemma~\ref{lemt1},
  \begin{eqnarray*}
    \left|\frac{d q_{\eta}}{d \eta} (\vy)\right| &=& \left| \E{ \frac{d p_{\eta}}{d \eta} (\vy-\vX) } \right|\\
    &\leq& \E{ \left| \frac{d p_{\eta}}{d \eta} (\vy-\vX) \right| },
  \end{eqnarray*}
  because the absolute value function is convex. Now it has been shown
  in the proof of Lemma~\ref{lemt1} that $\left|\frac{d p_{\eta}}{d
      \eta}(\vt)\right|$ is uniformly upper bounded in a
  neighbourhood of $\eta$ by an integrable function $s_b(\vt)$ of
  the form
  \begin{equation}
    \label{bydeff}
    s_b(\vt) = \left\{ \begin{array}{ll}
        \displaystyle  A(b) + B(b)\|\vt\| & \| \vt \| \leq r_0\\ 
        \displaystyle  \frac{C}{\| \vt \|^{d + \alpha}}& \| \vt \| \geq r_0,
      \end{array} \right.
  \end{equation} 
  where $A(b)$, $B(b)$, $C$ and $t_0$ are some positive values chosen
  in accordance with equations~(\ref{deriv}), (\ref{firstcom}),
  (\ref{firsth}), (\ref{secondcom}) and~(\ref{secondh}). Then
  \begin{equation*}
    \E{ \left| \frac{d p_{\eta}}{d \eta} (\vy-\vX)\right|} \leq \E{ \left| s_b(\vy-\vX) \right|} = r_b(\vy),
  \end{equation*}
  which is integrable by Fubini's theorem since $s_b(\vt)$ is
  bounded. This completes the justification of
  equation~(\ref{inter22222}).

  As for equation~(\ref{inter11111}), finding a uniform integrable
  upper bound to $\frac{d}{d \eta} \left(q_{\eta}(\vy) \, \ln
    q_{\eta}(\vy) \right)$ is achieved by finding one to $\frac{d
    q_{\eta}(\vy)}{d \eta} \, \ln q_{\eta}(\vy)$ which we show
  next. {Since $q_{\eta}(\vy)$ is continuous and positive, then it
    achieves a positive minimum on compact subsets of
    $\Reals^{d}$. Let $\vy_{0}$ be such that 
    \begin{align*}
      & \min_{\|\vy\| \leq \|\vy_{0}\|} q_{\eta}(\vy) \leq 1 \\
      \& \qquad & \max_{\|\vy\| \leq \|\vy_{0}\|} \left| \ln
        q_{\eta}(\vy) \right| \leq \left| \ln \min_{ \|\vy\| \leq
          \|\vy_{0}\|} q_{\eta}(\vy) \right|,
    \end{align*}
    then on $\|\vy\| \leq \|\vy_{0}\|$ we have
    \begin{align}
      \left|\frac{d  q_{\eta}(\vy)}{d \eta}\,\ln q_{\eta}(\vy)\right| &\leq \max_{\vy \in  \Reals^{d}} r_b(\vy) \left|
        \ln\min_{\|\vy\| \leq \|\vy_{0}\|}q_{\eta}(\vy)\right| \nonumber\\
      &\leq  \max_{\vy \in  \Reals^{d}} s_b(\vy) \left| \ln \min_{\|\vy\| \leq \|\vy_{0}\|}p_{\eta}(\vy)\right| \label{convrel}\\
      &\leq  \max_{\vy \in  \Reals^{d}} s_b(\vy)  \left|\ln\frac{1}{(2b)^{\frac{d}{\alpha}}} \, p_\vN\left(\frac{\vy_0}{\sqrt[\alpha]{b}}
        \right)\right| < \infty, \nonumber
    \end{align}
    which is independent of $\eta$.
    Equation~(\ref{convrel}) is justified by the fact that
    \begin{equation*}
      \min_{\|\vy\| \leq \|\vy_0\|} p_{\eta} (\vy) \leq \min_{\|\vy\|
        \leq \|\vy_0\|} q_{\eta} (\vy) \leq 1,
    \end{equation*}
    because $q_{\eta}(\vy) = \E{ p_{\eta}( \vy - \vX )}$. When it
    comes to large values of $\|\vy\|$, we have by the results of
    Property~\ref{prop:tail} in Appendix~\ref{asd} that $p_{\vN}({\bf
      t}) = \Theta \left( \frac{1}{\| \vt \|^{d+\alpha}} \right)$,
    and hence there exist positive $T$ and $K$ such that $p_{\vN}
    (\vt)$ is greater than $K \, \frac{1}{\| \vt \|^{d+\alpha}}$
    for some $K$ whenever $\| \vt \| \geq T$. Define $\tilde{\vy}$
    such that $\text{Pr}(\|\vX\| \leq \|\tilde{\vy}\|) \geq
    \frac{1}{2}$ and choose $\|\vy_0\|$ to be large enough. Then, if
    $b < \eta < 2b$, we have for $\|\vy\| \geq \|\vy_0\|$
    \begin{align*}
      q_{\eta}(\vy) = \, &  \frac{1}{\eta^{\frac{d}{\alpha}}} \int p_{\vN} \left(\frac{\vy- \vu}{\sqrt[\alpha]{\eta}}\right) 
      \,dF_\vX(\vu)\\
      \geq \, &  \frac{1}{\eta^{\frac{d}{\alpha}}} \int\limits_{\| \vu \| \leq \|\tilde{\vy}\|} p_{\vN} \left(\frac{\vy-\vu}
        {\sqrt[\alpha]{\eta}}\right) \, dF_\vX(\vu)  \\
      \geq \; &  \frac{1}{2 (2b)^{\frac{d}{\alpha}}} \, p_{N} \left(\frac{\|\vy\|+\|\tilde{\vy}\|}{\sqrt[\alpha]{b}}\right) \\
      \geq \; & \frac{b K}{2^{1+\frac{d}{\alpha}}} \frac{1} {\left(\|\vy\|+\|\tilde{\vy}\|\right)^{d + \alpha}} \\
      \geq \;&\frac{b \tilde{K}}{\|\vy\|^{d + \alpha}},
    \end{align*}
    where $\tilde{K}$ is some positive constant.
    At large values of $\|\vy\|$, $q_{\eta}(\vy) \leq 1$ and hence
    $|\ln q_{\eta}(\vy)| \leq \ln \left(\frac{\|\vy\|^{d + \alpha}}{b
        \tilde{K}}\right)$ and we obtain for $\|\vy\| > \|\vy_0\|$}
  \begin{equation*}
    \left|\frac{d  q_{\eta}(\vy)}{d \eta}\,\ln q_{\eta}(\vy)\right| \leq  r_b(\vy) \left(\ln \frac{\|\vy\|^{d + \alpha}}
      {b \tilde{K}}\right), 
  \end{equation*}
  which is a uniform integrable upper bound because
  \begin{align}
    \int \ln\left(1+\|\vy\|\right)r_b(\vy) \,d\vy 
    & = \iint \ln\left(1+\|\vy\|\right) s_b(\vy-\vx)\,dF_\vX(\vx)\,d\vy \nonumber\\
    & = \iint \ln\left(1+\|\vy\|\right) s_b(\vy-\vx)\,d\vy\,dF_\vX(\vx) \label{finter}\\
    & \leq \iint \left(\ln(1+\|\vx\|) + \ln(1+\|\vy\|)\right) s_b(\vy)\,d\vy\,dF_\vX(\vx)\nonumber\\
    & = S_b  \int \ln(1+\|\vx\|) dF_\vX(\vx) + L_b \nonumber\\
    & < \infty \label{ff}, 
  \end{align} 
  where 
  \begin{equation*}
    S_b = \int s_b(\vy)\,d\vy < \infty \qquad \& \qquad
    L_b = \int \ln(1+\|\vy\|)s_b(\vy)\,d\vy < \infty.
  \end{equation*}

  Note that $S_b$ and $L_b$ are finite
  by~(\ref{bydeff}). Equation~(\ref{finter}) is due to Fubini and
  equation~(\ref{ff}) is justified by the fact that $\vX \in
  \mathcal{L}$. In conclusion, equation~(\ref{inter11111}) is true and
  Lemma~\ref{lemt2} is proved.
\end{proof}

\bibliographystyle{IEEEtran}
\bibliography{paper}

\end{document}